%% file: hmotifs.tex
\documentclass{article}
\usepackage{graphicx} 
\usepackage{subcaption}
\usepackage{authblk}
\usepackage{amsmath}
\usepackage{amsfonts}
\usepackage{todonotes}
\usepackage{amssymb}
\usepackage{mathtools}
\usepackage[]{mdframed}
\usepackage{algorithm}
\usepackage{algpseudocode}
\usepackage{tikz}
\usetikzlibrary{intersections}
\usetikzlibrary{patterns}
\usepackage{xcolor}
\usepackage[english]{babel}

\usepackage{amsthm}

\theoremstyle{plain}
\newtheorem{theorem}{Theorem}[section]
\newtheorem{lemma}[theorem]{Lemma}
\newtheorem{corollary}[theorem]{Corollary}
\newtheorem{obs}[theorem]{Observation}
\newtheorem{conj}[theorem]{Conjecture}

\theoremstyle{definition}
\newtheorem{definition}[theorem]{Definition}
\newtheorem{remark}[theorem]{Remark}

\usepackage{url}
\usepackage[margin=1in]{geometry}

\newcommand{\motif}[2]{{\mathcal{M}(#1\to #2)}} 
\newcommand{\nel}{north east lines}
\renewcommand{\hom}[2]{\mathsf{Hom}(#1 \to #2)}
\newcommand{\homs}[2]{\#\mathsf{Hom}(#1 \to #2)}
\newcommand{\cphom}[3]{\mathsf{Hom}_\mathsf{cp}(#1 \to_{#2} #3)}
\newcommand{\cphoms}[3]{\#\mathsf{Hom}_\mathsf{cp}(#1 \to_{#2} #3)}
\newcommand{\emb}[2]{\mathsf{Emb}(#1 \to #2)}
\newcommand{\embs}[2]{\#\mathsf{Emb}(#1 \to #2)}
\newcommand{\cpemb}[3]{\mathsf{Emb}_\mathsf{cp}(#1 \to_{#2} #3)}
\newcommand{\cpembs}[3]{\#\mathsf{Emb}_\mathsf{cp}(#1 \to_{#2} #3)}
\newcommand{\sub}[2]{\mathsf{Sub}(#1 \to #2)}
\newcommand{\subs}[2]{\#\mathsf{Sub}(#1 \to #2)}

\newcommand{\motifs}[2]{\#\mathcal{M}(#1 \to #2)}
\newcommand{\colmotifs}[3]{\#\mathcal{M}_\mathsf{col}(#1 \to_{#2} #3)}
\newcommand{\colmotif}[3]{\mathcal{M}_\mathsf{col}(#1 \to_{#2} #3)}
\newcommand{\motifprob}[1]{\#\textsc{HyperMotif}(#1)}
\newcommand{\colmotifprob}[1]{\#\textsc{ColHyperMotif}(#1)}
\newcommand{\aut}[1]{\mathsf{Aut}(#1)}
\newcommand{\auts}[1]{\#\mathsf{Aut}(#1)}
\newcommand{\vecrho}{\vec{\rho}}
\newcommand{\vecsigma}{\vec{\sigma}}

\newcommand{\cycle}{\Delta}
\newcommand{\hyperclique}{\Gamma}
\newcommand{\coeff}{\mathsf{coeff}}

\newcommand{\fptlinred}{\leq^\mathsf{FPT}_\mathsf{LIN}}

\title{The Fine-Grained Complexity of Counting Hypergraph Motifs} 
\author[1]{Madhumitha Krishnakumar}
\author[1]{Marc Roth}
\affil[1]{School of Electronic Engineering \& Computer Science, Queen Mary University of London}

\begin{document}

\maketitle

\begin{abstract}
Introduced by Lee, Ko, and Shin (VLDB 2020), a hypergraph motif is a connected subhypergraph consisting of three hyperedges whose intersections satisfy a prescribed pattern. Such patterns are represented by Venn diagrams $\mathcal V\in\{0,1\}^7$, indicating which of the seven regions determined by three sets must be empty or non-empty. Lee et al.\ designed and implemented exact and approximate algorithms for counting, in a hypergraph $G$, the motifs specified by $\mathcal{V}$; their algorithms run in worst-case cubic time in the number of hyperedges of $G$. This cubic worst case can occur even for hypergraphs of bounded rank, and already for $2$-uniform hypergraphs, that is, for simple graphs.

In this work, we give a complete fine-grained picture of the parameterised complexity of exact hypergraph motif counting with respect to the rank of the input hypergraph. We use $\tilde{O}$ to hide polylogarithmic factors in the input size. First, we show that every Venn diagram $\mathcal{V}$ admits an exact counting algorithm running in FPT-near-quadratic time,
\[
f(\mathsf{rank}(G))\cdot \tilde{O}(|E(G)|^2),
\]
for some computable function $f$. Second, we precisely characterise when this can be improved to FPT-near-linear time. We prove that such an algorithm exists exactly for the \emph{degenerate} Venn diagrams, namely those that force one of the three hyperedges to be fully contained in another. For all non-degenerate Venn diagrams, we show that no FPT-near-linear-time algorithm exists unless either the Triangle Hypothesis or the Hyperclique Hypothesis fails.
Exact hypergraph motif counting is thus always fixed-parameter near-quadratic in the rank, and the degenerate Venn diagrams are precisely the cases admitting fixed-parameter near-linear time. In particular, for bounded-rank hypergraphs, where $f(\mathsf{rank}(G))=O(1)$, our results give near-quadratic upper bounds in general and a tight classification of the near-linear-time cases.

We also study \emph{generalised} hypergraph motifs defined by emptiness constraints on all intersections of $k$ hyperedges. We establish new parameterised upper and lower bounds when parameterised by $k+\mathsf{rank}(G)$. We further argue that obtaining matching bounds is closely tied to the hypergraph homomorphism counting problem, whose complexity remains open for unbounded-rank hypergraphs.

Our results build on the recently established hypergraph homomorphism basis of Bressan, Brinkmann, Dell, Roth, and Wellnitz (SODA 2026). The main technical challenge is the proof of the fine-grained lower bound which requires us to express the number of hypergraph motifs as a linear combination of homomorphism counts and to show that cyclic terms survive with non-zero coefficients for all non-degenerate Venn diagrams. This task is complicated significantly by the fact that multiple non-isomorphic subhypergraphs can satisfy the same Venn diagram. We overcome this challenge by introducing and analysing an intermediate coloured version of the problem. 
\end{abstract}
\pagebreak

\section{Introduction}
Motif counting refers to the problem of computing the number of occurrences of a small pattern in a large host network. Examples include subgraph counting, induced subgraph counting, as well as counting answers to a query in a relational database. A 2002 landmark result of Milo, Shen-Orr, Itzkovitz, Kashtan,  Chklovskii, and Alon\ \cite{milo_network_2002} identified a variety of subgraph patterns, subsequently called \emph{network motifs}, that appear with significantly higher frequency in real-life networks than their expected frequency in random networks, and that increased frequencies of certain patterns correlate with global features of networks. In the years since then, a significant amount of research has been undertaken to improve our understanding of the practical and theoretical aspects of counting network motifs:

From the practical side, network motifs have found applications in computational biology~\cite{Nogaetat08,Schilleretal15}, cancer detection~\cite{gomez-sanchez_clustering_2022}, social network analysis~\cite{UganderBK13}, genetics~\cite{ShenOrrMM02,Tranetal13}, phylogeny~\cite{Kuchaievetal10}, and data mining~\cite{AhmedNRD15,Babis17}, only to name a few. 
At the same time, the theoretical understanding of the inherent computational complexity of motif counting problems has seen a flurry of results in the last twenty years, not only with respect to classical complexity (``$\mathrm{P}$ vs.\ $\mathrm{NP}$''), but also in view of more modern frameworks such as fixed-parameter tractability and fine-grained complexity theory; examples include classifications for homomorphism counting ``from the other side''~\cite{DalmauJ04,RothW20}, subgraph counting~\cite{CurticapeanM14,CurticapeanDM17}, induced subgraph counting~\cite{ChenTW08,FockeR24,DoringMW24}, and motif counting in somewhere dense graphs~\cite{BressanGMR24}.

So far, the study of motif counting has almost exclusively been focused on graphs, that is, data organised in binary connections. However, a growing body of literature highlights the practical relevance of motif counting in \emph{hypergraphs}, including, for example, applications in computer vision~\cite{CV1,CV2,CV3}, clustering~\cite{CL1,CL2}, link prediction~\cite{LP1,LP2,LP3}, and bioinformatics~\cite{hwang_learning_2008}. 

In an attempt for the systematic study of motif counting in hypergraphs, Lee, Ko, and Shin~\cite{Lee_conference}\footnote{See also the extended journal version with Yoon and Kim~\cite{Lee}.} introduce \emph{hypergraph motifs} as subhypergraphs consisting of $k$ hyperedges, the intersections of which are described by a $2^k-1$-dimensional vector as follows:  Let $\mathfrak{h}\in \{0,1\}^{2^k-1}$ be a binary vector indexed by non-empty subsets of $[k]$, that is \[\mathfrak{h}=(\mathfrak{h}_J)_{\emptyset \neq J \subseteq [k]}\,.\]
A hypergraph $G$ with $k$ (distinct) hyperedges satisfies $\mathfrak{h}$ if, and only if, there is an ordering $e_1,\dots,e_k$ of the hyperedges such that, for each non-empty $J \subseteq [k]$, we have
\[\mathfrak{h}_J=1 \Leftrightarrow \left(\bigcap_{i \in J}e_i \right)\setminus \left(\bigcup_{i \in [k]\setminus J} e_i\right) \neq \emptyset\,.\]
Given a hypergraph motif $\mathfrak{h}$, and a hypergraph $G$, we write $\motifs{\mathfrak{h}}{G}$ for the number of subhypergraphs of $G$ that satisfy $\mathfrak{h}$.

The majority of the work of Lee et al.\ focuses on the case $k=3$ for which we can view $\mathfrak{h}$ as a \emph{Venn diagram} --- an example is provided in Figure~\ref{fig:intro_Venndiagram}. For the remainder of the paper, we will use $\mathfrak{h}$ to denote (general) hypergraph motifs with $k$ hyperedges, and we will use $\mathcal{V}$ for the special case $k=3$, i.e., for Venn diagrams.
\begin{figure}
    \centering
     \includegraphics[width=0.2\linewidth]{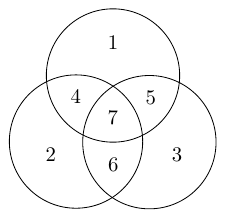} \hspace{1.7cm}\scalebox{-1}[1]{\includegraphics[width=0.5\linewidth]{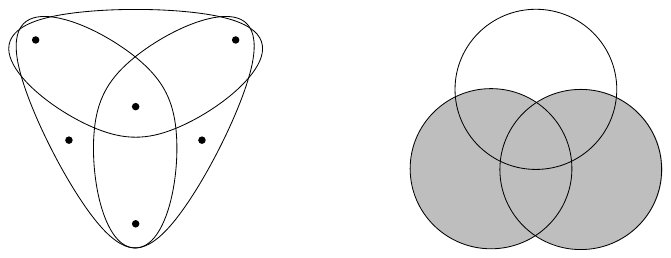}}
    \caption{\emph{(Left:)} An indexing of the intersections of three sets. \emph{(Centre):} Illustration of the Venn diagram $\mathcal{V}=(0,1,1,1,1,1,1)$. \emph{(Right):} A hypergraph satisfying $\mathcal{V}$.}
    \label{fig:intro_Venndiagram}
\end{figure}
While there are $2^7$ Venn diagrams in total, Lee et al.\ proposed to only keep one representative per permutation of the three sets, and to delete those that are not connected or that cannot be obtained from distinct hyperedges~\cite[Section 2.2]{Lee}. This leaves us with the $26$ Venn diagrams illustrated in Figure~\ref{fig:allVenns}. In their paper, Lee et al.\ discover that the number of subhypergraphs specified by those Venn diagrams are capable of characterising a variety of features of real-world hypergraphs such as domain-specific similarities, and that they have applications in Machine Learning such as clustering and hyperedge prediction~\cite[Section 6]{Lee}. Their results strongly indicate that even for the case $k=3$, hypergraph motifs promise to be a natural and effective generalisation of Milo et al.'s (\cite{milo_network_2002}) network motifs from graphs to hypergraphs.  

\paragraph*{Complexity Theoretic Challenges}
Lee et al.\ provide and implement algorithms for both exactly and approximately computing $\motifs{\mathcal{V}}{G}$. Although those algorithms are sufficiently efficient for the benchmark hypergraphs considered in their paper, the theoretical worst case running time is cubic in the number of hyperedges, and this behaviour appears already for a star \emph{graph}, that is, a for $2$-uniform hypergraph.\footnote{Their approximation algorithm can run asymptotically faster if fewer samples are taken, which yields however worse probabilistic guarantees for correctness in worst case inputs.}
In the present work we address the fine-grained complexity of counting hypergraph motifs and ask whether it is possible to beat cubic worst-case running time. Moreover, we also investigate the complexity of general hypergraph motifs $\mathfrak{h}$ on more than $3$ hyperedges.

\subsection{Our Results}
For our first result, we aim to determine, for each Venn diagram $\mathcal{V}$, the optimal exponent in the running time for any algorithm that computes the function $G \mapsto \motifs{\mathcal{V}}{G}$. 

\paragraph*{Parameterising by the rank}
While we do not want to restrict ourselves to input hypergraphs of constant rank --- the rank denotes the maximum size of a hyperedge --- we still wish to exploit the assumption that hypergraph motifs are \emph{small} patterns in a \emph{large} host hypergraph. Concretely, we do not aim for algorithms that run efficiently on instances $G$ containing hyperedges of size close to the order of $G$ itself. Instead, we will assume that the rank of $G$ is significantly smaller than $G$ itself --- we claim that this is a very natural feature of real-world hypergraphs and note that a similar assumption is made implicitly by Lee et al.\ when selecting their benchmark instances~\cite[Section 6]{Lee}; for example, they consider the DBLP co-authorship hypergraph which contains a node for each author, and a hyperedge for each paper including all of its authors, yielding a hypergraph with approximately 2 million nodes, 2.5 million hyperedges, but with rank only $25$. 

We make this assumption formal by relying on the toolkit of parameterised complexity theory and define the problem of counting hypergraph motifs as follows --- note that each Venn diagram $\mathcal{V}$ defines a unique problem:

\begin{mdframed}
\#\textsc{HyperMotif}($\mathcal{V}$) \\
\textbf{Input:} A hypergraph $G$ \\
\textbf{Parameter:} $\mathsf{rank}(G)$ \\
\textbf{Output:} $\#\motif{\mathcal{V}}{G}$
\end{mdframed}  
We say that $\#\textsc{HyperMotif}(\mathcal{V})$ is solvable in \emph{FPT}\footnote{\textbf{F}ixed-\textbf{P}arameter \textbf{T}ractable; see Section~\ref{sec:param_and_finegrained} for a concise introduction to parameterised complexity theory}\emph{-near-linear time} if it can be solved in time
\[f(\mathsf{rank}(G))\cdot \tilde{O}(|G|)\]
for some computable function $f$; here $\tilde{O}$ hides polylogarithmic factors. Note that, an FPT-near-linear time algorithm yields a near-linear-time algorithm when the problem is restricted to hypergraphs of constant rank, as $f(\mathsf{rank}(G))\in O(1)$ in that case. Similarly, we say that $\#\textsc{HyperMotif}(\mathcal{V})$ is solvable in \emph{FPT-near-quadratic time} if it can be solved in time 
\[f(\mathsf{rank}(G))\cdot \tilde{O}(|G|^2)\,.\]
\pagebreak

\noindent Our initial question can now be stated formally as follows:
\begin{enumerate}
    \item For which Venn diagrams $\mathcal{V}$ is $\#\textsc{HyperMotif}(\mathcal{V})$ solvable in FPT-near-linear time? 
    \item For which Venn diagrams $\mathcal{V}$ is $\#\textsc{HyperMotif}(\mathcal{V})$ solvable in FPT-near-quadratic time?
    \item Can we prove our results to be optimal under standard lower bound assumptions from fine-grained complexity theory?
\end{enumerate}
We are able to answer all of those questions comprehensively. For stating our results, we say that a Venn-diagram is degenerate if it enforces one edge to be fully contained in another edge; the formal definition can be found in Section~\ref{sec:venns}, but we encourage the reader to just consult Figure~\ref{fig:allVenns} for now.
\newcommand{\redish}{red!70!black}
\newcommand{\greenish}{green!50!black}

\begin{figure}[t]
\include{big_venn_diagram_picture}
\caption{\label{fig:allVenns}The Venn diagrams $\mathcal{V}_1,\dots,\mathcal{V}_{14}$ represent the \emph{non-degenerate} cases. We show that hypergraph motifs corresponding to those Venn diagrams cannot be counted in (FPT-)near-linear time under standard lower bound assumptions. In contrast, the Venn diagrams $\mathcal{V}_{15},\dots,\mathcal{V}_{26}$ are degenerate, and we will see that counting hypergraph motifs corresponding to those Venn diagrams is possible in (FPT-)near-linear time.}
\end{figure}

\begin{theorem}[Main Theorem, Upper Bound]\label{intro:upper}
    Let $\mathcal{V}$ be a Venn diagram. The problem $\motifprob{\mathcal{V}}$ can be solved in FPT near-quadratic time, that is, there is a computable function $f$ such that the problem can be solved in time
    \[ f(\mathsf{rank}(G))\cdot \tilde{O}(|G|^2) \,.\]
    Moreover, if $\mathcal{V}$ is degenerate, then $\motifprob{\mathcal{V}}$ can be solved in FPT near-linear time, that is, there is a computable function $f$ such that the problem can be solved in time
    \[ f(\mathsf{rank}(G))\cdot \tilde{O}(|G|) \,.\]
\end{theorem}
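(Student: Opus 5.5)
We prove the two parts separately. Throughout, $r=\mathsf{rank}(G)$, $m=|E(G)|$, and $|G|$ is at least $m$ and at most $rm$ (up to the size of the vertex set). We use that, after an $\tilde O(|G|)$ preprocessing step (sorting each hyperedge and building, for every vertex, its incidence list), we can do the following: compute in $O(r)$ time, for any two hyperedges, their intersection and differences; and, for a fixed hyperedge $e$ and any subset $S\subseteq e$, look up in a dictionary keyed by vertex sets the number of hyperedges whose intersection with $e$ is exactly $S$.

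\textbf{Quadratic upper bound.} A triple $\{e_1,e_2,e_3\}$ satisfying a Venn diagram $\mathcal V$ is connected by the conventions of Lee et al., so up to relabelling $e_2$ meets both $e_1$ and $e_3$. We therefore enumerate ordered pairs $(e_1,e_2)$ with $e_1\cap e_2\neq\emptyset$; there are at most $m^2$ of them, and computing the seven-region data of $e_1,e_2$ costs $f(r)$. For each such pair we must count the hyperedges $e_3$ that complete the pattern, and we do this by splitting on the trace $T=e_3\cap(e_1\cup e_2)$. There are at most $2^{2r}$ candidate traces, and for each one the four inner regions of the Venn diagram are determined; only the bit ``$e_3\setminus(e_1\cup e_2)\neq\emptyset$'' remains. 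Hence we need $N(T)$, the number of hyperedges $e$ with $e\cap(e_1\cup e_2)=T$, split according to whether $e=T$ or $e\supsetneq T$.

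To obtain $N(T)$ without depending on the pair, we precompute for every hyperedge $e$ and every subset $S\subseteq e$ the count $c(S)=\#\{e': S\subseteq e'\}$, stored in a hash table. This takes $2^r m$ entries and time $f(r)\tilde O(|G|)$. Inclusion--exclusion over the supersets of $T$ inside $e_1\cup e_2$ (at most $2^{2r}$ terms) then gives
\[
N(T)=\sum_{T\subseteq U\subseteq e_1\cup e_2}(-1)^{|U\setminus T|}\,c(U),
\]
where we set $c(U)=0$ whenever $U$ is not contained in any hyperedge. Whether $T$ itself is a hyperedge is a single lookup. We then correct for $e_3\in\{e_1,e_2\}$, discard the pairs where $e_3$ would be the connecting edge, and divide by the number of orderings under which the triple is counted, which is a constant depending only on the symmetry of $\mathcal V$; equivalently, we sum over all labelled assignments and divide by the automorphism count. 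The total running time is $f(r)\cdot\tilde O(m^2)$.

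\textbf{Linear upper bound for degenerate $\mathcal V$.} Degeneracy forces, say, $e_3\subseteq e_1$ (the other cases are symmetric). Since $|e_1|\le r$, the sets $e_3$ range over hyperedges contained in $e_1$, and there are at most $2^r$ of them; they can be found by hash lookups of subsets of $e_1$. So we enumerate the pairs $(e_1,e_3)$ in time $2^r m$. It remains to count the edges $e_2$, and here the same trace trick works: $e_2$ is constrained only by $e_2\cap e_1$, since $e_3\subseteq e_1$ implies $e_2\cap e_3=(e_2\cap e_1)\cap e_3$, together with whether $e_2\setminus e_1\neq\emptyset$. For each $T\subseteq e_1$ we compute $N(T)$ by the inclusion--exclusion above restricted to subsets of $e_1$, and then subtract the degenerate coincidences $e_2\in\{e_1,e_3\}$. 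The total running time is $f(r)\tilde O(|G|)$.

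The main obstacle is the bookkeeping rather than the algorithmics. One must make sure that every valid triple is counted a fixed number of times. For the quadratic part this means handling the choice of the ``middle'' edge, since several edges may qualify, and it is cleanest to count labelled triples and divide. For the degenerate part it means handling diagrams in which more than one containment holds, and correctly excluding repeated hyperedges.
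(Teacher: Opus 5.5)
Your algorithm is correct, but it takes a genuinely different route from the paper.

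\textbf{The paper's route.} The paper never counts triples directly. It expands $\motifs{\mathcal{V}}{G}=\sum_F \alpha_{\mathcal{V},r}(F)\cdot\homs{F}{G}$, going from subgraphs to embeddings to homomorphisms via M\"obius inversion over vertex partitions. Every $F$ in the support is a quotient of a $3$-edge hypergraph, so $\mathsf{ghtw}(F)\le 2$. For degenerate $\mathcal{V}$, every such $F$ has at most two maximal edges and is therefore acyclic. The paper then applies Yannakakis' algorithm (via Scarcello's reduction) as a black box to each term.

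\textbf{Your route.} You hand-roll the join instead. You enumerate two edges: all pairs, or in the degenerate case only the at most $2^r m$ containment pairs $e_3\subsetneq e_1$. The remaining edge matters only through its trace on the enumerated edges plus one bit recording whether it has vertices outside them. You count traces with the superset table $c(\cdot)$ and inclusion--exclusion. The key steps check out:
\begin{itemize}
\item the trace fixes six of the seven regions;
\item your formula for $N(T)$ is the correct M\"obius inversion, including $T=\emptyset$ with $c(\emptyset)=m$;
\item $e_3\subseteq e_1$ does make $e_2$ matter only through $e_2\cap e_1$ and whether $e_2\setminus e_1\neq\emptyset$.
\end{itemize}

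\textbf{Bookkeeping.} This is cleanest if you count ordered triples of distinct edges whose region vector equals $\mathcal{V}$ in one fixed labelling, chosen so the forced containment is $e_3\subseteq e_1$. Then divide by the order of the stabiliser of $\mathcal{V}$ under $S_3$. With this, the restriction to intersecting pairs and the discussion of the connecting edge become unnecessary.

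\textbf{What each route buys.} Yours gives an elementary, self-contained and implementable algorithm with an explicit $f(r)=2^{O(r)}$. The paper only claims computability of $f$, and its brute-force enumeration of partitions of up to $3r$ vertices already incurs Bell-number growth. The paper's route buys uniformity with the lower bound. The same homomorphism expansion is used there, so both bounds are explained by one structural quantity: the width of the terms that survive with non-zero coefficient. Its argument also carries over to generalised motifs, with fractional hypertreewidth in place of $\mathsf{ghtw}$.
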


\noindent For fixed rank, we obtain the following direct implication.
\begin{corollary}\label{intro:upper_cor}
    When restricted to input hypergraphs of bounded rank, $\motifprob{\mathcal{V}}$ is solvable in near-quadratic time for arbitrary Venn diagrams, and it is solvable in near-linear time for degenerate Venn diagrams.
\end{corollary}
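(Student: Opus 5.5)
The statement to prove is Corollary~\ref{intro:upper_cor}. It follows by specialising Theorem~\ref{intro:upper} to a fixed rank bound. Fix a constant $r$ and restrict attention to input hypergraphs $G$ with $\mathsf{rank}(G)\le r$.

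Theorem~\ref{intro:upper} gives, for every Venn diagram $\mathcal{V}$, an algorithm with running time $f(\mathsf{rank}(G))\cdot \tilde{O}(|G|^2)$ for some computable $f$. The only step that needs any care is bounding $f(\mathsf{rank}(G))$ by a constant. The theorem does not say that $f$ is monotone, so instead of writing $f(r)$ I will use
\[ c_r := \max_{0\le i\le r} f(i). \]
This is a finite maximum of finitely many values, hence a constant depending only on $r$ (and on $\mathcal{V}$). Every admissible input has $\mathsf{rank}(G)\le r$, so $f(\mathsf{rank}(G))\le c_r$. The running time is therefore $c_r\cdot\tilde{O}(|G|^2)=\tilde{O}(|G|^2)$, which is near-quadratic time.

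The degenerate case is identical. For degenerate $\mathcal{V}$, the second part of Theorem~\ref{intro:upper} gives running time $f'(\mathsf{rank}(G))\cdot\tilde{O}(|G|)$. Bounding $f'(\mathsf{rank}(G))$ by $c'_r:=\max_{i\le r} f'(i)$ yields $\tilde{O}(|G|)$, i.e.\ near-linear time.

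I expect no real obstacle. The argument only needs care with quantifiers: the constants hidden in the $\tilde{O}$ may depend on the fixed rank bound $r$ and on $\mathcal{V}$, but not on $G$. All the algorithmic content lives in Theorem~\ref{intro:upper}, which I assume as stated.
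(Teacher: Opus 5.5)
Your proposal is correct and takes the same approach as the paper: the corollary is stated as a direct implication of Theorem~\ref{intro:upper}, using that $f(\mathsf{rank}(G))\in O(1)$ when the rank is bounded. Taking $\max_{i\le r} f(i)$ so as not to assume $f$ is monotone is extra care that the paper leaves implicit, and it is harmless.
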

\noindent We highlight that the above result implies immediately that the cubic worst-case running time of the algorithm of Lee et al.\ \cite{Lee}, exhibited even when restricted to hypergraphs of rank $2$, is not optimal for any Venn diagram.

Next, the second part of our main result states that our algorithms are optimal under the assumption of both the Triangle Hypothesis and the Hyperclique Hypothesis (see Section~\ref{sec:param_and_finegrained}): 
\begin{theorem}[Main Theorem; Lower Bound]\label{intro:lower}
    Let $\mathcal{V}$ be a non-degenerate Venn diagram. Then the problem $\motifprob{\mathcal{V}}$ is not solvable in FPT-near-linear time (and thus also not in near-linear time) unless either the Triangle Hypothesis or the Hyperclique Hypothesis fails.
\end{theorem}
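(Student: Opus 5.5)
The proof will go through the hypergraph homomorphism basis of Bressan, Brinkmann, Dell, Roth and Wellnitz. We express $\motifs{\mathcal{V}}{G}$ as a finite linear combination $\sum_{H} c_H \cdot \homs{H}{G}$ over hypergraphs $H$ with three hyperedges (and bounded vertex count relative to the rank). Then we argue that, for every non-degenerate $\mathcal{V}$, some "cyclic" term has a non-zero coefficient. By the fine-grained complexity monotonicity property of that basis, computing the linear combination in FPT-near-linear time lets us compute each term with $c_H\neq 0$ in FPT-near-linear time (up to a factor depending only on the rank). The term we want has $H$ a triangle-like hypergraph: three hyperedges $e_1,e_2,e_3$ pairwise intersecting in private vertices, with no common vertex and possibly private vertices. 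For $2$-uniform shape this is counting triangles in a graph. More generally, for rank $r$ it is the pattern whose homomorphism counting encodes $3$-uniform-style hyperclique detection in a $(r-1)$-uniform hypergraph. Lower bounds for these cyclic terms follow from the Triangle Hypothesis (when the intersection pattern reduces to graph triangles) or the Hyperclique Hypothesis (for the general case).

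Concretely, I would carry out the steps in this order.

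\textbf{Step 1: colouring.} Introduce a coloured version $\colmotifprob{\mathcal{V}}$ in which each of the three hyperedges is coloured by a distinct colour and each vertex of $G$ carries a colour indicating the Venn region $J\subseteq[3]$ it should occupy. I would show a linear-time (FPT in rank) reduction from the coloured to the uncoloured problem by inclusion–exclusion over colour classes. Since $|G|$ changes only by rank-dependent factors, lower bounds transfer. The coloured problem has the advantage that the ordering of the hyperedges is fixed, which removes the cancellation caused by several non-isomorphic subhypergraphs satisfying the same $\mathcal{V}$.

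\textbf{Step 2: homomorphism expansion.} For the coloured problem, write the count as an alternating sum over quotients: first count colour-preserving embeddings via Möbius inversion over partitions of the pattern. Then handle the "non-empty region" constraints by inclusion–exclusion over which regions $J$ with $\mathcal{V}_J=1$ are forced empty. The regions with $\mathcal{V}_J=0$ are enforced by the colouring. Each resulting term is a colour-preserving homomorphism count $\cphoms{H}{\rho}{G}$ from a pattern $H$ in which region $J$ receives $m_J$ vertices.

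\textbf{Step 3: coefficient survival (main obstacle).} Isolate the pattern $H^\ast$ with exactly one vertex in each region $\{1,2\},\{1,3\},\{2,3\}$ that is allowed, and examine its coefficient. Non-degeneracy should imply that at least two (typically all three) pairwise regions can be realised as "private" intersections without forcing containment. Equivalently, there is no pair $i\neq j$ with every region containing $i$ but not $j$ forced empty. This is the key use of the hypothesis, giving a cyclic witness. I would show that the coefficient of $H^\ast$ equals a product of signed Möbius values that is a sign times $1$, because $H^\ast$ arises only from the inclusion–exclusion term in which exactly these regions are kept, with no identifications. The delicate cases are those where the cycle is only two-sided; there a reduction from Hyperclique via private-region padding of the third edge should be used instead.

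Finally, combine Step 3 with the monotonicity of the homomorphism basis and the known hardness of the cyclic patterns to conclude the stated lower bound.
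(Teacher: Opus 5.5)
\textbf{There is a genuine gap.} Your Step~3 fails, and the colouring you set up in Step~1 is what makes it fail.

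Non-degeneracy does not give you two pairwise regions: $\mathcal{V}_{12}$ allows one and $\mathcal{V}_{13}$ none. Worse, for the six diagrams $\mathcal{V}_5,\mathcal{V}_6,\mathcal{V}_{11},\mathcal{V}_{12},\mathcal{V}_{13},\mathcal{V}_{14}$, \emph{every} hypergraph satisfying the diagram is acyclic. Each is either a path of three edges, or a common core with at most two pairwise regions, which GYO-reduces to a single edge. So your $H^\ast$ is acyclic there, and its homomorphism count is near-linear by Yannakakis.

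For these diagrams a cyclic term arises only as a quotient that identifies vertices from \emph{different} regions. Merging a private vertex of one end edge of a $\mathcal{V}_5$-path with a private vertex of the other end edge yields a triangle such as $\Delta(1,1,1)$. Merging the petals of the $\mathcal{V}_{13}$-sunflower $\{a,b_1,d_1\},\{a,b_2,c_2\},\{a,c_3,d_3\}$ in pairs yields $\Gamma(0,0,0)$. Your colouring of vertices by their intended (allowed) region forbids exactly these identifications, because a colour-preserving homomorphism only merges vertices of equal colour. Every term of your coloured expansion therefore keeps the region-shape of a hypergraph whose non-empty regions are allowed by $\mathcal{V}$, so it is acyclic, and there is nothing hard to isolate.

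There are three further problems.
\begin{enumerate}
\item When the triple region is non-empty, no quotient is triangle-like, since the common vertex survives every quotient. The witness must then be $\Gamma$. ``Hyperclique via private-region padding'' is not a reduction that produces a surviving cyclic term.
\item Your claim that $H^\ast$ has coefficient $\pm1$ ``with no identifications'' is unfounded. With region colours the counts $m_J$ are not fixed, so $H^\ast$ is a quotient of every larger shape and collects a signed sum over all of them.
\item If the vertex colours are meant to constrain regions, inclusion--exclusion over edge-colour classes cannot enforce them, because the uncoloured oracle does not see vertex colours.
\end{enumerate}

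\textbf{The missing idea is the paper's choice of colouring.} Fix a cyclic pattern $H\in\{\Delta(j_1,j_2,j_3),\Gamma(j_1,j_2,j_3)\}$ and take inputs that are edgewise-injectively $H$-coloured. Crucially, the two motif edges whose colours both contain a vertex $x$ of $H$ may meet the colour class of $x$ in different vertices. Colourful motifs then correspond to fractures $\vec\sigma\in\mathcal{L}(H,\mathcal{V})$, that is, to splittings of vertices of $H$. M\"obius inversion over the fracture lattice puts $H=H\natural\vec\top$ on top, with coefficient $\sum_{\vec\sigma\in\mathcal{L}(H,\mathcal{V})}\prod_{v}(-1)^{|\sigma_v|-1}(|\sigma_v|-1)!$. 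There are no automorphism factors, because $H$ is fixed.

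This coefficient has to be evaluated case by case, and it is not always $\pm1$:
\begin{itemize}
\item it is $-3$ for $\mathcal{V}_5$ (three single splittings of $\Delta(1,1,1)$) and for $\mathcal{V}_{11}$;
\item it is $+3$ for $\mathcal{V}_{12}$ (three double splittings of $\Gamma(0,0,0)$);
\item it is $-1$ for $\mathcal{V}_{13}$ (splitting all of $b,c,d$ in $\Gamma(0,0,0)$).
\end{itemize}

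Transferring hardness then needs two further ingredients before Brault-Baron's theorem applies to the non-acyclic $H$. First, you need coloured Dedekind interpolation via tensor products of $H$-coloured hypergraphs; the uncoloured monotonicity of Bressan et al.\ says nothing about your coloured expansion. Second, you need the reduction $\#\textsc{Hom}(\{H\})\fptlinred\#\textsc{cpHom}(\{H\})$.
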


\begin{corollary}[Fine-Grained Complexity Dichotomy]
    Assume that both the Triangle Hypothesis and the Hyperclique Hypothesis are true, and let~$\mathcal{V}$ be a Venn diagram. Then $\motifprob{\mathcal{V}}$ is solvable in FPT-near-linear time if and only if $\mathcal{V}$ is degenerate. Moreover, when restricted to input hypergraphs of bounded rank, $\motifprob{\mathcal{V}}$ is solvable in near-linear time if and only if $\mathcal{V}$ is degenerate.
\end{corollary}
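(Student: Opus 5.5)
The corollary is a direct combination of Theorem~\ref{intro:upper} and Theorem~\ref{intro:lower}, and I would prove the two directions separately under the stated hypotheses.

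\emph{If $\mathcal{V}$ is degenerate.} The second part of Theorem~\ref{intro:upper} gives an algorithm running in time $f(\mathsf{rank}(G))\cdot\tilde O(|G|)$ for a computable $f$. So $\motifprob{\mathcal{V}}$ is solvable in FPT-near-linear time, and this direction needs no hypothesis. For the bounded-rank statement, suppose $\mathsf{rank}(G)\le r$ for a constant $r$. The value $f(\mathsf{rank}(G))$ then ranges over at most $r+1$ values $f(0),\dots,f(r)$, so it is bounded by the constant $\max_{i\le r}f(i)$. The running time is therefore $\tilde O(|G|)$, which is near-linear.

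\emph{If $\mathcal{V}$ is non-degenerate.} Assume both the Triangle Hypothesis and the Hyperclique Hypothesis hold. By Theorem~\ref{intro:lower}, $\motifprob{\mathcal{V}}$ is then not solvable in FPT-near-linear time. This is exactly the contrapositive of the ``only if'' direction of the first claim.

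For the bounded-rank ``only if'', the implication runs the wrong way: a near-linear algorithm for a single fixed rank bound does not automatically give an FPT-near-linear algorithm for all ranks. So I would not derive it from the first claim. Instead I would rely on the fact that the lower-bound reductions behind Theorem~\ref{intro:lower} produce instances of rank bounded by a constant depending only on $\mathcal{V}$ and on the hypothesis being used. For the Triangle Hypothesis the instances are essentially graphs of rank $2$ or slightly more. For the Hyperclique Hypothesis the instances have rank tied to the fixed uniformity of the hyperclique problem. Given this, a near-linear algorithm for $\motifprob{\mathcal{V}}$ on inputs of rank at most some fixed $r_0$ would already refute one of the two hypotheses.

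The only step with real content is confirming that the reductions in the proof of Theorem~\ref{intro:lower} output instances of constant rank. I expect this to hold, because the abstract notes that the hard instances already arise for bounded rank. If instead the constructions let the rank grow, one would need to read the bounded-rank claim as the paper's informal reading of ``$f(\mathsf{rank})=O(1)$''.
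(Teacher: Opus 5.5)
Your two-direction structure is how the corollary follows in the paper. You are also right that the bounded-rank ``only if'' is not a formal consequence of the FPT statement. The gap is the step you yourself call substantive: you leave it as an expectation with a fallback reading, and the mechanism you guess is not where the bound comes from. The rank of the hard instances is not determined by the Triangle or Hyperclique instances being graphs or of fixed uniformity. Those hypotheses enter only through Theorem~\ref{thm:lower_bound_base_mengel}, applied to one fixed non-acyclic $H$. The instances actually fed to $\motifprob{\mathcal{V}}$ are produced from $\#\textsc{Hom}(\{H\})$ by the chain of Lemma~\ref{lem:lower_bound_reduction_chain}.

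The missing argument is the one in the paper's proof of the lower-bound theorem, which is why that theorem says ``and thus also not in near-linear time''.
\begin{itemize}
\item The witness from Corollary~\ref{cor:coeff_summary} is $\Delta(j_1,j_2,j_3)$ or $\Gamma(j_1,j_2,j_3)$ with all $j_i\le 1$, so $\mathsf{rank}(H)\le 4$.
\item Every intermediate instance in the chain is edgewise-injectively $H$-coloured. This covers the $\#\textsc{cpHom}(\{H\})$ instances and the tensor products queried in Lemma~\ref{lem:dedekind_applied}. Hence each edge maps injectively onto an edge of $H$ and has size at most $4$.
\item The inclusion--exclusion of Lemma~\ref{lem:incl_excl} only deletes edges, so it cannot raise the rank.
\end{itemize}
More directly, $\#\textsc{Hom}(\{H\})$ has the constant parameter $|H|$, so by the definition of $\fptlinred$ every oracle query has rank $O(1)$.

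For fixed $H$, the composed reduction runs in time $O(|G|)$ and makes $O(1)$ queries of size $O(|G|)$. So a near-linear algorithm for $\motifprob{\mathcal{V}}$ on inputs of rank at most $4$ would give a near-linear algorithm for $\#\textsc{Hom}(\{H\})$. That contradicts Theorem~\ref{thm:lower_bound_base_mengel}, which settles the bounded-rank direction outright and makes your fallback reading unnecessary.
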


\paragraph*{Generalised Hypergraph Motifs}
We also investigate the case of $k>3$. To avoid any ambiguities, for the remainder of this paper, we will refer to hypergraph motifs with $k>3$ hyperedges as \emph{generalised hypergraph motifs} ``GHMs'' of order $k$. 

For any GHM $\mathfrak{H}$ of order $k$, we can compute the function $G \mapsto \#\motif{\mathfrak{H}}{G}$ in time $|G|^{O(k)}$ up to a factor only depending on $\mathfrak{H}$ by brute force, leading to a polynomial-time algorithm if $\mathfrak{H}$ (and thus $k$) are fixed. For this reason, we ask for which generalised hypergraph motifs we can obtain a running time in which the exponent of the host graph $G$ does not depend on the order of the motif. In order to state and answer this question formally, we follow the standard approach for parameterised motif detection and counting problems (cf.\ \cite{Grohe07,Marx13,Bressanetal26}) and restrict the problem to classes $\mathfrak{C}$ of allowed hypergraph motifs:

\begin{mdframed}
\#\textsc{GeneralisedHyperMotif}($\mathfrak{C}$) \\
\textbf{Input:} A GHM $\mathfrak{H}\in \mathfrak{C}$, and a hypergraph $G$ \\
\textbf{Parameter:} $|\mathfrak{H}|+\mathsf{rank}(G)$ \\
\textbf{Output:} $\#\motif{\mathfrak{H}}{G}$
\end{mdframed}

Formally, we ask for which classes $\mathfrak{C}$ the problem $\#\textsc{GeneralisedHyperMotif}(\mathfrak{C})$ can be solved in time
\[f(|\mathfrak{h}|+\mathsf{rank}(H))\cdot |G|^{O(1)}\,,\]
for some computable function $f$. We call $\#\textsc{GeneralisedHyperMotif}(\mathfrak{C})$ \emph{fixed-parameter tractable} if an algorithm with the above running time exists.

We are able to provide a partial answer to this question: We introduce two width measures of GHMs: The \emph{hereditary fractional hypertreewidth}, and the \emph{hereditary adaptive width}. The formal definitions of hereditary fractional hypertreewidth and hereditary adaptive width require a significant amount of technical set-up which is why we defer their definitions to Section~\ref{sec6:general_case}. For the purpose of presenting our result on $\#\textsc{GeneralisedHyperMotif}(\mathfrak{C})$ we will say that a class $\mathfrak{C}$ of GHMs has \emph{bounded} hereditary fractional hypertreewdith if there is a constant upper bound on the hereditary fractional hypertreewidth of all members $\mathfrak{h}\in \mathfrak{C}$; bounded hereditary adaptive width is defined similarly. We show the following:

\begin{theorem}\label{thm:main_GHMs_intro}
    Let $\mathfrak{C}$ be a recursively enumerable class of generalised hypergraph motifs.
    \begin{itemize}
        \item If $\mathfrak{C}$ has bounded hereditary fractional hypertreewidth, then $\#\textsc{GeneralisedHyperMotif}(\mathfrak{C})$ is fixed-parameter tractable.
        \item If $\mathfrak{C}$ has unbounded hereditary adaptive width, then $\#\textsc{GeneralisedHyperMotif}(\mathfrak{C})$ is not fixed-parameter tractable, unless ETH\footnote{The \textbf{E}xponential \textbf{T}ime \textbf{H}ypothesis asserts that $3$-$\textsc{SAT}$ cannot be solved in subexponential time; see Section~\ref{sec:param_and_finegrained} for the formal statement.} fails.
    \end{itemize}
\end{theorem}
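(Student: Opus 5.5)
The plan is to reduce both parts of Theorem~\ref{thm:main_GHMs_intro} to hypergraph homomorphism counting, using the homomorphism basis of Bressan, Brinkmann, Dell, Roth, and Wellnitz. Fix a GHM $\mathfrak{h}$ of order $k$.

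\textbf{Step 1: coloured version.} I first introduce a coloured version $\colmotifs{\mathfrak{h}}{c}{G}$, where the input $G$ comes with a colouring $c$ of its hyperedges into $[k]$ and we count tuples $(e_1,\dots,e_k)$ with $c(e_i)=i$ whose Venn pattern is exactly $\mathfrak{h}$. By inclusion--exclusion over colour classes, and by dividing by the automorphism count of the pattern, $\motifs{\mathfrak{h}}{G}$ is an $f(k)$-bounded linear combination of coloured counts. Conversely, the standard colour-coding and interpolation arguments give the reverse reduction.

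\textbf{Step 2: expansion into homomorphism counts.} I express the coloured count as a finite linear combination of colour-prescribed homomorphism counts $\cphoms{H}{}{G}$. The patterns $H$ range over the hypergraphs obtained from $\mathfrak{h}$ as follows.
\begin{enumerate}
\item For each region $J$ with $\mathfrak{h}_J=1$, place a number of vertices $m_J$, with $1\le m_J\le \mathsf{rank}(G)$. This step is why the rank is part of the parameter.
\item Apply inclusion--exclusion to enforce non-emptiness and emptiness of the regions.
\item Apply M\"obius inversion over the partition lattice to convert embeddings into homomorphisms.
\end{enumerate}
The resulting patterns are quotients of these ``blown-up'' hypergraphs, possibly with some regions emptied.

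\textbf{Step 3: upper bound.} Bounded hereditary fractional hypertreewidth is defined precisely so that every pattern occurring with non-zero coefficient has bounded fractional hypertreewidth. The known FPT algorithms for counting homomorphisms from patterns of bounded fractional hypertreewidth (via the polymatroid/AGM-bound-based join algorithms) then apply, running in time $f(|\mathfrak{h}|+\mathsf{rank}(G))\cdot|G|^{O(1)}$. Summing over the $f(k,\mathsf{rank})$ terms yields fixed-parameter tractability.

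\textbf{Step 4: lower bound.} Here I rely on the monotonicity and complexity of linear combinations of homomorphism counts, in the form of the result from the SODA 2026 basis paper: evaluating such a combination is as hard as evaluating its hardest term with non-zero coefficient. Hereditary adaptive width is defined so that, if it is unbounded on $\mathfrak{C}$, then for every bound there is $\mathfrak{h}\in\mathfrak{C}$ whose expansion contains, with non-zero coefficient, a pattern of large adaptive width. Marx's lower bound for patterns of unbounded adaptive width (submodular width), under ETH, then transfers to the generalised motif problem through the reductions of Steps~1 and~2.

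\textbf{Main obstacle.} The crux is showing that the relevant high-width patterns survive with non-zero coefficient, because distinct quotients could cancel one another. I would first try to isolate a canonical pattern: the one where every non-empty region receives exactly $\mathsf{rank}$-bounded distinct vertices and no identifications are made. I would argue that its coefficient is a product of M\"obius values, each equal to $\pm1$ times automorphism factors, and hence non-zero. The hereditary definition is meant to absorb the remaining sub-patterns.
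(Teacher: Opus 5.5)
The backbone of your proposal is the paper's: write the motif count as a finite linear combination of homomorphism counts, use Grohe--Marx for the upper bound, and use the complexity monotonicity of Bressan et al.\ together with Marx's adaptive-width lower bound. The gap is in how your coloured detour connects to the hypotheses.

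\textbf{The upper bound does not follow as written.} The hereditary widths are defined through the \emph{uncoloured} coefficients
$\alpha_{\mathfrak{H},r}(F)=\sum_{H\in S(\mathfrak{H},r)}\auts{H}^{-1}\sum_{\rho:\,H/\rho\cong F}\mu(\rho)$.
These come from the expansion $\motifs{\mathfrak{H}}{G}=\sum_F\alpha_{\mathfrak{H},r}(F)\cdot\homs{F}{G}$ for hosts of rank at most $r$. Your Steps 1--3 instead evaluate the count through coloured counts and colour-prescribed homomorphism counts. Their coefficients are different alternating sums, and nothing guarantees that the coloured coefficients of colourings of $F$ vanish when $\alpha_{\mathfrak{H},r}(F)=0$. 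So in Step 3 the hypothesis ``bounded hereditary fractional hypertreewidth'' gives you no control over the patterns you actually hand to the Grohe--Marx algorithm.

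The fix is to drop colours entirely. Write $\motifs{\mathfrak{H}}{G}=\sum_{H\in S(\mathfrak{H},r)}\subs{H}{G}$; no inclusion--exclusion over regions is needed, because $S(\mathfrak{H},r)$ already consists of exactly the hypergraphs satisfying $\mathfrak{H}$. Then use $\subs{H}{G}=\auts{H}^{-1}\embs{H}{G}$ and M\"obius inversion over $P(V(H))$. Compute the support of $\alpha_{\mathfrak{H},r}$ by brute force over partitions of $V(H)$ for $H\in S(\mathfrak{H},r)$, and run Grohe--Marx only on that support.

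\textbf{Your ``main obstacle'' is not an obstacle here.} Hereditary adaptive width is by definition the maximum of $\mathsf{aw}(F)$ over the $F$ with $\alpha_{\mathfrak{H},r}(F)\neq 0$. Hence the class $\mathcal{H}$ of all such $F$ (over $\mathfrak{H}\in\mathfrak{C}$ and $r>0$) has unbounded adaptive width directly by hypothesis. Marx then makes $\#\textsc{Hom}(\mathcal{H})$ hard under ETH, and the uncoloured complexity monotonicity of Bressan et al.\ gives $\#\textsc{Hom}(\mathcal{H})\leq^{\mathsf{FPT}}\#\textsc{GeneralisedHyperMotif}(\mathfrak{C})$. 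No survival or non-cancellation argument and no coloured intermediate problem is required. The paper needs fractures only for the fine-grained $k=3$ result, where a concrete cyclic pattern with non-zero coefficient must be exhibited.

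Your canonical-pattern sketch would not give what you want anyway. A vertex-maximal $H\in S(\mathfrak{H},r)$ does have coefficient $\auts{H}^{-1}\neq 0$, but large width may be witnessed only by proper quotients. Their coefficients are genuine alternating sums over several $H$, as in the paper's Figure~\ref{fig:quotient_example}.

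\textbf{Missing parameter bookkeeping in the lower bound.} The oracle queries to $\#\textsc{GeneralisedHyperMotif}(\mathfrak{C})$ must have $\mathsf{rank}$ bounded by a function of $|F|$. The paper secures this by deleting from the host of a $\#\textsc{Hom}(\mathcal{H})$ instance every edge larger than $\mathsf{rank}(F)$, since such edges are never images of edges of $F$. Your proposal does not address this.
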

We wish to emphasise that Theorem~\ref{thm:main_GHMs_intro} follows rather easily from existing works on linear combinations of hypergraph homomorphism counts~\cite{Bressanetal26} --- a comprehensive introduction will be provided in the next part. Specifically, we hence view Theorem~\ref{thm:main_GHMs_intro} as an initial result on the complexity of counting GHMs, and not as its final answer. However, we also highlight that the gap between the upper and the lower bound of Theorem~\ref{thm:main_GHMs_intro} is not due to an artifact in our proofs, but due to the unresolved open problem on the tractability criterion for parameterised homomorphism counting on hypergraphs of unbounded rank ($\#\textsc{Hom}$): in a nutshell, hereditary fractional hypertreewidth and hereditary adaptive width are, unsurprisingly, closely related to fractional hypertreewidth and adaptive width. It follows from the work of Grohe and Marx~\cite{GroheM14} that $\#\textsc{Hom}$ is fixed-parameter tractable when restricted to pattern hypergraphs of bounded fractional hypertreewidth, and it follows from Marx~\cite{Marx13} that $\#\textsc{Hom}$ is not fixed-parameter tractable when restricted to pattern hypergraphs of unbounded adaptive width (unless ETH fails), leaving a gap for classes of pattern hypergraphs of unbounded fractional hypertreewidth, but bounded adaptive width. It follows from our proofs that any result closing this gap will also close the gap exhibited by Theorem~\ref{thm:main_GHMs_intro}.

\subsection{Technical Overview}
The main mathematical tool leveraged in this work is the framework of linear combinations of hypergraph homomorphism counts. This should not come to any surprise as the majority of recent results on pattern counting in \emph{graphs} rely on linear combinations of graph homomorphism counts (see e.g.\ \cite{CurticapeanDM17,DorflerRSW22,FockeR24,DoringMW24,BressanGMR24}). Specifically, we use the extension of graph motif parameters, originally introduced by Curticapean, Dell and Marx~\cite{CurticapeanDM17}, to \emph{hypergraph} motif parameters due to Bressan, Brinkmann, Dell, Roth, and Wellnitz~\cite{Bressanetal26}. 

For the purpose of presenting said framework, we are required to first introduce hypergraph homomorphisms: Given hypergraphs $H$ and $G$, a \emph{homomorphism} from $H$ to $G$ is a mapping $\varphi: V(H) \to V(G)$, such that for every hyperedge $e=\{v_1,\dots,v_r\}$ of $H$, its image $\varphi(e)=\{\varphi(v_1),\dots,\varphi(v_r)\}$ must be a hyperedge of $G$.\footnote{In some literature, hypergraph homomorphisms $\varphi$ from $H$ to $G$ only require that $\varphi(e)$ is a subset of an edge of $e$. This is sometimes referred to as a \emph{trimmed} homomorphism (cf.\ \cite{Bressanetal26}). We emphasise that, throughout this paper, we always require that $\varphi(e)$ is a hyperedge, and not just the subset of a hyperedge.} We write $\homs{H}{G}$ for the number of homomorphisms from $H$ to $G$. Following Bressan et al.\ \cite{Bressanetal26}, a \emph{hypergraph motif parameter} is a function $\Phi$ from hypergraphs to rationals that can be expressed as a finite linear combination of homomorphism counts, that is, there is a function $\alpha_\Phi$ from hypergraphs to rationals such that $\alpha_\Phi$ has finite support, and for each hypergraph $G$ we have
\begin{equation}
    \Phi(G)= \sum_H \alpha_\Phi(H)\cdot \homs{H}{G}\,,
\end{equation}
where the sum is over all (isomorphism types of) hypergraphs. Using an interpolation argument that is based on Dedekind's theorem and that has independently been discovered by Curticapean, Dell, and Marx~\cite{CurticapeanDM17} and by Chen and Mengel~\cite{ChenM16} about a decade ago, Bressan et al.\ \cite{Bressanetal26} show that computing a hypergraph motif parameter $\Phi$ is \emph{precisely as hard as} computing the hardest term $\homs{H}{G}$ with $\alpha_\Phi(H)\neq 0$. Specifically, the interreducibility result provided by this interpolation technique, often referred to as ``Complexity Monotonicity'' or ``Dedekind Interpolation'', implies that the best exponent of $|G|$ in the running time of any algorithm that computes $\Phi$ is \emph{equal} to the best exponent of $|G|$ in the running time of any algorithm computing the hardest term $G \mapsto \homs{H}{G}$ among all $H$ with $\alpha_\Phi(H)\neq 0$.

For establishing FPT-near-linear-time algorithms, or proving that such algorithms do not exist under standard assumptions from fine-grained complexity theory, we rely on the fact that $G\mapsto \homs{H}{G}$ is computable in near-linear time if and only if $H$ is $\alpha$-acyclic; the ``only if'' direction assumes both the Triangle Hypothesis and the Hyperclique Hypothesis. The upper bound of this classification follows from the counting version of Yannakakis algorithm~\cite{Yannakakis}, and the lower bound is due to Brault-Baron~\cite{BraultBaron13} --- we provide details in Section~\ref{sec5:proof}. Moreover, for establishing FPT-near-quadratic time, we rely on the extension of Yannakakis algorithm to hypergraphs of bounded generalised hypertreewidth; specifically, the function $G \mapsto \homs{H}{G}$ can be computed in near-quadratic time if $H$ has generalised hypertreewidth at most two (cf.\ \cite{Scarcello05,Mengel_full}). We introduce $\alpha$-acyclicity and generalised hypertreewidth in Section~\ref{sec:ghtw} and note that, for the purpose of this introduction, it will be sufficient to rely on the previous results in a black-box manner. \pagebreak

Following known transformations that date back to early works of Lov{\'{a}}sz (see~\cite[Chapter 5.3]{Lov}) and their generalisations from graphs to hypergraphs~\cite{Bressanetal26}, we show that for each Venn diagram $\mathcal{V}$ and hypergraph $G$ of rank at most $r$, we have
\begin{equation}\label{eq:intro_uncoloured_hombasis}
\motifs{\mathcal{V}}{G} = \sum_{F} \alpha_{\mathcal{V},r}(F) \cdot \homs{F}{G} \,,
\end{equation}
where
\begin{equation}\label{eq:intro_uncoloured_coeff}
    \alpha_{\mathcal{V},r}(F) = \sum_{H \in S(\mathcal{V},r)} \auts{H}^{-1} \cdot \sum_{\substack{\rho \in P(V(H))\\H/\rho \cong F}} \mu(\rho)\,.
\end{equation}
Here $S(\mathcal{V},r)$ denotes the set of all hypergraphs of rank at most $r$ that satisfy $\mathcal{V}$, $P(V(H))$ denotes the set of all partitions of $V(H)$, and $H/\rho$ denotes the \emph{quotient hypergraph} obtained from $H$ by identifying all pairs of vertices contained in a common block of $\rho$; consider Figure~\ref{fig:quotient_example} for an example. Moreover, $\mu(\rho)$ denotes the M\"obius function of the partition lattice of $V(H)$ (cf.\ \cite[Chapter A.1]{Lov} and~\cite[Chapter 3]{Stanley}). 
\begin{figure}
    \centering
    \includegraphics[width=0.8\linewidth]{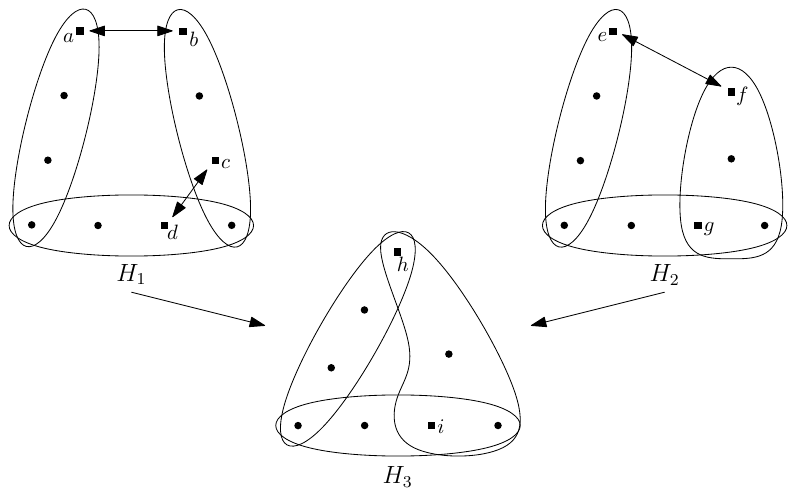}
    \caption{A pair of non-isomorphic $4$-uniform hypergraphs, $H_1$ and $H_2$, that satisfy the same Venn diagram ($\mathcal{V}_5$ in Figure~\ref{fig:allVenns}) and that have a common quotient $H_3$. Specifically, let $\rho_1$ be the partition of $V(H_1)$ that contains blocks $\{a,b\}$, $\{c,d\}$, and a singleton block for each remaining vertex. Let furthermore $\rho_2$ be the partition of $V(H_2)$ that contains block $\{e,f\}$, and a singleton block for each remaining vertex. Then $H_1/\rho_1 \cong H_2/\rho_2 \cong H_3.$}
    \label{fig:quotient_example}
\end{figure} 

\noindent At first glance, in combination with Dedekind Interpolation and the fine-grained complexity results on homomorphism counting, Equations~\eqref{eq:intro_uncoloured_hombasis} and~\eqref{eq:intro_uncoloured_coeff} appear to yield a direct strategy to prove our main result: 
\begin{enumerate}
    \item Show that, for each $\mathcal{V}$ and $r>0$, we have that $\alpha_{\mathcal{V},r}(F)\neq 0$ implies that $F$ has generalised hypertreewidth at most $2$.
    \item Show that, for each \emph{degenerate} $\mathcal{V}$ and $r>0$, we have that $\alpha_{\mathcal{V},r}(F)\neq 0$ implies that $F$ is $\alpha$-acyclic.
    \item Show that, for each \emph{non-degenerate} $\mathcal{V}$ and $r>0$, there is a non-$\alpha$-acyclic $F$ with $\alpha_{\mathcal{V},r}(F)\neq 0$. 
\end{enumerate}

Indeed, we are able to prove our upper bounds (Theorem~\ref{intro:upper}) via 1.\ and 2.\ by bounding the generalised hypertreewidth of the quotients of hypergraphs that satisfy the respective Venn diagrams. 

However, the majority of the work of this paper, and our main technical contribution, is the proof of the lower bound\footnote{This resembles the situation in the realm of graph motif counting: establishing lower bounds via the framework of graph motif parameters turned out to be the main challenge in the majority of recent works, see e.g.\ \cite{FockeR24,DoringMW24,BressanGMR24}.} (Theorem~\ref{intro:lower}): Step 3.\ above presents the challenge of determining which hypergraphs $F$, given $\mathcal{V}$ and $r$, have a non-zero coefficient $\alpha_{\mathcal{V},r}(F)$. Taking a closer look at Equation~\eqref{eq:intro_uncoloured_coeff}, we observe that each term
\[ \auts{H}^{-1}\cdot\sum_{\substack{\rho \in P(V(H))\\H/\rho \cong F}} \mu(\rho)\]
contributes to the sum with a sign of $(-1)^{|V(H)|-|V(F)|}$ as $\mathsf{sign}(\mu(\rho))=(-1)^{|V(H)|-|V(F)|}$~\cite[Equation A.2]{Lov}. However, this means that the overall coefficient $\alpha_{\mathcal{V},r}(F)$ is an alternating sum, potentially allowing for cancellations, as there might be graphs $H_1$ and $H_2$ in $S(\mathcal{V},r)$ with $|V(H_1)|=|V(H_2)|\pm 1$ that both have $F$ as a quotient. Note that this is possible even if we restrict ourselves to uniform hypergraphs: Figure~\ref{fig:quotient_example} contains an example of $4$-uniform hypergraphs $H_1$ and $H_2$ with  $|V(H_1)|=|V(H_2)|+ 1$ that satisfy the same Venn diagram and  have a common quotient $H_3$.

This subtle, but crucial, observation makes the analysis of the support of the coefficient function $\alpha_{\mathcal{V},r}$ much more challenging.\footnote{We point out that Bressan et al.\ \cite{Bressanetal26} did not have to overcome this problem as they were working with individual subgraph patterns which allowed them to rule out any cancellations by relying solely on the sign of the M\"obius function.}  We ultimately overcome this problem by introducing an edge-colourful intermediate version of $\#\textsc{HyperMotif}(\mathcal{V})$, denoted by $\#\textsc{ColHyperMotif}(\mathcal{V})$, and we show that
\[\#\textsc{ColHyperMotif}(\mathcal{V}) \fptlinred \#\textsc{HyperMotif}(\mathcal{V})\,,\]
where $\fptlinred$ denotes parameterised linear-time reducibility. This will allow us to work solely with the colourful version $\#\textsc{ColHyperMotif}(\mathcal{V})$ for the purpose of our lower bounds. The technical details of the coloured version are provided in Section~\ref{sec2:prelims}; in a nutshell, we operate on input hypergraphs $G$ for which each hyperedge has one of $3$ colours, and we only count the subhypergraphs that satisfy the specified Venn diagram and contain precisely one edge per colour.

We are then able to express the number of colourful hypergraph motifs as a linear combination of colourful homomorphism counts via so-called \emph{hypergraph  exas}, defined in Section~\ref{sec:fractures}. The concept of fractures (on simple graphs) was originally introduced by the authors of~\cite{Peyerimhoffetal23} in order to express edge-colourful subgraph counts as linear combinations of edge-colourful homomorphism counts. While we are eventually able to translate this method from graphs to hypergraphs, we are, in the process, required to circumvent a variety of technical problems that arise, among others, due to the fact that~\cite{Peyerimhoffetal23} considers only simple graphs and, specifically, omits graphs with self-loops. 

Ultimately, the colourful approach allows us to drastically simplify the coefficient formula~\eqref{eq:intro_uncoloured_coeff}; specifically, we will be able to drop the terms $\auts{H}^{-1}$, which, while still yielding an alternating sum, gives us much more control of analysing the coefficients. Eventually, we show that for each non-degenerate Venn diagram $\mathcal{V}$, there exists a non-$\alpha$-acyclic hypergraph $F$ in the (colourful) homomorphism expansion of $\#\textsc{ColHyperMotif}(\mathcal{V})$.

Last but not least, we invoke Dedekind Interpolation~\cite[Appendix A]{Bressanetal26full} to show that linear combinations of \emph{colourful} hypergraph homomorphism counts also satisfy the property that the computation of the linear combination is precisely as hard as the computation of its hardest term. In summary, we establish the following chain of parameterised linear-time reductions: 

\begin{lemma}\label{intro:red_chain}
    Let $\mathcal{V}$ be a non-degenerate Venn diagram. There exists a non-$\alpha$-acyclic hypergraph $F$ such that
    \[\#\textsc{Hom}(\{F\}) \fptlinred \#\textsc{cpHom}(\{F\}) \fptlinred \colmotifprob{\mathcal{V}} \fptlinred \motifprob{\mathcal{V}}\,.\]
\end{lemma}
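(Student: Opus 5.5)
We establish the three reductions separately and then make the choice of $F$, which is where the real work lies.

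\emph{First reduction}, $\#\textsc{Hom}(\{F\}) \fptlinred \#\textsc{cpHom}(\{F\})$. This is the standard colour-prescribed trick. Given $G$, we form the tensor product $F\times G$: its vertices are pairs $(v,u)$, and it has a hyperedge for each hyperedge of $F$ paired with a hyperedge of $G$ under a compatible bijection. We colour $(v,u)$ by $v$. Colour-prescribed homomorphisms from $F$ to $F\times G$ are then in bijection with homomorphisms from $F$ to $G$. Since $F$ is fixed, $|F\times G|\le f(|F|,\mathsf{rank}(G))\cdot|G|$, so the reduction runs in FPT-linear time.

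\emph{Last reduction}, $\colmotifprob{\mathcal{V}} \fptlinred \motifprob{\mathcal{V}}$. We use inclusion--exclusion over colour classes. For $S\subseteq[3]$, let $G_S$ be the subhypergraph of $G$ keeping only the hyperedges with colours in $S$. Every motif in $G$ uses three distinct hyperedges, so its colour set is some $S$. We want to isolate the motifs that use three distinct colours, one hyperedge per colour, counted with the correct multiplicity relative to the coloured Venn labelling. Colours alone may not suffice for this: when $\mathcal{V}$ has symmetries, the colourful count must fix which coloured edge plays which role. In that case we additionally use gadget padding, for example adding fresh private vertices to all edges of one colour class, which toggles specific Venn regions. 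Solving the resulting linear system by interpolation expresses the colourful count from constantly many calls to $\motifprob{\mathcal{V}}$. Each call is on a hypergraph of linear size, and the rank rises by at most a constant.

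\emph{Middle reduction}, $\#\textsc{cpHom}(\{F\}) \fptlinred \colmotifprob{\mathcal{V}}$. We first express the colourful motif count as a finite linear combination of colourful homomorphism counts, $\sum_F \beta(F)\cdot\cphoms{F}{}{G}$, via the hypergraph fracture (exa) expansion. Each colourful host corresponds to a fixed $3$-edge-coloured pattern $H$ of rank at most $r$ satisfying $\mathcal{V}$, and the quotients $H/\rho$ are taken only over partitions that are compatible with the colour classes. Because edges are labelled by colours, the automorphism factors disappear, and $\beta(F)$ becomes a sum of Möbius values over pairs $(H,\rho)$. We then invoke Dedekind interpolation for colourful counts \cite[Appendix A]{Bressanetal26full}: taking tensor products of $G$ with suitable coloured hypergraphs isolates each term $\cphoms{F}{}{G}$ with $\beta(F)\ne0$ in FPT-linear time.

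\emph{Choice of $F$ (main obstacle).} It remains to exhibit a non-$\alpha$-acyclic $F$ with $\beta(F)\neq0$ for each non-degenerate $\mathcal{V}\in\{\mathcal{V}_1,\dots,\mathcal{V}_{14}\}$. Non-degeneracy means that for each ordered pair $i\ne j$, the region of $e_i$ outside $e_j$ is nonempty.

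\begin{itemize}
\item[(1)] \textbf{Pick a minimal cyclic quotient.} We choose $F$ as the "minimal" cyclic quotient: for each $i$, retain exactly one vertex in each nonempty region, then identify vertices so that each pair $e_i,e_j$ shares a private vertex while no vertex lies in all three edges. This gives a triangle-like (Berge) cycle $\{a,b\},\{b,c\},\{c,a\}$ padded with private vertices, or, when the triple intersection is forced nonempty, a configuration whose cyclicity is witnessed after removing the common vertex. I expect this triple-intersection case to be the hardest; there I would instead aim for a hyperclique-type $F$.
\item[(2)] \textbf{Show no cancellation.} We aim to prove that $F$ has the fewest vertices among all coloured quotients of patterns satisfying $\mathcal{V}$ that are isomorphic to $F$. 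The only pairs $(H,\rho)$ with $H/\rho\cong F$ then have $|V(H)|=|V(F)|$ and trivial $\rho$, so $\beta(F)$ is a positive count. If minimality fails, we instead argue that all contributing $H$ have the same vertex count, which forces the Möbius signs to agree.
\item[(3)] \textbf{Case check.} Finally, we verify $\alpha$-cyclicity of the chosen $F$ for each of the 14 diagrams, grouping the cases by whether the triple intersection is forced empty.
\end{itemize}

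Composing the three reductions with this choice of $F$ yields the chain in Lemma~\ref{intro:red_chain}.
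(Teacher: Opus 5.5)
There is a genuine gap. Your three reductions follow the paper's outline: the colour-prescribed tensor construction, Dedekind interpolation over $H$-coloured tensor products, and inclusion--exclusion over colour classes. What is missing is the actual content of the lemma, namely that for \emph{every} non-degenerate $\mathcal{V}$ some cyclic $F$ survives with a non-zero coefficient.

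Your primary no-cancellation route in (2) assumes only trivial $\rho$ contributes, so that $F$ itself satisfies $\mathcal{V}$. This is impossible for $\mathcal{V}_5,\mathcal{V}_6,\mathcal{V}_{11},\mathcal{V}_{12},\mathcal{V}_{13},\mathcal{V}_{14}$. Each of these forces some pairwise-only region to be empty, i.e.\ $e_i\cap e_k\subseteq e_j$. Then $e_i - e_j - e_k$ is a join tree, so every hypergraph satisfying $\mathcal{V}$ is acyclic. In these cases a cyclic $F$ only ever arises from non-trivial fractures, and everything rests on your fallback that ``all contributing patterns have the same number of vertices''. You assert this but do not show it.

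Verifying it is exactly the work of the paper's Section~4:
\begin{itemize}
\item Fix the colouring hypergraph $H\in\{\Delta(j_1,j_2,j_3),\Gamma(j_1,j_2,j_3)\}$.
\item Use $\coeff_{H,\mathcal{V}}(\vec\top)=\sum_{\vec\sigma\in\mathcal{L}(H,\mathcal{V})}\prod_{v}(-1)^{|\sigma_v|-1}(|\sigma_v|-1)!$.
\item Enumerate $\mathcal{L}(H,\mathcal{V})$ explicitly. For example, for $H=\Gamma(0,0,0)$ you must check that the apex $a$ is never split, and that exactly one, exactly two, or all three of $b,c,d$ are split for $\mathcal{V}_{11},\mathcal{V}_{12},\mathcal{V}_{13}$ respectively. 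This gives coefficients $-3$, $3$, $-1$.
\end{itemize}

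Your recipe in (1) also fails to produce a candidate in some cases. With exactly one vertex per non-empty region, you cannot create private common vertices for two or three forced-empty pairs. Both $\mathcal{V}_{12}$ and $\mathcal{V}_{13}$ need this: the fractures of $\Gamma(0,0,0)$ satisfying them have two private vertices in one edge, respectively in every edge. No quotient of the one-vertex-per-region pattern for $\mathcal{V}_{12}$ is cyclic. The triple-intersection case is left as an intention (``aim for a hyperclique-type $F$''). Step (3) checks only cyclicity, not non-vanishing of the coefficient.

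Two smaller points:
\begin{itemize}
\item In the first reduction, hypergraph homomorphisms may collapse vertices within an edge, since $|\varphi(e)|<|e|$ is allowed. So $e_F$ must be paired with $e_G$ via all \emph{surjections} $e_F\to e_G$, not only bijections.
\item In the last reduction the padding gadgets are unnecessary. Membership in $\colmotif{\mathcal{V}}{H}{(G,c)}$ only requires $c(A)=E(H)$ and $\mathcal{V}(G[A])=1$, which is existential over orderings. So $\sum_{J\subseteq E(H)}(-1)^{|J|}\motifs{\mathcal{V}}{G\setminus J}$ already suffices.
\end{itemize}
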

\noindent In the reduction chain above, $\#\textsc{cpHom}(\{F\})$ denotes the colourful version of counting hypergraph homomorphisms from $F$.

Theorem~\ref{intro:lower} then follows from the aforementioned work of Brault-Baron~\cite{BraultBaron13} (see also Mengel's survey~\cite{Mengel,Mengel_full}) stating that homomorphisms from non-$\alpha$-acyclic hypergraphs cannot be counted in near-linear time unless either one of the Triangle Hypothesis or the Hyperclique Hypothesis fail.

\subsection{Conclusion and Future Work}
We have provided an exhaustive and explicit classification of the fine-grained complexity of counting hypergraph motifs with $3$ hyperedges, precisely determining which cases admit (FPT-)near-linear-time algorithms under standard lower bound assumptions. Moreover, by translating the counting problem to the evaluation of a finite linear combination of homomorphism counts, we have obtained an (FPT-)near-quadratic-time algorithm for \emph{all} Venn diagrams that ultimately relies on an extension of Yannakakis' celebrated algorithm.

While we consider the case of $k=3$ solved, we strongly believe that the generalised hypergraph motif counting problem will benefit from further work. Specifically, we propose the search for an explicit tractability criterion as a promising future research direction. To facilitate the next steps, we emphasise that the framework of fractured hypergraphs introduced in our paper applies with only trivial modifications to generalised hypergraph motif counting problems, and we suspect that it will serve as a useful tool towards a full, and explicit, classification of $\#\textsc{GeneralisedHyperMotif}(\mathfrak{C})$.

\subsection{Organisation of the paper}
Section~\ref{sec2:prelims} introduces the necessary definitions and concepts used throughout this paper. Section~\ref{sec3:col_hom_basis} includes the set up of the coloured version of our problem and an introduction to Dedekind Interpolation for isolating (colourful) homomorphism counts with non-zero coefficients in finite linear combinations. In Section~\ref{sec4:non_deg_Venn}, we analyse the coefficients in said linear combinations for all non-degenerate Venn diagrams. Section~\ref{sec5:proof} provides the statement and proof of the obtained bounds for solving $\motifprob{\mathcal{V}}$. Finally, Section~\ref{sec6:general_case} addresses generalised hypergraph motifs and includes the proof of Theorem~\ref{thm:main_GHMs_intro}.

{\small
\tableofcontents
}
\pagebreak

\section{Preliminaries} \label{sec2:prelims}

For $k\in\mathbb{N}$, we write $[k]$ for the set $\{1,2,...,k\}$.
The cardinality of a set $S$ is denoted by $|S|$ or $\#S$. Given a function $f: A \to B$ and a subset $C \subseteq A$, we write $f|_C$ for the restriction of $f$ to $C$. Given a function $f: A \times B \to C$ and an element $a \in A$, we write $f(a,\star):B \to C$ for the function that maps $b$ to $f(a,b)$. Following standard notation in set theory, given a set of sets $J$, we write $\cup J$ for the union $\bigcup_{S \in J}S$.

A \textit{partition} $\rho$ of a finite\footnote{Naturally, partitions can also be defined for infinite sets; however, we will only consider partitions of finite sets in the present work.} set $S$ is a set of pairwise disjoint blocks $B_i\subseteq S$ such that $\bigcup B_i=S$. For two partitions $\rho$ and $\sigma$, we say $\rho$ \textit{refines} or is \textit{finer} than $\sigma$ if every block in $\rho$ is a subset of a block in $\sigma$. This is denoted by $\rho\leq \sigma$. On the other hand, $\sigma\geq \rho$ if $\sigma$ is \textit{coarser} than $\rho$. The partition that has all elements of the set contained in one block is $\{S\}=\top_S$, while the partition with each element in its own block is denoted $\bot_S$; we drop $S$ from the subscript if it is clear from the context. Finally, given a set $S$, we write $P(S)$ for the set of all partitions of $S$. The partially ordered set $(P(S),\leq)$ is called the \emph{partition lattice}\footnote{We will not need any specific properties of lattices and refer the reader to \cite{Stanley} for a comprehensive introduction to lattices.} of $S$.

\subsection{Hypergraphs}
We will use capital letters $F$, $G$, and $H$ to denote hypergraphs. A hypergraph $G$ is a pair of a finite set of vertices $V$ and a set of hyperedges $E$, where each hyperedge $e\in E$ is a non-empty subset of $V$. For the remainder of the paper, we will refer to hyperedges just as ``edges''. We write $V(G)$ and $E(G)$, respectively, to denote the vertex and edge set of $G$. For the purpose of this paper, we assume that hypergraphs do not have isolated vertices, that is, every vertex must be contained in at least one edge. Given a subset of edges $A \subseteq E(G)$, we write $G[A]$ for the hypergraph with vertices $\cup_{e\in A} e$ and edges $A$. Moreover, given a hypergraph $H$, we write $H\downarrow$ for the hypergraph obtained from $H$ by deleting all non-maximal edges, that is, we delete all edges $e$ such that $e\subsetneq e'$ for some $e' \in E(H)$.
The \emph{rank} of a hypergraph $G$, denoted by $\mathsf{rank}(G)$, is the maximum number of vertices contained in any edge of $G$.

We will also rely on quotients of hypergraphs as defined by Bressan et al.\ \cite{Bressanetal26}:
\begin{definition}[Hypergraph Quotients]
    \label{def:hypergraph_quotient}
    Let $G=(V,E)$ be a hypergraph and let $\sigma=\{B_1,\ldots,B_\ell\}$
    be a partition of
    $V$.
    For any set $X \subseteq V$, define
    \(X/\sigma := \{B_i \in \sigma \mid B_i \cap X \ne \emptyset\}.\)

    The \emph{quotient of~$G$ with respect to $\sigma$}, denoted by $G/\sigma$, has vertices $V(G/\sigma)=\sigma$, that is, each block of $\sigma$ becomes a vertex of $G/\sigma$. Moreover, $E(G/\sigma):=\{e/\sigma \mid e \in E(G)\}$, that is, a set of blocks becomes an edge of the quotient if there is an edge in $G$ that has a non-empty intersection with each of the blocks.
\end{definition}

\subsubsection{Homomorphisms, Embeddings, and Subgraphs}

A \emph{homomorphism} from a hypergraph $H$ to a hypergraph $G$ is a mapping $\varphi:V(H)\rightarrow V(G)$ such that for all edges $e=\{v_1,\dots,v_\ell\}\in E(H)$, we have $\varphi(e)=\{\varphi(v_1),\dots,\varphi(v_\ell)\}\in E(G)$. The term $\hom{H}{G} $ denotes the set of all homomorphisms from $H$ to $G$.

A homomorphism $\varphi$ from $H$ to $G$ is called an \emph{embedding} if it is injective; we write $\emb{H}{G}$ for the set of all embeddings from $H$ to $G$. Moreover, we call $\varphi$ an \emph{isomorphism} if it is bijective and additionally satisfies that $e \in E(H)$ if and only if $\varphi(e) \in E(G)$; $H$ and $G$ are called isomorphic, denoted by $H \cong G$, if an isomorphism from $H$ to $G$ exists. An isomorphism from a hypergraph $H$ to itself is called an automorphism, and we write $\aut{H}$ for the set of all automorphisms of $H$.

A \emph{subhypergraph} of $G=(V,E)$ is a hypergraph $G'=(V',E')$ with $V' \subseteq V$ and $E' \subseteq E$. We write $\sub{H}{G}$ for the set of all subgraphs of $G$ that are isomorphic to $H$.
The following well-known identity (cf.\ \cite[Section 3]{Bressanetal26full}) relates the number of subgraphs and the number of embeddings via the number of automorphisms:
 \begin{equation}
     \embs{H}{G}=\auts{H}\cdot\subs{H}{G}. \label{sub=emb/aut}
 \end{equation}

Moreover, lifting a classical result of Lov{\'{a}}sz (see~\cite[Chapter 5.2.3]{Lov}) from graphs to hypergraphs, the following transformation was shown by Bressan et al.\ \cite{Bressanetal26}:
 \begin{equation}
    \embs{H}{G}=\sum_{\rho\in P(V(H))} \mu(\rho)\cdot\homs{H/\rho}{G}, \label{embintermsofhom}
\end{equation}
where $\rho$ is a partition of the vertices of $H$ from the set of all partitions $P(V(H))$, and $\mu$ is the Möbius function of the partition lattice (see e.g.\ \cite[Chapter A.1]{Lov} and~\cite[Chapter 3]{Stanley}).

\subsubsection{Generalised Hypertreewidth and Hypergraph Acyclicity}
\label{sec:ghtw}
Hypertree decompositions lift the notion of tree decompositions from graphs to hypergraphs. In this work, we will rely on \emph{generalised} hypertreewidth:

\begin{definition}[Hypertree Decompositions and Generalised Hypertreewidth]
A \emph{hypertree decomposition} of a hypergraph $H$ is a pair $\mathcal{T}=(T,\{B_t\}_{t \in V(T)})$ where $T$ is a tree, and $\{B_t\}_{t \in V(T)}$ is a collection of subsets of $V(H)$, called \emph{bags}, such that the following conditions are satisfied:
\begin{enumerate}
    \item $\bigcup_{t \in V(T)} B_t = V(H)$.
    \item For each edge $e \in E(H)$, there is a bag $B_t$ such that $e \subseteq B_t$.
    \item For each vertex $v \in V(H)$, the subgraph $T[\{t \in V(T) \mid v \in B_t\}]$ is connected.
\end{enumerate}
The \emph{width} of a bag $B_t$ is the minimum number of edges of $H$ that cover $B_t$, that is
\[ \mathsf{width}(B_t)=\min\{|A| \mid A \subseteq E(H)~\wedge~B_t \subseteq \cup A\}\,.\]
The width of $\mathcal{T}$ is the maximum width of any bag of $\mathcal{T}$, and the \emph{generalised hypertreewidth} of $H$, denoted by $\mathsf{ghtw}(H)$, is the minimum width over all hypertree decompositions of $H$.  
\end{definition}

While there are several notions of hypergraph acyclicity present in the literature (see e.g.\ \cite{BeeriFMY83,Brault-Baron16}), we focus in this work on $\alpha$-\emph{acyclicity}. There are various equivalent definitions of $\alpha$-acyclicity, using e.g.\ GYO reductions~\cite{Graham1979,YuO79,BeeriFMY83} or join-trees~\cite{Yannakakis}. For this work, we rely on the characterisation of $\alpha$-acyclicity via generalised hypertreewidth~\cite{GottlobLS02}:
\begin{definition}[$\alpha$-acyclic hypergraphs]
    A hypergraph $H$ is called $\alpha$-acyclic if $\mathsf{ghtw}(H)\leq 1$.
\end{definition}

\begin{remark}
For the remainder of the paper, to avoid notational clutter, we will drop the $\alpha$ and just refer to \emph{acyclic} hypergraphs.
\end{remark}

\subsubsection{Colourings of Hypergraphs}
An $H$-coloured hypergraph is a pair ($G, c$) of a hypergraph $G$ and homomorphism $c\in\mathsf{Hom}(G\to H)$, called the $H$\emph{-colouring} of $G$.
Let $(F,c_F)$ and $(G,c_G)$ be $H$-coloured hypergraphs. We say that $(F,c_F)$ and $(G,c_G)$ are \emph{isomorphic} if there exists an isomorphism $\pi$ from $F$ to $G$ that additionally satisfies $c_G(v)=v_F(\pi(v))$ for all vertices $v \in V(G)$. A homomorphism $\varphi\in \hom{F}{G}$ is called \emph{colour-prescribed} if $c_F(v)=c_G(\varphi(v))$ for all $v\in V(F)$. We write $\cphom{(F,c_F)}{H}{(G,c_G)}$ for the set of all colour-prescribed homomorphisms from $(F,c_F)$ to $(G,c_G)$. Similarly as in the uncoloured case, we call a colour-prescribed homomorphism a \emph{colour-prescribed embedding} if it is injective (on the vertices); we denote the set of all colour-prescribed embeddings from $(F,c_F)$ to $(G,c_G)$ as $\cpemb{(F,c_F)}{H}{(G,c_G)}$.

\begin{obs}\label{obs:induced_edge-colouring}
    Let $(G,c)$ be an $H$-coloured hypergraph. Then $c$ also induces an edge-colouring of $G$ into (at most) $|E(H)|$ distinct colours by assigning an edge $e \in E(G)$ the colour $c(e)$ --- recall that $c(e)\in E(H)$ as $c$ is a homomorphism.
\end{obs}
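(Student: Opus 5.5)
The plan is to check directly that the induced colouring is well defined. Fix an $H$-coloured hypergraph $(G,c)$, so $c\in\hom{G}{H}$. For every edge $e\in E(G)$, the definition of a hypergraph homomorphism says that the image $c(e)=\{c(v)\mid v\in e\}$ is an edge of $H$; we do not only get a subset of an edge, because homomorphisms here are not trimmed. So the map $\hat c: E(G)\to E(H)$ with $e\mapsto c(e)$ is a total, well-defined function. It assigns each edge of $G$ exactly one element of $E(H)$, and we regard that element as the colour of $e$.

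It then remains to count the colours. The image of $\hat c$ is contained in $E(H)$, so $\hat c$ uses at most $|E(H)|$ distinct colours. This gives the claimed edge-colouring. The number can be strictly smaller than $|E(H)|$ if $\hat c$ is not surjective, which is why the statement says ``at most''.

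I do not expect any step to be a real obstacle. The one point to state explicitly is the convention fixed earlier: homomorphisms must map edges onto edges, not merely into edges. This is exactly what makes $c(e)$ land in $E(H)$, so it is the only hypothesis the argument needs.
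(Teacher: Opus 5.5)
Your proposal is correct and matches the paper's one-line justification: because homomorphisms in this paper are required to map each edge onto an edge of $H$, not merely into one, the assignment $e \mapsto c(e)$ is a well-defined map $E(G)\to E(H)$. It therefore uses at most $|E(H)|$ distinct colours.
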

For the purpose of this work, we will mostly restrict ourselves to edgewise-injectively coloured hypergraphs, defined below:

\begin{definition} Let $(G,c)$ be an $H$-coloured hypergraph. We call $(G,c)$
an \emph{edgewise-injectively $H$-coloured hypergraph} if $c|_e$ is injective for all $e \in E(G)$, that is, no edge of $G$ contains two vertices with the same colour. We write $\mathbb{G}_H$ for the set of all isomorphism types of edgewise-injectively $H$-coloured graphs.
\end{definition}

\begin{remark}[$\cphom{H}{H}{(G,c)}$]
    We consider the graph $H$ itself as an edgewise-injectively $H$-coloured graph $(H,\mathsf{id}_H)$, where $\mathsf{id}_H$ is the identify on $V(H)$. To avoid notational clutter, we allow ourselves to just write $\cphom{H}{H}{(G,c)}$, rather than $\cphom{(H,\mathsf{id}_H)}{H}{(G,c)}$.
\end{remark}

\subsubsection{Fractures of Hypergraphs}
\label{sec:fractures}
Similar to fractures in graphs \cite{Peyerimhoffetal23}, we define hypergraph fractures and fractured hypergraphs.

\begin{definition}[Hypergraph Fractures]
    Let $H$ be a hypergraph. A \textit{fracture} of $H$ is a tuple of partitions
    $$\vecrho=(\rho_v)_{v\in V(H)},$$
    where $\rho_v$ is a partition of the set of edges incident to $v$. We write $\mathcal{L}(H)$ for the set of all fractures of $H$.

    We say that a fracture $\vecrho\in \mathcal{L}(H)$ \emph{refines} a fracture $\vecsigma\in \mathcal{L}(H)$, denoted by $\vecrho\leq \vecsigma$ if $\rho_v \leq \sigma_v$ for all $v \in V(H)$. The \emph{lattice of fractures} of a hypergraph $H$ is the partially ordered set $(\mathcal{L}(H),\leq)$.
\end{definition}

Each fracture of $H$ defines a fractured hypergraph; in a nutshell, this operation can be thought of as an inverse of the quotient operation where vertices are split, rather than merged, and the partitions of the fracture guides on how the splitting operation is executed. The formal definition is provided below:
\begin{definition}[Fractured Hypergraphs $H \natural \vecrho$]
Given a fracture $\vecrho=(\rho_v)_{v\in V(H)}$ of a hypergraph $H$, we define the \textit{fractured hypergraph} $H\natural\vecrho$ as follows:
\begin{itemize}
    \item The vertex set of $H\natural\vecrho$ is
    $V(H\natural\vecrho)=\{v_B\ |\ B\in\rho_v, v\in V(H)\}$.
    \item The edges of $H\natural\vecrho$ are defined as follows: For each edge $e=\{v^1,\dots,v^\ell\}$ of $H$ and $i\in [\ell]$, let $B_i$ be the block of $\rho_{v^i}$ that contains $e$. We then add $\{v^1_{B_1},\dots,v^\ell_{B_\ell}\}$ as an edge to $H\natural\vecrho$. 
\end{itemize}
\end{definition}

Figure \ref{fig:fractured_hypergraph_example} shows the fractured hypergraph $H\natural\vecrho$ obtained by taking the fracture of hypergraph $H$ with respect to $\vecrho=\big(\{\{e_1\},\{e_2\}\},\top,\dots,\top\big)$, that is $\vecrho_a=\{\{e_1\},\{e_2\}\}$, and $\vecrho_v = \top$ for all $v \in V(H)\setminus\{a\}$. The first partition of the fracture specifies that vertex $a$ is split into two copies, one appearing in each of its incident edges. The partitions corresponding to the remaining vertices are $\top$, meaning that all incident edges are contained in a single block. Thus, no other vertex is split.

\begin{figure}[t]
    \centering
    \includegraphics[width=0.65\linewidth]{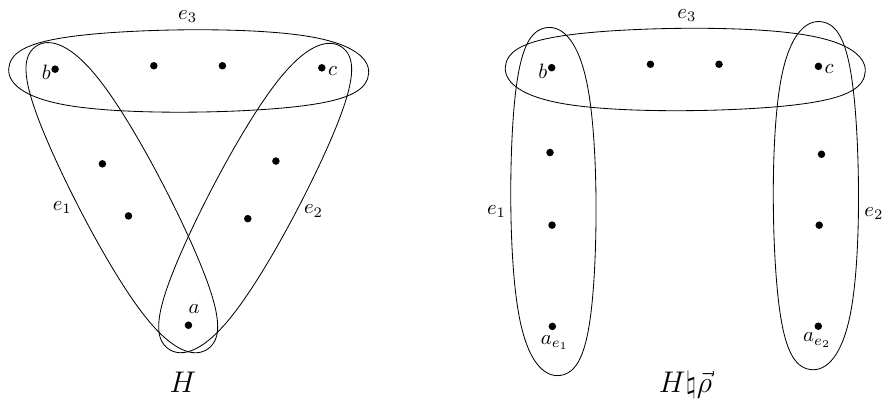}
    \caption{Hypergraph $H$ and the fractured hypergraph $H\natural\vecrho$ with respect to the fracture $\vecrho=\big(\{\{e_1\},\{e_2\}\},\top,\dots,\top\big).$}
    \label{fig:fractured_hypergraph_example}
\end{figure}

We observe that any fractured graph $H \natural \vecrho$ admits a canonical edgewise-injective $H$-colouring:
\begin{obs}
    Given a hypergraph $H$ and a fracture $\vecrho$ of $H$, the mapping 
    \begin{align*}
        c_{H,\vecrho}: V(H\natural \vecrho) &\to V(H)\\
        v_B & \mapsto v
    \end{align*}
    is an edgewise-injective $H$-colouring of $H \natural \vecrho$. 
\end{obs}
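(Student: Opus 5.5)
We need to check that $c_{H,\vecrho}$ is a homomorphism from $H\natural\vecrho$ to $H$, and that it is injective on every edge of $H\natural\vecrho$. Both facts follow directly from the definition of the fractured hypergraph, so the plan is simply to unfold that definition edge by edge.

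First, $c_{H,\vecrho}$ is well defined. Each vertex of $H\natural\vecrho$ has the form $v_B$ with $B\in\rho_v$. Distinct vertices $v$ carry distinct labels $v_B$, so the underlying vertex $v$ is determined by $v_B$. (If one worries about $v_B=w_{B'}$ for $v\neq w$, we formally treat $v_B$ as the pair $(v,B)$, so this cannot happen.)

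Second, $c_{H,\vecrho}$ is a homomorphism. Take any edge $f$ of $H\natural\vecrho$. By definition there is an edge $e=\{v^1,\dots,v^\ell\}\in E(H)$ such that $f=\{v^1_{B_1},\dots,v^\ell_{B_\ell}\}$, where $B_i$ is the block of $\rho_{v^i}$ containing $e$. Such a block exists because $\rho_{v^i}$ partitions the set of edges incident to $v^i$, and $e$ is one of them. Applying the map gives $c_{H,\vecrho}(f)=\{v^1,\dots,v^\ell\}=e\in E(H)$, as required.

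Third, the colouring is edgewise-injective. The vertices $v^1,\dots,v^\ell$ of $e$ are pairwise distinct, and for each $i$ exactly one vertex of $f$ is a copy of $v^i$, namely $v^i_{B_i}$. So the restriction of $c_{H,\vecrho}$ to $f$ maps $\ell$ distinct elements bijectively onto $e$, and is therefore injective.

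The only point needing care, and it is minor, is that edges of $H\natural\vecrho$ form a set. If two edges of $H$ gave the same edge of $H\natural\vecrho$, they would have the same image under $c_{H,\vecrho}$, hence be equal. So the correspondence between edges is consistent, although the argument above does not actually need this.
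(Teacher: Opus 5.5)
Your proof is correct. The fractured edge obtained from $e=\{v^1,\dots,v^\ell\}$ is $\{v^1_{B_1},\dots,v^\ell_{B_\ell}\}$, and $c_{H,\vec{\rho}}$ maps it bijectively onto $e$, which gives the homomorphism property and edgewise injectivity together; the paper states this as an observation without proof, and your direct unfolding of the definition is exactly the reasoning it leaves implicit.
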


\begin{remark}[$\cphom{H \natural \vecrho}{H}{(G,c_G)}$]
    Following the previous observation, we will always consider fractured graphs $H \natural \vecrho$ as edgewise-injectively $H$-coloured graphs w.r.t.\ $c_{H,\vecrho}$. To streamline notation, we will thus omit explicitly writing $c_{H,\vecrho}$ and set
    \[ \cphom{H \natural \vecrho}{H}{(G,c_G)} := \cphom{(H \natural \vecrho,c_{H,\vecrho})}{H}{(G,c_G)}\]
\end{remark}

\subsubsection{Venn Diagrams and Hypergraph Motifs}\label{sec:venns}
Following \cite{Lee}, informally, a \emph{Venn diagram} specifies which intersections of three given sets are non-empty, and the \emph{motifs} in a hypergraph $G$ specified by a given Venn diagram are precisely those $3$-edge subsets $\{e_1,e_2,e_3\}$ of $G$ whose intersections match the Venn diagram. We make this more formal below; consider Figures~\ref{fig:vd} and~\ref{fig:vdex} for an illustration, including an example.
\begin{definition}[Venn diagrams and $\motif{\mathcal{V}}{G}$]
    A \emph{Venn diagram} is a vector $\mathcal{V}\in \{0,1\}^7$, indexed by non-empty subsets of $[3]$, that is
    \[\mathcal{V}=(\mathcal{V}_J)_{\emptyset \neq J\subseteq [3]}\,.\]
    We say that a $3$-edge hypergraph $G$ satisfies $\mathcal{V}$, denoted by $\mathcal{V}(G)=1$, if there is an ordering $e_1,e_2,e_3$ of $E(G)$ such that for all non-empty $J \subseteq [3]$ we have
    \[\mathcal{V}_J=1 \Leftrightarrow\left(\bigcap_{i \in J}e_i \right)\setminus \left(\bigcup_{i \in [3]\setminus J} e_i\right)  \,. \]
    Furthermore, we define
    \[\motif{\mathcal{V}}{G}:=\{A \subseteq E(G) \mid \#A=3~\wedge~\mathcal{V}(G[A])=1\}\]
    for the set of all \emph{motifs} of $G$ satisfying $\mathcal{V}$.
\end{definition}

\begin{figure}
    \centering
    \includegraphics[width=0.25\linewidth]{VennDiagrams.pdf}
    \caption{A Venn diagram with its 7 sections}
    \label{fig:vd}
\end{figure}

\begin{figure}
    \centering
    \includegraphics[width=0.5\linewidth]{Venndiagram_example.pdf}
    \caption{A hypergraph and its associated Venn diagram. It can also be encoded as the binary vector, $(0,1,1,1,1,1,1)$ assuming sections are ordered with respect to the numbering used in Figure \ref{fig:vd}.}
    \label{fig:vdex}
\end{figure}
Similarly as in the work of Lee et al.\ \cite{Lee}, we only consider Venn diagrams that correspond to connected motifs, and we exclude Venn diagrams that are equivalent up to reordering of sets and those with with repeated (identical) edges. This leaves us (as well as Lee at al.\ \cite{Lee}) with precisely $26$ Venn diagrams as depicted in Figure~\ref{fig:allVenns}.

For our lower bounds, we will also rely on \emph{colourful motifs}, defined below. To this end, recall from Observation~\ref{obs:induced_edge-colouring} that an $H$-colouring $c$ of a hypergraph $G$ induces also a colouring of the edges of $G$.
\begin{definition}[$\colmotif{\mathcal{V}}{H}{(G,c)}$]
Let $H$ be a hypergraph, and let $(G,c)$ be an edgewise-injectively $H$-coloured hypergraph. Let furthermore $\mathcal{V}$ be a Venn diagram. We define
    $$\colmotif{\mathcal{V}}{H}{(G,c)} := \{A \subseteq E(G) \mid c(A)=E(H) ~\wedge~|A|=3 ~\wedge~\mathcal{V}(G[A])=1 \} \,.$$
\end{definition}

Next, for our complexity classification, we will distinguish on whether or not a Venn diagram is degenerate; Figure~\ref{fig:allVenns} depicts all non-degenerate Venn diagrams $\mathcal{V}_1,\dots,\mathcal{V}_{14}$, highlighted in red, and all degenerate Venn diagrams $\mathcal{V}_{15},\dots,\mathcal{V}_{26}$, highlighted in green.
\begin{definition}[Degenerate Venn Diagrams]\label{def:degenerate}
    A Venn diagram $\mathcal{V}$ is called \emph{degenerate} if every hypergraph~$G$ with~$\mathcal{V}(G)=1$ satisfies that there are edges $e_1$ and $e_2$ of $G$ such that $e_1 \subseteq e_2$.
\end{definition}

\subsection{Parameterised and Fine-grained Complexity Theory}\label{sec:param_and_finegrained}
For this work, we use the standard notation $\tilde{O}$ for hiding poly-logarithmic factors. We call algorithms running in time $\tilde{O}(n)$ and $\tilde{O}(n^2)$, respectively, \emph{near-linear-time} and \emph{near-quadratic-time} algorithms.

\subsubsection{Fixed-Parameter Tractability and Reductions}
We provide a brief introduction to fixed-parameter tractability and refer the reader to the standard textbook \cite{Cygan} for a comprehensive introduction.
A \emph{parameterised counting problem} is a pair $(P,\kappa)$ of a function (the counting problem) $P:\{0,1\}^*\to\mathbb{Q}$ and a parameterisation $\kappa:\{0,1\}^*\to\mathbb{N}$. For example, the problem $\#\textsc{Clique}$ asks, given as input a pair of a (non-hyper)graph $G$ and a positive integer $k$, to compute the number of complete $k$-vertex subgraphs of $G$, and the parameterisation is given by $\kappa(G,k)=k$. Following standard conventions, we will say that $k$ is the parameter of the problem.

A \emph{fixed-parameter tractable} (``$\mathsf{FPT}$'') algorithm for a parameterised counting problem $(P,\kappa)$ is a deterministic algorithm that, on input $x$, computes $P(x)$ in time $f(\kappa(x))\cdot |x|^{O(1)}$ for some computable function~$f$. We say that $(P,\kappa)$ is fixed-parameter tractable (FPT) if it can be solved by an FPT algorithm.

We call an FPT algorithm an \emph{FPT-near-linear time algorithm} if it runs in time $f(\kappa(x))\cdot \tilde{O}(|x|)$, and we call it an \emph{FPT-near-quadratic time algorithm} if it runs in time $f(\kappa(x))\cdot \tilde{O}(|x|^2)$.

A \emph{parameterised Turing-reduction} from $(P_1,\kappa_1)$ to $(P_2,\kappa_2)$ is an FPT algorithm for $(P_1,\kappa_1)$ with oracle access to $P_2$, satisfying that there is a computable function $g$ such that, on input $x$, for any oracle query $y$ posed to $P_2$ it holds that $\kappa_2(y)\leq g(\kappa_1(x))$, that is, the parameter of each oracle query can only depend on the parameter of the input. We write $(P_1,\kappa_1) \leq^{\mathsf{FPT}} (P_2,\kappa_2)$ if a parameterised Turing-reduction exists.

We also need a linear-time version of parameterised Turing-reductions. To this end, we will make use of the notion introduced by Aivasiliotis, G\"obel, Roth, and, Schmitt\ \cite{paramHolants,paramHolantsArxiv}:
\begin{definition}[Parameterised Linear-Time Reductions]
A parameterised Turing-reduction from
$(P_1,\kappa_1)$ to $(P_2,\kappa_2)$ is called a \emph{linear-time} parameterised Turing-reduction if there exists a computable function $f$ such that, on input $x$, the reduction runs in time $f(\kappa_1(x))\cdot O(|x|)$ and poses at most $f(\kappa_1(x))$ oracle queries. We write $(P_1,\kappa_1)\fptlinred(P_2,\kappa_2)$ if a linear-time parameterised Turing-reduction exists.
\end{definition}
\noindent Aivasiliotis et al.\ \cite[Lemma 2.8]{paramHolantsArxiv} prove that, $(P_1,\kappa_1)\fptlinred(P_2,\kappa_2)$ and $(P_2,\kappa_2)$ being solvable in FPT-near-linear time implies that $(P_1,\kappa_1)$ is solvable in FPT-near-linear time as well.

\subsubsection{Lower Bounds and Hardness Assumptions}
A parameterised counting problem $(P,\kappa)$ is $\#\mathsf{W}[1]$-hard if $\#\textsc{Clique}\leq^{\mathsf{FPT}}_\mathsf{T}(P,\kappa)$, and it is known that $\#\mathsf{W}[1]$-hard are not fixed-parameter tractable, unless the Exponential Time Hypothesis, defined below, fails~\cite{Chenetal05,Chenetal06}.

\begin{conj}[ETH~\cite{ImpagliazzoP01}]
    The \emph{Exponential Time Hypothesis} (ETH) asserts that the problem $3\textsc{-SAT}$ cannot be solved in time $\exp(o(n))$ where $n$ is the number of variables of the input formula.
\end{conj}

While ETH is sufficient for ruling out FPT algorithms, we need to rely on more specific lower bound assumptions from fine-grained complexity theory to rule out (near-)linear-time algorithms.

\begin{conj}[Triangle Hypothesis; cf.\ \cite{AbboudW14}]
    The \emph{Triangle Hypothesis} asserts that there is no algorithm that, given an input graph $G$ with $m$ edges, decides in time $\tilde{O}(m)$ if $G$ contains a triangle.
\end{conj}

\begin{conj}[Hyperclique Hypothesis; cf.\ \cite{Mengel_full,BraultBaron13}]
The \emph{Hyperclique Hypothesis} asserts that, for no pair $k>h>2$ of integers, there is an $\varepsilon>0$ and an algorithm that, given a $h$-uniform hypergraph $H$ with $n$ vertices, decides in time $\tilde{O}(n^{k-\varepsilon})$ whether $H$ contains a hyperclique\footnote{A hyperclique of size $k$ in an $h$-uniform hypergraph is a $k$-vertex subset $V'$ of $H$ such that ever subset $S\subseteq V'$ of size $h$ is an edge of $H$.} of size $k$.
\end{conj}

\subsubsection{Problem Definitions}

We associate each Venn diagram $\mathcal{V}$ with its (parameterised) motif counting problem, including the coloured version, defined below:

\begin{mdframed}
\#\textsc{HyperMotif}($\mathcal{V}$) \\
\textbf{Input:} A hypergraph $G$ \\
\textbf{Parameter:} $\mathsf{rank}(G)$ \\
\textbf{Output:} $\#\motif{\mathcal{V}}{G}$
\end{mdframed}  

\begin{mdframed}
\#\textsc{ColHyperMotif}($\mathcal{V}$) \\
\textbf{Input:} An edgewise-injectively $H$-coloured hypergraph $(G,c)$ \\
\textbf{Parameter:} $\mathsf{rank}((G,c))$ \\
\textbf{Output:} $\colmotifs{\mathcal{V}}{H}{(G,c)}$
\end{mdframed}

\noindent For the purpose of our proofs we will also consider the following two homomorphism counting problems. For both problems, $\mathcal{H}$ denotes a recursively enumerable class of graphs and, similarly to the Venn diagram~$\mathcal{V}$ for the above problems, $\mathcal{H}$ is part of the problem definition, and not part of the input; specifically, each class~$\mathcal{H}$ defines a unique problem:

\begin{mdframed}
\#\textsc{Hom}($\mathcal{H}$) \\
\textbf{Input:} A hypergraph $G$, and a hypergraph $H \in \mathcal{H}$ \\
\textbf{Parameter:} $|H|$\\
\textbf{Output:} $\homs{H}{G}$
\end{mdframed}  

\begin{mdframed}
\#\textsc{cpHom}($\mathcal{H}$) \\
\textbf{Input:} A hypergraph $H \in \mathcal{H}$, and an edgewise-injectively $H$-coloured hypergraph $(G,c)$ \\
\textbf{Parameter:} $|H|$ \\
\textbf{Output:} $\cphoms{H}{H}{(G,c)}$
\end{mdframed}

\section{A Colourful Homomorphism Basis for Fractured Hypergraphs}
\label{sec3:col_hom_basis}
Recall the definition of the edge-colourful version of $\motifs{\mathcal{V}}{G}$:
$$\colmotifs{\mathcal{V}}{H}{(G,c)} := \#\{A \subseteq E(G) \mid c(A)=E(H) ~\wedge~|A|=3 ~\wedge~\mathcal{V}(G[A])=1 \} \,.$$
Our goal in the section is twofold: first, we demonstrate how the function $(G,c) \mapsto \colmotifs{\mathcal{V}}{H}{(G,c)}$ can be translated into a linear combination of colour-prescribed homomorphism counts. Second, we invoke Dedekind Interpolation to show that the evaluation of said linear combination is tractable in linear time if and only if all colour-prescribed homomorphism counts with a non-zero coefficient can be computed in linear time.

As an initial step towards the first goal, given an an edgewise-injectively $H$-coloured hypergraph $(G,c)$, we show that, for any fracture $\vec{\sigma}$ of $H$, the term $\cphoms{H\natural\vec{\sigma}}{H}{(G,c)}$ can be expressed as a zeta-transformation of injective colour-prescribed homomorphisms over the lattice of fractures of $H$:
\begin{lemma} \label{lem:hom=sum(emb)}
    Let $H$ be a hypergraph with $|E(H)|=3$, and $G$ an edgewise-injectively $H$-coloured hypergraph equipped with $H$-colouring $c$. For all fractures $\vec{\sigma}$ of $H$, we can write that for the fractured hypergraph $H\natural\vec{\sigma}$,
    \begin{equation}
    \cphoms{H\natural\vec{\sigma}}{H}{(G,c)} = \sum_{\vec{\rho}\geq\vec{\sigma}}\cpembs{H\natural\vec{\rho}}{H}{(G,c)}, \label{hom=sum(emb)}
    \end{equation}
    where the relation $\geq$, is over the lattice of fractures $\mathcal{L}(H)$. 
\end{lemma}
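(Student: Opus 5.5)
We prove the identity by partitioning the set $\cphom{H\natural\vecsigma}{H}{(G,c)}$ according to the ``collapse pattern'' of each homomorphism, and showing that each class is in bijection with the colour-prescribed embeddings of one coarser fractured hypergraph. Fix a colour-prescribed homomorphism $\varphi$ from $H\natural\vecsigma$ to $(G,c)$. For each $v\in V(H)$ and each edge $e\ni v$ of $H$, let $v_e$ denote the copy $v_B$ of $v$ in $H\natural\vecsigma$ with $e\in B\in\sigma_v$; this is the vertex of $v$ lying in the edge of $H\natural\vecsigma$ corresponding to $e$.

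Define $\vecrho=\vecrho(\varphi)$ by declaring that $e,e'\ni v$ are in the same block of $\rho_v$ if and only if $\varphi(v_e)=\varphi(v_{e'})$. Since $v_e=v_{e'}$ whenever $e,e'$ share a block of $\sigma_v$, we get $\vecrho\geq\vecsigma$. Two copies $u_B,v_{B'}$ with $u\neq v$ have different colours, so colour-prescription forces $\varphi(u_B)\neq\varphi(v_{B'})$. Hence the only identifications made by $\varphi$ are among copies of the same $v$, and these are exactly the ones recorded by $\vecrho$.

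Next we check that $\varphi$ factors as $\psi\circ\pi$, where $\pi:V(H\natural\vecsigma)\to V(H\natural\vecrho)$ sends $v_B$ to $v_{B'}$ for the unique $B'\in\rho_v$ with $B\subseteq B'$. The map $\pi$ is a colour-preserving surjective homomorphism that sends the edge of $e$ to the edge of $e$. The induced map $\psi$ is well defined and injective by the definition of $\vecrho$ and the colour argument above. It is colour-prescribed because $\varphi$ is. It is a homomorphism because every edge of $H\natural\vecrho$ is the $\pi$-image of the corresponding edge of $H\natural\vecsigma$, whose $\varphi$-image is an edge of $G$. So $\psi\in\cpemb{H\natural\vecrho}{H}{(G,c)}$.

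Conversely, for any $\vecrho\geq\vecsigma$ and any $\psi\in\cpemb{H\natural\vecrho}{H}{(G,c)}$, the composition $\psi\circ\pi$ is a colour-prescribed homomorphism from $H\natural\vecsigma$. Its pattern is exactly $\vecrho$, because $\psi$ is injective, so $\psi\circ\pi$ identifies $v_e$ and $v_{e'}$ precisely when $\pi$ does, that is, when $e,e'$ share a block of $\rho_v$. The assignments $\varphi\mapsto(\vecrho(\varphi),\psi)$ and $(\vecrho,\psi)\mapsto\psi\circ\pi$ are mutually inverse, since $\pi$ is surjective and therefore determines $\psi$ from $\varphi$. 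Summing over $\vecrho\geq\vecsigma$ gives the stated identity.

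The main obstacle is a subtle point in how edges of fractured hypergraphs are defined. Distinct edges $e\neq e'$ of $H$ could become the same vertex set in $H\natural\vecrho$. Similarly, an edge of $H\natural\vecrho$ might not be ``the image of $e$'' in a unique way, which would break the edge correspondence used above. This case is harmless for the homomorphism property, since the image set is still an edge of $G$. It also does not affect the vertex-level bijection, which is the only thing being counted. The remaining care is to confirm that the copy $v_e$ is uniquely determined. It is, since $\sigma_v$ partitions the edges at $v$, and this is what makes $\vecrho(\varphi)$ a genuine fracture. The hypothesis $|E(H)|=3$ plays no role in the argument.
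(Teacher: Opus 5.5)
Your proof is correct and follows the paper's argument. You partition $\cphom{H\natural\vecsigma}{H}{(G,c)}$ by the coarsening $\vecrho\geq\vecsigma$ that records which copies of each vertex get identified, with colour-prescription ruling out all other identifications. You then biject each class with $\cpemb{H\natural\vecrho}{H}{(G,c)}$. Your explicit factorisation $\varphi=\psi\circ\pi$ and your check of the converse direction fill in details the paper leaves implicit.

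The worry in your last paragraph is vacuous, though. The colouring $c_{H,\vecrho}$ maps the edge of $H\natural\vecrho$ corresponding to $e$ onto $e$ itself, so distinct edges of $H$ always give distinct edges of $H\natural\vecrho$, and each edge is the image of a unique $e$.
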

\begin{proof}
    The proof is largely analogous to the graph case (\cite{Peyerimhoffetal23}, Claim 4.2). Define an equivalence relation $\sim$ on the set $\cphom{H\natural\vec{\sigma}}{H}{(G,c)}$, where $\varphi_1\sim\varphi_2$ if and only if
    \begin{equation}\label{eq:zeta_help}
        \forall u,v\in V(H\natural\vec{\sigma}): \varphi_1(u)=\varphi_1(v)\iff \varphi_2(u)=\varphi_2(v)\,.
    \end{equation}
    Writing $[\varphi]$ for the equivalence class represented by $\varphi$ we obtain:
    \begin{equation*}
        \cphoms{H\natural\vec{\sigma}}{H}{(G,c)} = \sum_{[\varphi]}|[\varphi]|.
    \end{equation*}
    Note that, since any $\varphi\in \cphom{H\natural\vec{\sigma}}{H}{(G,c)}$ is colour-prescribed, $\varphi(u)=\varphi(v)$ is only possible if $u=x^B$ and $v=x^{B'}$ for some vertex $x \in V(H)$ and blocks $B,B' \in \vec{\sigma}_x$. Consequently, any $\varphi \in \cphom{H\natural\vec{\sigma}}{H}{(G,c)}$ induces a coarsening $\vec{\rho}^\varphi \geq \vec{\sigma}$ as follows:
    \begin{itemize}
        \item For any $x \in V(H)$ and blocks $B,B' \in \vec{\sigma}_x$, we merge $B$ and $B'$ if and only if $\varphi(x^B)=\varphi(x^{B'})$. The resulting partition is $\vec{\rho}^\varphi_x$.
        \item $\vec{\rho}^\varphi=(\vec{\rho}_x^\varphi)_{x \in V(H)}$.
    \end{itemize}
    Now observe that $\varphi_1 \sim\varphi_2$ (see \eqref{eq:zeta_help}) if and only if $\vec{\rho}^{\varphi_1} = \vec{\rho}^{\varphi_2}$. Moreover, as argued before, all $\vec{\rho}^{\varphi}$ are coarsenings of $\vec{\sigma}$. Hence, we can represent the equivalence classes of $\sim$ by $\vec{\rho}\geq \vec{\sigma}$ as follows: \[C\left(\vec{\rho}\right) = \{\varphi \in \cphom{H\natural\vec{\sigma}}{H}{(G,c)} \mid \vec{\rho}^{\varphi}=\vec{\rho}\}\,.\] 
    This yields
    \begin{equation*}
        \cphoms{H\natural\vec{\sigma}}{H}{(G,c)} = \sum_{\vec{\rho}\geq \vec{\sigma}}|C(\vec{\rho})|.
    \end{equation*}
    Finally, we observe that there is a one-to-one correspondence between homomorphisms in $C(\vec{\rho})$ and injective homomorphisms in $\cpemb{H \natural \vec{\rho}}{H}{(G,c)}$: the bijection between those sets is given by the mapping that assigns $\varphi \in C(\vec{\rho})$ the injective homomorphism $b(\varphi) \in \cpemb{H \natural \vec{\rho}}{H}{(G,c)}$ that maps a vertex $x^{\hat{B}}$ to $\varphi(x^B)$, where $B$ is one of the blocks of $\vec{\sigma}_x$ that has been merged into $\hat{B}$.

    As a consequence, we obtain
    \begin{equation*}
        \cphoms{H\natural\vec{\sigma}}{H}{(G,c)} = \sum_{\vec{\rho}\geq \vec{\sigma}}\cpembs{H \natural \vec{\rho}}{H}{(G,c)}\,,
    \end{equation*}
    concluding the proof.
\end{proof}

Next, for expressing $\colmotifs{\mathcal{V}}{H}{(G,c)}$ as a linear combination of colour-prescribed homomorphisms, we will consider a subset of the lattice of fractures depending on which fractures graphs satisfy a given Venn diagram:
\begin{definition}[$\mathcal{L}(H,\mathcal{V})$]
    Let $H$ be a hypergraph with $|E(H)|=3$, and let $\mathcal{V}$ be a Venn diagram. We write $\mathcal{L}(H,\mathcal{V})$ for the set of fractures $\vec{\sigma}$ of $H$ such that $H\natural\vec{\sigma}$ satisfies $\mathcal{V}$.
\end{definition}

For the next step, we first note that every element $A\in \colmotif{\mathcal{V}}{H}{(G,c)}$ induces a fracture $\vec{\sigma}^A \in \mathcal{L}(H,\mathcal{V})$ as follows: Let $x \in V(H)$ and let $e,f$ be two edges of $H$ incident to $x$. From $A\in \colmotif{\mathcal{V}}{H}{(G,c)}$ is follows specifically that $c(A)=E(H)$ and $|A|=|E(H)|=3$. Hence $c$ induces a bijection from $A$ to $E(H)$ and $e$ and $f$ have unique pre-images $\hat{e}$ and $\hat{f}$ under $c$. We include $e$ and $f$ in the same block of $\vec{\sigma}^A_x$ if and only if $\hat{e}$ and $\hat{f}$ are connected to the same endpoint in $c^{-1}(x)$ --- note that it is well-defined to speak of \emph{the} endpoint of $\hat{e}$ and $\hat{f}$ in $c^{-1}(x)$ as $(G,c)$ is edgewise-injectively $H$-coloured. We further observe that this construction yields a canonical isomorphism from $H\natural\vec{\sigma}^A$ to $G[A]$, showing that $\vec{\sigma}^A \in \mathcal{L}(H,\mathcal{V})$ as $\mathcal{V}(G[A])=1$. 

\begin{lemma}\label{lem:colmotifs_to_homs}
    Let $\mathcal{V}$ be a Venn diagram, let $H$ be a hypergraph with $|E(H)|=3$, and let $(G,c)$ be an edgewise-injectively $H$-coloured hypergraph. We have
    \begin{equation}
    \colmotifs{\mathcal{V}}{H}{(G,c)}=\sum_{\vec{\sigma}\in\mathcal{L}(H,\mathcal{V})}\ \sum_{\vec{\rho}\geq\vec{\sigma}}\mu(\vec{\sigma},\vec{\rho})\cdot\cphoms{H\natural\vec{\rho}}{H}{(G,c)},
    \end{equation}
    where the Möbius function $\mu(\vec{\sigma},\vec{\rho})$ is over the lattice of fractures.\label{lem:colmotif}
\end{lemma}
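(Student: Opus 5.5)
The plan is to write the colourful motif count as a sum of colour-prescribed embedding counts, one for each fracture in $\mathcal{L}(H,\mathcal{V})$, and then to invert Lemma~\ref{lem:hom=sum(emb)} by M\"obius inversion on the fracture lattice.

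\emph{Step 1 (motifs as embeddings).} We aim to show
\[
\colmotifs{\mathcal{V}}{H}{(G,c)}=\sum_{\vec{\sigma}\in\mathcal{L}(H,\mathcal{V})}\cpembs{H\natural\vec{\sigma}}{H}{(G,c)}\,.
\]
To do so, we define a map from the disjoint union of the sets $\cpemb{H\natural\vec{\sigma}}{H}{(G,c)}$, over $\vec{\sigma}\in\mathcal{L}(H,\mathcal{V})$, to $\colmotif{\mathcal{V}}{H}{(G,c)}$. It sends $\varphi$ to the set of image edges $\{\varphi(e)\mid e\in E(H\natural\vec{\sigma})\}$.

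\begin{itemize}
\item \emph{Well-definedness.} $H\natural\vec{\sigma}$ has exactly three edges, coloured by the three distinct edges of $H$. Colour-prescription therefore gives $c(A)=E(H)$ and $|A|=3$, since the image edges have distinct colours. Injectivity of $\varphi$ together with the fact that $A$ covers exactly the image vertices implies $G[A]\cong H\natural\vec{\sigma}$. Because $\vec{\sigma}\in\mathcal{L}(H,\mathcal{V})$, we get $\mathcal{V}(G[A])=1$.
\item \emph{Surjectivity.} By the discussion preceding the lemma, each $A$ induces a fracture $\vec{\sigma}^A\in\mathcal{L}(H,\mathcal{V})$, and there is a canonical isomorphism $H\natural\vec{\sigma}^A\to G[A]$. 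This isomorphism is colour-prescribed, so it is a preimage of $A$.
\item \emph{Injectivity.} Suppose $\varphi\in\cpemb{H\natural\vec{\sigma}}{H}{(G,c)}$ maps to $A$. Then $\vec{\sigma}$ is forced to equal $\vec{\sigma}^A$: two edges $e,f$ at $x$ lie in the same block of $\sigma_x$ exactly when their images share the vertex of colour $x$, and this uses injectivity of $\varphi$. Moreover, $\varphi$ is forced to be the canonical map, since the image of every vertex $x_B$ is the unique vertex of colour $x$ in the image of any edge of $B$.
\end{itemize}

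This bijectivity check is the main point needing care. The subtle issue is that distinct fractures could a priori give isomorphic fractured hypergraphs. The colour-prescription rules this out, because it pins down which edge of $H\natural\vec{\sigma}$ goes to which edge of $A$. This is exactly why the colourful setting removes the automorphism factors.

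\emph{Step 2 (M\"obius inversion).} Lemma~\ref{lem:hom=sum(emb)} states that, as functions on the finite poset $\mathcal{L}(H)$, the map $\vec{\sigma}\mapsto\cphoms{H\natural\vec{\sigma}}{H}{(G,c)}$ is the upward zeta transform of $\vec{\rho}\mapsto\cpembs{H\natural\vec{\rho}}{H}{(G,c)}$. By M\"obius inversion from above on $(\mathcal{L}(H),\leq)$, we obtain
\[
\cpembs{H\natural\vec{\sigma}}{H}{(G,c)}=\sum_{\vec{\rho}\geq\vec{\sigma}}\mu(\vec{\sigma},\vec{\rho})\cdot\cphoms{H\natural\vec{\rho}}{H}{(G,c)}\,.
\]
Here $\mu$ is the M\"obius function of the fracture lattice. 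This lattice is the product of the partition lattices of the edge sets incident to each vertex, so $\mu$ factorises as a product of partition-lattice M\"obius values; we do not need this explicitly. Substituting the inverted formula into Step~1 and summing over $\vec{\sigma}\in\mathcal{L}(H,\mathcal{V})$ yields the claimed identity.
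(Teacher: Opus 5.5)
Your proposal is correct and follows the paper's proof. The paper groups the colourful motifs $A$ by their induced fracture $\vec{\sigma}^A$ and bijects each class with $\cpemb{H\natural\vec{\sigma}}{H}{(G,c)}$, which is exactly the inverse of your map from the disjoint union of embedding sets, and then applies M\"obius inversion to Lemma~\ref{lem:hom=sum(emb)} as you do; your explicit injectivity and surjectivity checks spell out details the paper leaves implicit.
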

\begin{proof}
We define an equivalence relation on $\colmotif{\mathcal{V}}{H}{(G,c)}$ by setting
\[ A \simeq A' :\Leftrightarrow \vec{\sigma}^A = \vec{\sigma}^{A'}\,.\]
Now, given the equivalence class $C$ represented by $\vec{\sigma} \in \mathcal{L}(H,\mathcal{V})$, we note that the members of $C$ are precisely those $3$-edge subsets $A$ of $E(G)$ such that $c(A)=E(H)$ and $G[A]$ is isomorphic to $H \natural \vec{\sigma}$. Thus, there is a bijection from $C$ to $\cpemb{H\natural\vec{\sigma}}{H}{(G,c)}$ which maps $A \in C$ to the injective homomorphism $\varphi$ that maps the edges of $H\natural\vec{\sigma}$ to $A$; this uniquely identifies $\varphi$ as it is colour-prescribed. Consequently, we have
\begin{equation*}
    \colmotifs{\mathcal{V}}{H}{(G,c)}=\sum_{\vec{\sigma}\in\mathcal{L}(H,\mathcal{V})}\cpembs{H\natural\vec{\sigma}}{H}{(G,c)}\,.
\end{equation*}
Next, we apply M\"obius inversion (see~\cite[Chapter 3.7]{Stanley}) on the zeta-transformation given by Lemma~\ref{lem:hom=sum(emb)} and obtain:
\begin{equation*}
   \cpembs{H\natural\vec{\sigma}}{H}{(G,c)} =\sum_{\vec{\rho}\geq\vec{\sigma}}\mu(\vec{\sigma},\vec{\rho})\cdot\cphoms{H\natural\vec{\rho}}{H}{(G,c)}\,,
\end{equation*}
where $\mu$ is the M\"obius function of the lattice of fractures. This concludes the proof.
\end{proof}

\begin{corollary}\label{cor:top_coefficient}
Let $\mathcal{V}$ be a Venn diagram and let $H$ be a hypergraph with $|E(H)|=3$. There is a unique computable function $\coeff_{H,\mathcal{V}}: \mathcal{L}(H) \to \mathbb{Z}$ such that
\[\colmotifs{\mathcal{V}}{H}{\star} = \sum_{\vec{\sigma} \in\mathcal{L}(H)} \coeff_{H,\mathcal{V}}(\vec{\sigma}) \cdot \cphoms{H\natural\vec{\sigma}}{H}{\star}\,. \]
Moreover, for the coarsest fracture $\vec{\rho}=\vec{\top}$ we have that
\[\coeff_{H,\mathcal{V}}(\vec{\top}) =\sum_{\vecsigma\in\mathcal{L}(H,\mathcal{V})}~\prod_{v \in V(H)} (-1)^{|\sigma_v|-1} \cdot (|\sigma_v|-1)!\]
\end{corollary}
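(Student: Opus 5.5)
Existence comes from Lemma~\ref{lem:colmotifs_to_homs}: exchange the order of summation and set
\[\coeff_{H,\mathcal{V}}(\vec{\rho}) := \sum_{\vec{\sigma}\in\mathcal{L}(H,\mathcal{V}),\ \vec{\sigma}\leq\vec{\rho}} \mu(\vec{\sigma},\vec{\rho}).\]
This function is integer-valued, because Möbius functions of finite posets are integers. It is computable, because $\mathcal{L}(H)$ is finite and explicitly enumerable, membership in $\mathcal{L}(H,\mathcal{V})$ can be decided by building $H\natural\vec{\sigma}$ and checking the seven Venn regions under all orderings, and $\mu$ can be computed recursively.

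Uniqueness requires that the functions $(G,c)\mapsto \cphoms{H\natural\vec{\sigma}}{H}{(G,c)}$, for $\vec{\sigma}\in\mathcal{L}(H)$, are linearly independent. By Lemma~\ref{lem:hom=sum(emb)}, the homomorphism vector is obtained from the embedding vector by the zeta transform of the lattice $\mathcal{L}(H)$, which is invertible (it is unitriangular). So it suffices to show that the embedding-count functions $\vec{\sigma}\mapsto \cpembs{H\natural\vec{\sigma}}{H}{\star}$ are linearly independent.

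For that, fix a fracture and evaluate all the embedding-count functions on the $H$-coloured graph $(H\natural\vec{\rho},c_{H,\vec{\rho}})$. A colour-prescribed embedding $H\natural\vec{\sigma}\to H\natural\vec{\rho}$ sends each colour class injectively into the same colour class. It must also send the unique edge of colour $e$ to the unique edge of colour $e$, since there are exactly three edges, one per colour. Therefore $v_B$ and $v_{B'}$ can share an image only if $B=B'$, and the embedding is determined. It exists if and only if the block structures coincide, that is, if and only if $\vec{\sigma}=\vec{\rho}$.

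This gives a diagonal evaluation matrix with nonzero diagonal entries, so the functions are independent. One subtlety needs checking here: different fractures can yield isomorphic fractured hypergraphs, but the colour-prescribed condition distinguishes them, because the fracture can be read off from the colouring.

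For the top coefficient, the sum above gives
\[\coeff_{H,\mathcal{V}}(\vec{\top})=\sum_{\vec{\sigma}\in\mathcal{L}(H,\mathcal{V})}\mu(\vec{\sigma},\vec{\top}).\]
The lattice of fractures is the product of the partition lattices $P(E_v)$, where $E_v$ is the set of edges incident to $v$, with the componentwise order. The Möbius function of a product poset is the product of the Möbius functions of the factors, so
\[\mu(\vec{\sigma},\vec{\top})=\prod_v \mu_{P(E_v)}(\sigma_v,\top).\]
The interval $[\sigma_v,\top]$ in a partition lattice is isomorphic to the full partition lattice on the $|\sigma_v|$ blocks. Hence the classical formula $\mu(\bot_n,\top_n)=(-1)^{n-1}(n-1)!$ from \cite[Chapter 3]{Stanley} yields the factor $(-1)^{|\sigma_v|-1}(|\sigma_v|-1)!$.

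I expect the main obstacle to be the uniqueness step, specifically verifying the diagonal evaluation carefully. The existence and top-coefficient computations are routine.
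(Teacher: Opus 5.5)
Your proposal is correct. For existence and for the value at $\vec{\top}$ it follows the paper's argument. You exchange the sums in Lemma~\ref{lem:colmotifs_to_homs} to get $\coeff_{H,\mathcal{V}}(\vec{\top})=\sum_{\vecsigma\in\mathcal{L}(H,\mathcal{V})}\mu(\vecsigma,\vec{\top})$. You then use the product structure of $\mathcal{L}(H)$, multiplicativity of $\mu$, and the partition-lattice formula; your observation that $[\sigma_v,\top]$ is itself a partition lattice on the blocks of $\sigma_v$ is the step the paper leaves implicit.

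Where you genuinely go further is uniqueness. The paper only says it follows ``by collecting coefficients'', which gives a representation but not linear independence of the functions $\cphoms{H\natural\vecsigma}{H}{\star}$. You prove that independence by evaluating on the coloured fractured hypergraphs $(H\natural\vecrho,c_{H,\vecrho})$. The evaluation is sound: the edge of colour $e$ must go to the edge of colour $e$, and edgewise-injectivity then fixes every vertex. The same reasoning shows directly that $\cphoms{H\natural\vecsigma}{H}{(H\natural\vecrho,c_{H,\vecrho})}$ is $1$ if $\vecsigma\leq\vecrho$ and $0$ otherwise, i.e.\ the unitriangular zeta matrix of $\mathcal{L}(H)$, so the detour through embeddings is optional.

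Another route available within the paper is Dedekind's independence of distinct nonzero semigroup homomorphisms. This would combine Lemma~\ref{lem:tensor_props} with the pairwise distinctness recorded in the proof of Lemma~\ref{lem:dedekind_applied}. Your triangular evaluation is more elementary and self-contained.
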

\begin{proof}
The existence and uniqueness of $\coeff_{H,\mathcal{V}}$ follows from Lemma~\ref{lem:colmotifs_to_homs} by collecting coefficients of isomorphic terms. For $\vec{\top}$, we collect the coefficients of $\cphoms{H \natural \vec{\top}}{H}{\star}$ in Lemma~\ref{lem:colmotifs_to_homs} and obtain
\[ \coeff_{H,\mathcal{V}}(\vec{\top}) = \sum_{\vecsigma\in\mathcal{L}(H,\mathcal{V})}\mu(\vecsigma,\vec\top)\] 
Similarly as for the case of graphs~\cite[Section 4]{Peyerimhoffetal23}, we use the facts\footnote{See, for instance, Chapter 3 in the standard textbook of Stanley~\cite{Stanley}} that the lattice of fractures is, by definition, the product of $|V(H)|$ partition lattices, and that the M\"obius function is multiplicative over products of posets. We thus have:
\[\sum_{\vecsigma\in\mathcal{L}(H,\mathcal{V})}\mu(\vecsigma,\vec\top)= \sum_{\vecsigma\in\mathcal{L}(H,\mathcal{V})}\prod_{v \in V(H)} \mu(\vecsigma_v,\vec\top_v) = \sum_{\vecsigma\in\mathcal{L}(H,\mathcal{V})}\prod_{v \in V(H)} (-1)^{|\sigma_v|-1} \cdot (|\sigma_v|-1)! \,,\]
where the last equation follows from the explicit formula for the M\"obius function of the partition lattice (cf.\ Chapter 3 in~\cite{Stanley}).
\end{proof}

\subsection*{Coefficient Isolation via Dedekind Interpolation}
Recall that our overall strategy for proving the lower bounds for $\motifs{\mathcal{V}}{G}$ relies in the first step on a translation of the coloured version $\colmotifs{\mathcal{V}}{H}{(G,c)}$ into a linear combination of colour-prescribed homomorphism counts as established in Corollary~\ref{cor:top_coefficient}. In the second step, our aim is to show that such a linear combination is at least as hard as computing the hardest term \[(G,c) \mapsto\cphoms{H \natural \vec\sigma}{H}{(G,c)}\,.\]
Said second step will be achieved with Dedekind Interpolation. To this end, we use the abstract circuit-based framework established by Bressan et al.\ (see Appendix A in the full version~\cite{Bressanetal26full} of \cite{Bressanetal26}), requiring us to take a brief detour to group theory.

A \emph{semigroup} $(\mathbb{G},\ast)$ consists of a ground set $\mathbb{G}$ and an associate operation $\ast: \mathbb{G}^2 \to \mathbb{G}$. We say that $(\mathbb{G},\ast)$ is \emph{computable} if $\mathbb{G}$ and $\ast$ are computable. For the purpose of this work, we call a mapping $\mathfrak{h}: \mathbb{G} \to \mathbb{Q}$ a \emph{semigroup homomorphism} if $\mathfrak{h}(g_1 \ast g_2)= \mathfrak{h}(g_1)\cdot \mathfrak{h}(g_2)$.

\begin{definition}[Dedekind Circuits (Definition A.1 in \cite{Bressanetal26full})]\label{def:dedekind_circuits}
    For a semigroup $(\mathbb{G},\ast)$ and pairwise distinct semigroup homomorphisms $\mathfrak{h}_1,\dots\mathfrak{h}_k:\mathbb{G} \to \mathbb{Q}$, a \emph{Dedekind Circuit} $\mathbb{D}(\mathfrak{h}_1,\dots,\mathfrak{h}_k)$ is an arithmetic circuit over $\mathbb{Q}$ with 
    \begin{itemize}
        \item $k$ output gates $\mathsf{out}_1,\dots,\mathsf{out}_k$; and
        \item for finitely many $g \in \mathbb{G}$, an \emph{input gate} $\mathsf{in}[g]$ labelled with $g$.
        \item Further, for each rational linear combination $F=a_1 \mathfrak{h}_1 + \dots + a_k \mathfrak{h}_k$, the circuit $\mathbb{D}(\mathfrak{h}_1,\dots,\mathfrak{h}_k)$ when each input gate $\mathsf{in}[g]$ is assigned the value $F(g)$, outputs $\mathsf{out}_i=a_i$ for all $i \in [k]$.
    \end{itemize}
\end{definition}

\begin{theorem}[Dedekind Interpolation (Theorem A.3 in \cite{Bressanetal26full})]\label{thm:dedekind}
  Let $(\mathbb{G},\ast)$ be a computable semigroup and let $\mathfrak{h}_1,\dots,\mathfrak{h}_k: \mathbb{G} \to \mathbb{Q}$ be computable semigroup homomorphisms. There exists an algorithm $\mathbb{A}$ with the following properties:
  \begin{enumerate}
      \item $\mathbb{A}$ receives as input
      \begin{itemize}
          \item semigroup elements $g_1,\dots,g_k\in \mathbb{Q}$ with $\mathfrak{h}_i(g_i)\neq 0$ for all $i \in[k]$, and
          \item for each pair $i,j \in [k]$ with $i<j$, a semigroup element $g_{i,j}\in \mathbb{G}$ with $\mathfrak{h}_i(g_{i,j})\neq \mathfrak{h}_j(g_{i,j})$.
      \end{itemize}
      \item $\mathbb{A}$ computes a Dedekind Circuit $\mathbb{D}(\mathfrak{h}_1,\dots,\mathfrak{h}_k)$ of depth $O(k)$ with constant fan-in.
  \end{enumerate}
\end{theorem}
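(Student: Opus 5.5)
The plan is to prove the statement by explicit elimination, in the spirit of the classical proof of Dedekind's theorem on the linear independence of distinct characters. I read the input elements $g_1,\dots,g_k$ as semigroup elements of $\mathbb{G}$; the ``$\in\mathbb{Q}$'' in the statement appears to be a typo. I treat $g_{i,j}$ and $g_{j,i}$ as the same element. The tool is the family of shift operators. For $g\in\mathbb{G}$ and a function $F:\mathbb{G}\to\mathbb{Q}$, set $(T_gF)(x):=F(g\ast x)$. Since each $\mathfrak{h}_i$ is a semigroup homomorphism, $T_g\mathfrak{h}_i=\mathfrak{h}_i(g)\cdot\mathfrak{h}_i$, so every $\mathfrak{h}_i$ is an eigenvector of $T_g$. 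Consequently the operator
\[ D_{g,j}:=T_g-\mathfrak{h}_j(g)\cdot \mathsf{id} \]
annihilates $\mathfrak{h}_j$ and multiplies each $\mathfrak{h}_i$ by $\mathfrak{h}_i(g)-\mathfrak{h}_j(g)$. Only $g\ast x$ is used, never $x\ast g$, so commutativity of $\ast$ is not needed.

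Fix $i\in[k]$ and apply successively the operators $D_{g_{i,j},j}$ for all $j\neq i$, in some order. Write $F=\sum_\ell a_\ell\mathfrak{h}_\ell$. Every term with $\ell\neq i$ is killed by the corresponding $D_{g_{i,\ell},\ell}$, and these operators commute on the span since they are simultaneously diagonal. The $i$-th term survives with the factor
\[c_i:=\prod_{j\neq i}\bigl(\mathfrak{h}_i(g_{i,j})-\mathfrak{h}_j(g_{i,j})\bigr),\]
which is non-zero by the hypothesis on $g_{i,j}$. Hence the resulting function equals $c_i a_i\,\mathfrak{h}_i$. Evaluating it at $g_i$ and multiplying by the rational constant $(c_i\,\mathfrak{h}_i(g_i))^{-1}$ yields $a_i$; this constant exists because $\mathfrak{h}_i(g_i)\neq0$. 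All constants involved are computable, since $\mathbb{G}$, $\ast$ and the $\mathfrak{h}_j$ are computable.

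To turn this into a circuit, proceed inductively. Let $G_0:=F$ and $G_m:=D_{g_m',j_m}G_{m-1}$, where $(g_m',j_m)$ enumerates the $k-1$ pairs $(g_{i,j},j)$ for $j\neq i$. A circuit computing $G_m(x)$ from input gates is obtained from a circuit $C_{m-1}(g_m'\ast x)$ and a circuit $C_{m-1}(x)$. The two are combined by one constant-multiplication gate and one subtraction gate, each of fan-in at most $2$. Unrolling, $C_{k-1}(g_i)$ has input gates labelled by the elements $h\ast g_i$, where $h$ ranges over products of subsets of the $g_{i,j}$ (products taken in a fixed order), and it has depth $O(k)$. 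The final scaling adds one further gate. Its size is $2^{k-1}$, which is harmless because only depth and fan-in are required. Running this construction for each $i$ and using the results as $\mathsf{out}_i$ gives $\mathbb{D}(\mathfrak{h}_1,\dots,\mathfrak{h}_k)$. The algorithm $\mathbb{A}$ simply computes the needed products in $\mathbb{G}$ and the rational constants, then writes the circuit down.

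The main point to check carefully is correctness for an arbitrary linear combination: the circuit must not rely on $F$ itself being a homomorphism. Here it suffices that each $D_{g,j}$ is linear and that $T_g$ acts diagonally on the $\mathfrak{h}_\ell$; this is exactly what the unrolled circuit implements gate by gate. A second point is the bookkeeping of input labels under repeated shifts. Applying $T_{g'}$ after $T_g$ sends $F(g\ast x)$ to $F(g\ast(g'\ast x))$, so labels must be composed in the correct order, and associativity ensures these labels are well defined. I expect no further obstacle, since the non-vanishing of every scalar is guaranteed directly by the two input conditions.
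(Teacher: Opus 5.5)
Your proposal is correct and is essentially the intended argument. The paper does not reprove this theorem but imports it from Bressan et al., and its hypotheses on $g_i$ and $g_{i,j}$ are tailored to the constructive form of Artin's proof of Dedekind's independence-of-characters theorem that you give: eliminate each $\mathfrak{h}_j$ ($j\neq i$) via $F(g_{i,j}\ast x)-\mathfrak{h}_j(g_{i,j})F(x)$, then normalise at $g_i$. Your handling of the left-multiplied input labels and your depth $O(k)$, fan-in-$2$ circuit are exactly what the statement requires, and you are right that ``$g_i\in\mathbb{Q}$'' should read $g_i\in\mathbb{G}$.
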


\noindent For applying Dedekind Interpolation to our setting, we will rely on Tensor products: Tensor products have been defined for hypergraphs~\cite{Bressanetal26} and for coloured graphs~\cite{Peyerimhoffetal23}. We now define a version for $H$-coloured hypergraphs.

\begin{definition}\label{def:coloured_hypergraph_tensor}
    Let $H$ be a hypergraph and $(F,c_F)$ and $(G,c_G)$ be two edgewise-injectively $H$-coloured hypergraphs, with $H$-colourings $c_F$ and $c_G$, respectively. Then, the \textit{colour-preserving hypergraph tensor product} $(F,c_F)\otimes (G,c_G)$ is defined as follows: 
\begin{align*}
    V((F,c_F)\otimes(G,c_G)) &= \{(u,v)\in V((F,c_F))\times V((G,c_G))\ |\ c_F(u)=c_G(v)\} \\
    E((F,c_F)\otimes(G,c_G)) &= \{e\subseteq V((F,c_F))\times V((G,c_G)) \ |\ \pi_F(e)\in E((F,c_F)) \wedge \pi_G(e)\in E((G,c_G))\}
\end{align*}
where $\pi_F(e)$ and $\pi_G(e)$ are the sets of all first components of vertices in $e$ and second components of vertices in $e$, respectively. That is, for an element $(u,v)\in e$, $\pi_F((u,v))=u$ and $\pi_G((u,v))=v$.
\end{definition}

\begin{lemma}\label{lem:tensor_efficient}
    Given two edgewise-injectively $H$-coloured graphs, $(G,c_G)$ and $(F,c_F)$, the tensor product $(G,c_G)\otimes(F,c_F)$ can be computed in time $O(r^2\cdot\|G\|\cdot\|F\|),$ where $r=$max(rank$(G)$, rank$(F)$), and $\|G\|$ denotes the sum of the number of vertices and the number of edges in the hypergraph $G$. 
\end{lemma}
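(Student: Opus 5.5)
The plan is to build the tensor product directly by iterating over pairs of edges, using edgewise-injectivity to make each pair contribute at most one product edge. First we compute the vertex set. For every colour $h\in V(H)$ we bucket the vertices of $F$ and of $G$ by colour, and then list all pairs $(u,v)$ with $c_F(u)=c_G(v)=h$. This produces every product vertex in time $O(|V(F)|\cdot|V(G)|+|V(H)|)$, which is within $O(\|F\|\cdot\|G\|)$ because we may assume every colour actually occurs (vertices of $H$ that are not hit can be discarded).

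The key structural step concerns the edges. Suppose $e$ is an edge of $(F,c_F)\otimes(G,c_G)$. Every $(u,v)\in e$ satisfies $c_F(u)=c_G(v)$. Since $\pi_F(e)=f\in E(F)$ and $c_F|_f$ is injective, each colour in $c_F(f)$ is carried by exactly one $u\in f$; likewise each colour in $c_G(g)$ is carried by exactly one $v\in g$, where $g=\pi_G(e)$. The projections are surjective onto $f$ and $g$, so $c_F(f)=c_G(g)$. Moreover, $e$ contains at most one pair per colour, because that colour has a unique representative in $f$ and a unique representative in $g$. Combining these, $e$ is forced to be
\[ e_{f,g}=\{(u,v)\mid u\in f,\ v\in g,\ c_F(u)=c_G(v)\}, \]
and this set is a valid edge exactly when $c_F(f)=c_G(g)$. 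Conversely, whenever $c_F(f)=c_G(g)$, the set $e_{f,g}$ projects onto $f$ and onto $g$ and is therefore an edge. So the product edges are in bijection with the pairs $(f,g)\in E(F)\times E(G)$ that satisfy $c_F(f)=c_G(g)$.

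The algorithm then loops over all $|E(F)|\cdot|E(G)|$ pairs $(f,g)$. For each pair it checks $c_F(f)=c_G(g)$ and, if the check succeeds, builds $e_{f,g}$ by matching colours. A naive pairwise comparison of $f$ and $g$ costs $O(r^2)$, which gives $O(r^2\cdot|E(F)|\cdot|E(G)|)$ for all edges. Adding the vertex stage, the total is $O(r^2\cdot\|G\|\cdot\|F\|)$.

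The main obstacle is the structural step: it is what rules out exponentially many candidate subsets of $f\times g$ per edge pair, and it relies on edgewise-injectivity of both colourings. Everything after that is routine bookkeeping. For the implementation we would also confirm that the output may contain duplicate edges that must be removed, or argue that they cannot arise. In fact they cannot: the bijection $(f,g)\mapsto e_{f,g}$ shows that distinct pairs give distinct edges, since the projections recover $f$ and $g$.
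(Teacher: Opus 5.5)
Your proposal is correct and follows the paper's argument: form the colour-matched vertex pairs in $O(|V(G)|\cdot|V(F)|)$ time, then spend $O(r^2)$ on each of the $|E(G)|\cdot|E(F)|$ edge pairs. Your structural step uses edgewise-injectivity to show that each pair $(f,g)$ with $c_F(f)=c_G(g)$ yields exactly one product edge $e_{f,g}$, and this makes explicit what the paper's algorithm assumes when it creates a single edge per pair and remarks that $|e_G|=|e_F|$.
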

\begin{proof}
    The tensors are calculated as described in Algorithm \ref{tensorrunningtime}. Computing the vertex set $V((G,c_G)\otimes(F,c_F))$ requires time $|V(G)|\cdot|V(F)|$. Then going over all elements of this set and removing pairs that do not have the same colour introduces another factor of $|V(G)|\cdot|V(F)|$.
    The running time to loop over the set of edges of both hypergraphs will be $|E(G)|\cdot|E(F)|$. This is in turn bounded by $\|G\|\cdot\|F\|$. Finally, adding the corresponding edge $e$ into the tensor will contribute a factor of $|e_G|\cdot|e_F|=r^2$. Note that to preserve colours, we must have $|e_G|=|e_F|$.
    Hence our running time is given by
    $$2|V(G)|\cdot|V(F)| + r^2\cdot\|G\|\cdot\|F\|\leq O( r^2\cdot\|G\|\cdot\|F\|).$$
\end{proof}

\algrenewcommand\algorithmicrequire{\textbf{Input:}}
\algrenewcommand\algorithmicensure{\textbf{Output:}}
\begin{algorithm}[t]
\caption{Computing the Tensor product of edgewise-injectively $H$-coloured hypergraphs}
\label{tensorrunningtime}
\begin{algorithmic}[1]
\Require Edgewise-injectively $H$-coloured hypergraphs $(G,c_G)$ and $(F,c_F)$
\Ensure The tensor product $(G,c_G)\otimes(F,c_F)$
\State Compute the vertex set of $(G,c_G)\otimes(F,c_F)$ as $
V((G,c_G)) \times V((F,c_F))$.

\State Remove all pairs whose vertices have different colours.

\For{$e_G \in E(G)$}
    \For{$e_F \in E(F)$}
        \State Create an edge
        $e \in E((G,c_G)\otimes(F,c_F))$
        iff $\pi_G(e)=e_G$ and $\pi_F(e)=e_F$.
    \EndFor
\EndFor
\end{algorithmic}
\end{algorithm}

For what follows, recall that $\mathbb{G}_H$ denotes the set of all isomorphism types of edgewise-injectively $H$-coloured hypergraphs.

\begin{lemma}\label{lem:tensor_props}
For each $H$, the pair $(\mathbb{G}_H,\otimes)$ is a computable semigroup. Moreover, for each fracture $\vec\sigma$ of $H$, the mapping $\mathfrak{h}_{\vec\sigma} : (G,c_G) \mapsto \cphoms{H \natural\vec\sigma}{H}{(G,c_G)}$ is a semigroup homomorphism.
\end{lemma}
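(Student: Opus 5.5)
We prove the two claims separately: first the semigroup structure, then multiplicativity of $\mathfrak{h}_{\vec\sigma}$ via the universal property of the tensor product.

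\textbf{Semigroup structure.} We equip $(F,c_F)\otimes(G,c_G)$ with the colouring $c(u,v):=c_F(u)=c_G(v)$. This is well defined because every vertex of the tensor product satisfies $c_F(u)=c_G(v)$.

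We first check that $c$ is a homomorphism to $H$. Let $e$ be an edge of the tensor product. Then $c(e)=c_F(\pi_F(e))$, which is an edge of $H$ since $\pi_F(e)\in E(F)$ and $c_F$ is a homomorphism.

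We next check edgewise injectivity. Suppose $(u,v),(u',v')\in e$ have the same colour. Then $u,u'\in\pi_F(e)$ share a colour, so $u=u'$ because $c_F$ is injective on edges. By the same argument for $G$, $v=v'$. Hence the tensor product is again an element of $\mathbb{G}_H$.

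There is one subtlety. The definition allows as edges arbitrary subsets $e$ whose projections are edges. Such an $e$ may contain $(u,v),(u,v')$ with $v\neq v'$ only if $c_G(v)=c_G(v')$, and both $v,v'$ lie in $\pi_G(e)$. This is impossible by edgewise injectivity. So each edge is the graph of the colour-matched bijection between some $e_F\in E(F)$ and some $e_G\in E(G)$ with $c_F(e_F)=c_G(e_G)$.

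The operation is well defined on isomorphism types, since colour-preserving isomorphisms of the factors induce one of the product. Associativity follows from the canonical map $((u,v),w)\mapsto(u,(v,w))$. By the edge description above, this map is a colour-preserving isomorphism: both sides have as edges the colour-matched triples of edges. Computability follows from Lemma~\ref{lem:tensor_efficient}, together with decidability of coloured isomorphism on finite structures.

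\textbf{Multiplicativity.} We prove the universal property: for an edgewise-injectively $H$-coloured $(K,c_K)$, the map
\[\cphom{(K,c_K)}{H}{(F,c_F)\otimes(G,c_G)}\to\cphom{(K,c_K)}{H}{(F,c_F)}\times\cphom{(K,c_K)}{H}{(G,c_G)},\qquad \varphi\mapsto(\pi_F\circ\varphi,\pi_G\circ\varphi)\]
is a bijection. Taking $(K,c_K)=(H\natural\vec\sigma,c_{H,\vec\sigma})$ then gives $\mathfrak{h}_{\vec\sigma}(F\otimes G)=\mathfrak{h}_{\vec\sigma}(F)\cdot\mathfrak{h}_{\vec\sigma}(G)$.

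\emph{Forward direction.} Each projection is colour-prescribed, since $c$ agrees with $c_F$ and $c_G$ on the components. Each projection is also a homomorphism, since $\pi_F(\varphi(e))=\pi_F$ applied to an edge of the product, which lies in $E(F)$.

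\emph{Inverse.} Given $(\varphi_F,\varphi_G)$, define $\varphi(x)=(\varphi_F(x),\varphi_G(x))$. This is a vertex of the product because both components have colour $c_K(x)$. For $e\in E(K)$, the projections of $\varphi(e)$ are $\varphi_F(e)\in E(F)$ and $\varphi_G(e)\in E(G)$, so $\varphi(e)$ is an edge of the product by definition. The two constructions are mutually inverse, which completes the bijection.

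\textbf{Main obstacle.} The only genuinely delicate point is confirming that the liberal edge definition, which allows any subset with the right projections, does not produce extra edges or break associativity. The edgewise-injectivity argument above resolves this. Everything else is routine.
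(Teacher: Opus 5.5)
Your proof is correct and follows the paper's argument. Associativity comes from the canonical reassociation map $((u,v),w)\mapsto(u,(v,w))$, and multiplicativity from the projection/pairing bijection $\varphi\mapsto(\pi_F\circ\varphi,\pi_G\circ\varphi)$, which the paper proves directly for $H\natural\vec\sigma$ rather than for a general $(K,c_K)$. Your extra checks that $\otimes$ stays inside $\mathbb{G}_H$ and is well defined on isomorphism types fill in details the paper leaves implicit. Associativity does not actually need your edge description, though: on both sides a set of triples is an edge exactly when all three coordinate projections are edges.
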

\begin{proof}
For the first claim, we observe that the mapping $(u,(v,w))\mapsto((u,v),w)$ is an isomorphism from $(G_1,c_1)\otimes((G_2,c_2) \otimes (G_3,c_3))$ to $((G_1,c_1)\otimes(G_2,c_2)) \otimes (G_3,c_3)$; hence $\otimes$ is associative and thus $(\mathbb{G}_H,\otimes)$ is a semigroup. Moreover, clearly, $\mathbb{G}_H$ and $\varphi_{\vec\sigma}$ are computable.

For the second claim, we need to show that
\[\cphoms{H\natural\vec{\sigma}}{H}{(G,c_G)\otimes(F,c_F)} = \cphoms{H\natural\vec{\sigma}}{H}{(G,c_G)} \cdot \cphoms{H\natural\vec{\sigma}}{H}{(F,c_F)} \]
Consider the mapping $b: \varphi \mapsto (\varphi_1,\varphi_2)$ where, for $i \in \{1,2\}$ , $\varphi_i(v):=\pi_i(\varphi(v))$ and $\pi_i$ is the projection of a pair to the $i$-th component. We claim that $b$ is a bijection from $\cphom{H\natural\vec{\sigma}}{H}{(G,c_G)\otimes(F,c_F)}$ to $ \cphom{H\natural\vec{\sigma}}{H}{(G,c_G)} \times \cphom{H\natural\vec{\sigma}}{H}{(F,c_F)}$.

To this end, let $\varphi \in \cphom{H\natural\vec{\sigma}}{H}{(G,c_G)\otimes(F,c_F)}$. For a vertex $u\in V(H\natural\vec{\sigma})$ let $(x,y)=\varphi(u)\in V((G,c_G)\otimes(F,c_F))$, that is $x=\varphi_1(u)$ and $y=\varphi_2(u)$. Then by definition of the colour-preserving Tensor product, $c_G(x)=c_F(y)=c_{H\natural\vec\sigma}(u)$ --- recall that the $H$-colouring $c_{H\natural\vec\sigma}$ maps vertices $x^B$, for $B$ being a block of $\vec{\sigma}_x$, to $x \in V(H)$. Thus both $\varphi_1$ and $\varphi_2$ preserve the $H$-colouring. We also have to show that $\varphi_1$ and $\varphi_2$ preserve edges:
let $e\in E(H\natural\vecsigma)$ be an edge. Since $\varphi$ is a homomorphism, we have that $\varphi(e)$ is an edge in $(G,c_G)\otimes(F,c_F)$. Again, applying the definition of the tensor product, we have $$\varphi_1(e)=\pi_1(\varphi(e))\in E((G,c_G)) \text{  and  }\varphi_2(e)=\pi_2(\varphi(e))\in E((F,c_F))\,,$$
implying that the co-domain of $b$ is, as intended, $\cphom{H\natural\vec{\sigma}}{H}{(G,c_G)} \times \cphom{H\natural\vec{\sigma}}{H}{(F,c_F)}$.

For proving that $b$ is injective, we consider the mapping $b'$ that maps a pair $$(\varphi_1,\varphi_2)\in \cphom{H\natural\vec{\sigma}}{H}{(G,c_G)} \times \cphom{H\natural\vec{\sigma}}{H}{(F,c_F)}$$ to the function $\varphi: u \mapsto (\varphi_1(u),\varphi_2(u))$. If $\varphi\in \cphom{H\natural\vec{\sigma}}{H}{(G,c_G)\otimes(F,c_F)}$ then the co-domain of $b'$ is $\cphom{H\natural\vec{\sigma}}{H}{(G,c_G)\otimes(F,c_F)}$. Then, clearly, $b \circ b'$ and $b \circ b'$ are both the identity, hence $b'=b^{-1}$ and $b$ is a bijection. Thus it remains to show that $\varphi$ is indeed contained in $\cphom{H\natural\vec{\sigma}}{H}{(G,c_G)\otimes(F,c_F)}$. 

To this end, we first observe that $\varphi$ preserves colours as both $\varphi_1$ and $\varphi_2$ do. Next, given an edge $e=\{x_1,\dots,x_\ell\} \in E(H \natural \vecsigma)$, let $(u_i,v_i)=\varphi(x_i)$ for all $i \in [\ell]$, that is $u_i=\varphi_1(x_i)$ and $v_i=\varphi_2(x_i)$. We have to show that $\hat{e}:=\{(u_1,v_1),\dots,(u_\ell,v_\ell)\}$ is an edge of $(G,c_G) \otimes (F,c_F)$. As $\varphi_1$ and $\varphi_2$ are both homomorphisms we have that $e_1:=\{u_1,\dots,u_\ell\}$ and $e_2:=\{v_1,\dots,v_\ell\}$ are edges of $(G,c_G)$ and $(F,c_F)$ respectively. However, note that $\pi_G(\hat{e})=\pi_1(\hat{e})=e_1$ and $\pi_F(\hat{e})=\pi_2(e)=e_2$, and hence $\hat{e}$ is, by definition, an edge of the Tensor product $(G,c_G) \otimes (F,c_F)$.
\end{proof}

We are now able to apply Dedekind Interpolation to $\colmotifs{\mathcal{V}}{H}{\star}$.

\begin{lemma}\label{lem:dedekind_applied}
    There is a computable function $f$ and an oracle algorithm $R$ with the following properties:
    \begin{enumerate}
        \item The input of $R$ is a Venn diagram $\mathcal{V}$, a hypergraph $H$ and an edgewise-injectively $H$-coloured hypergraph $(G,c_G)$.
        \item $R$ has oracle access to the function $\colmotifs{\mathcal{V}}{H}{\star}$.
        \item $R$ computes $\cphoms{H\natural\vecsigma}{H}{(G,c_G)}$ for all $\vecsigma$ with $\coeff_{H,\mathcal{V}}(\vecsigma)\neq 0$.
        \item The running time of $R$ is bounded by $O(f(\mathcal{V},H)\cdot |(G,c)|)$ and it makes at most $f(\mathcal{V},H)$ oracle queries.
    \end{enumerate}
\end{lemma}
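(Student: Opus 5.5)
We plan to apply Dedekind Interpolation (Theorem~\ref{thm:dedekind}) to the semigroup $(\mathbb{G}_H,\otimes)$ of Lemma~\ref{lem:tensor_props}. The semigroup homomorphisms are $\mathfrak{h}_{\vecsigma}$ for those fractures $\vecsigma$ with $\coeff_{H,\mathcal{V}}(\vecsigma)\neq 0$. By Corollary~\ref{cor:top_coefficient},
\[\colmotifs{\mathcal{V}}{H}{\star}=\sum_{\vecsigma}\coeff_{H,\mathcal{V}}(\vecsigma)\cdot\mathfrak{h}_{\vecsigma}\,,\]
so this function is a rational linear combination of semigroup homomorphisms. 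One subtlety is that Theorem~\ref{thm:dedekind} requires pairwise \emph{distinct} homomorphisms, whereas distinct fractures may give isomorphic coloured fractured hypergraphs. We therefore first group the fractures $\vecsigma$ by the isomorphism type of $(H\natural\vecsigma,c_{H,\vecsigma})$ as $H$-coloured hypergraphs. Within a group the maps $\mathfrak{h}_{\vecsigma}$ coincide, and we sum their coefficients. The uniqueness in Corollary~\ref{cor:top_coefficient} indicates that $\coeff$ is already indexed up to this collection. Non-isomorphic edgewise-injectively coloured hypergraphs are distinguished by colour-prescribed homomorphism counts, by the standard Lov\'asz-type argument in the colour-prescribed setting. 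Hence the resulting maps are pairwise distinct. This distinctness is the step I expect to need the most care.

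\textbf{Step 1 (witnesses).} We need $g_i$ with $\mathfrak{h}_i(g_i)\neq0$, and $g_{i,j}$ with $\mathfrak{h}_i(g_{i,j})\neq\mathfrak{h}_j(g_{i,j})$. For $g_i$ we take $H\natural\vecsigma_i$ itself: the identity is a colour-prescribed homomorphism, so $\mathfrak{h}_i(g_i)\geq1$. For $g_{i,j}$, write $F_i=H\natural\vecsigma_i$. We use that $\mathfrak{h}_i(F_i)$ counts colour-prescribed homomorphisms, and among them the injective ones correspond to colour-preserving automorphisms. If $\mathfrak{h}_i(F_j)=\mathfrak{h}_j(F_j)$ and $\mathfrak{h}_j(F_i)=\mathfrak{h}_i(F_i)$, we get homomorphisms $F_i\to F_j$ and $F_j\to F_i$. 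We then iterate the argument using the lattice structure. Every colour-prescribed homomorphism $H\natural\vecsigma\to H\natural\vecrho$ exists only if $\vecrho\geq\vecsigma$ up to isomorphism, since it merges copies of the same vertex. Thus mutual homomorphisms force $F_i\cong F_j$, contradicting distinctness. Hence one of $F_i,F_j$ serves as $g_{i,j}$. All witnesses depend only on $(\mathcal{V},H)$, so they are found by brute force in time $f(\mathcal{V},H)$.

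\textbf{Step 2 (reduction).} Algorithm $\mathbb{A}$ then produces a Dedekind circuit of size depending only on $(\mathcal{V},H)$. Each input gate $\mathsf{in}[g]$ is labelled with some fixed $g\in\mathbb{G}_H$. We would rather evaluate the shifted combination
\[(G',c')\mapsto \colmotifs{\mathcal{V}}{H}{(G,c_G)\otimes(G',c')}=\sum_{\vecsigma}\coeff_{H,\mathcal{V}}(\vecsigma)\,\mathfrak{h}_{\vecsigma}(G,c_G)\,\mathfrak{h}_{\vecsigma}(G',c')\,,\]
using multiplicativity from Lemma~\ref{lem:tensor_props}. This is a linear combination of the same $\mathfrak{h}_{\vecsigma}$, with coefficients $a_{\vecsigma}=\coeff_{H,\mathcal{V}}(\vecsigma)\,\mathfrak{h}_{\vecsigma}(G,c_G)$. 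Feeding gate $\mathsf{in}[g]$ the oracle answer on $(G,c_G)\otimes g$ makes the circuit output each $a_{\vecsigma}$. Dividing by the known nonzero $\coeff_{H,\mathcal{V}}(\vecsigma)$ yields $\cphoms{H\natural\vecsigma}{H}{(G,c_G)}$.

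\textbf{Step 3 (running time).} Each tensor $(G,c_G)\otimes g$ has $\|g\|$ and the rank bounded in terms of $H$, so by Lemma~\ref{lem:tensor_efficient} it costs $f(\mathcal{V},H)\cdot O(\|G\|)$ time. The number of queries equals the number of input gates, which is at most $f(\mathcal{V},H)$. Circuit evaluation costs $f(\mathcal{V},H)$ arithmetic operations; the numbers involved are polynomially bounded in $|G|^{f}$, which the linear-time accounting absorbs or which we treat in the unit-cost model.
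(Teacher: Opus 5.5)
Your proposal is correct and follows the paper's proof essentially step for step: you apply Dedekind Interpolation over $(\mathbb{G}_H,\otimes)$, query the oracle on $(G,c_G)\otimes g$ so that the circuit outputs $a_{\vecsigma}=\coeff_{H,\mathcal{V}}(\vecsigma)\cdot\cphoms{H\natural\vecsigma}{H}{(G,c_G)}$, and divide by the nonzero coefficient. Your grouping step is vacuous, because distinct fractures already give non-isomorphic $H$-coloured fractured hypergraphs (the paper's observation (II)). Your witness argument fills in what the paper only asserts: a colour-prescribed homomorphism $H\natural\vecsigma\to H\natural\vecrho$ exists exactly when $\vecsigma\leq\vecrho$, and is then unique.
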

\begin{proof}
    Let $\mathcal{V}$, $H$, and $(G,c_G)$ be the input of $R$ as specified above.
    Let $r=\mathsf{rank}(H)$ and observe that $r$ must also be an upper bound of the rank of $G$ as $(G,c_G)$ is an edgewise-injectively $H$-coloured graph. For each fracture $\vec{\sigma}$ of $H$, we consider the function $\mathfrak{h}_{\vecsigma}: \mathbb{G}_H \to \mathbb{Q}$
    \[\mathfrak{h}_{\vec{\sigma}}(\hat{G},\hat{c}) := \cphoms{H \natural \vec{\sigma}}{H}{(\hat{G},\hat{c})}\,.\]
    Observe that: 
    \begin{itemize}
        \item[(I)] $\mathfrak{h}_{\vec{\sigma}}(H\natural_{\vec\sigma},c_{H \natural \vecsigma})>0$ as the identity mapping is contained in $\cphom{H \natural \vec{\sigma}}{H}{(H\natural_{\vec\sigma},c_{H \natural \vecsigma})}$.
        \item[(II)] For $\vecsigma \neq \vecrho$ we have that $(H\natural_{\vec\sigma},c_{H \natural \sigma})$ and $(H\natural_{\vecrho},c_{H \natural \vecrho})$ are not isomorphic --- recall that isomorphisms between $H$-coloured graphs must preserve the colouring. Moreover, $\mathfrak{h}_{\vec{\sigma}}\neq\mathfrak{h}_{\vec{\rho}}$, and we can find, in time only depending on $H$, an edgewise-injectively $H$-coloured graph $(\hat{G},\hat{c})$ such that $$\mathfrak{h}_{\vec{\sigma}}(\hat{G},\hat{c})\neq\mathfrak{h}_{\vec{\rho}}(\hat{G},\hat{c})$$.
    \end{itemize}
    Consequently, in time only depending on $H$ and $\mathcal{V}$, our algorithm $R$ can run algorithm $\mathbb{A}$ from Theorem~\ref{thm:dedekind} and obtain a Dedekind Circuit $\mathbb{D}(\mathfrak{h}_{\vecsigma})_{\vecsigma \in \mathcal{L}(H)}$ of depth $O(|\mathcal{L}(H)|)$ and constant fan-in. Let $\ell$ be the number of input gates of $\mathbb{D}(\mathfrak{h}_{\vecsigma})_{\vecsigma \in \mathcal{L}(H)}$, and let $(G_1,c_1),\dots,(G_\ell,c_\ell)$ denote the labels of the input gates.

    In order to make use of $\mathbb{D}(\mathfrak{h}_{\vecsigma})_{\vecsigma \in \mathcal{L}(H)}$, we define the function 
    \[F: \mathbb{G}_H \to \mathbb{Q};~~(\hat{G},\hat{c})\mapsto \colmotifs{\mathcal{V}}{H}{(G,c_G)\otimes(\hat{G},\hat{c})} \,. \]
    Next we note that, by Lemma~\ref{lem:colmotifs_to_homs} and Lemma~\ref{lem:tensor_props}, we have
    \begin{align*}
        \forall (\hat{G},\hat{c})\in \mathbb{G}_H: F(\hat{G},\hat{c})&= \colmotifs{\mathcal{V}}{H}{(G,c_G)\otimes(\hat{G},\hat{c})}\\
        ~&= \sum_{\vecsigma \in \mathcal{L}(H)} \coeff_{H,\mathcal{V}}(\vecsigma) \cdot \cphoms{H \natural \vecsigma}{H}{(G,c)} \cdot \cphoms{H \natural \vecsigma}{H}{(G,c)}\\
        ~& = \sum_{\vecsigma \in \mathcal{L}(H)} a_{\vecsigma} \cdot \mathfrak{h}_{\vec{\sigma}}(\hat{G},\hat{c})\,,
    \end{align*}
    where $a_{\vecsigma}:= \coeff_{H,\mathcal{V}}(\vecsigma) \cdot \cphoms{H \natural \vecsigma}{H}{(G,c)}$.
    Using our oracle, we can compute the values $F(G_i,c_i)$ for $i\in[\ell]$ --- recall that $\ell$ only depends on $H$ and $\mathcal{V}$. Note that, for $n=|(G,c)|$, we can generously bound $F(G_i,c_i)\leq g(H,\mathcal{V})\cdot n^{g(H,\mathcal{V})}$ for some computable function $g$. Thus, the bit size required for storing those values is bounded by $\log(g(H,\mathcal{V})\cdot n^{g(H,\mathcal{V})}) \in O(g(H,\mathcal{V})\cdot \log n)$. Moreover, by Lemma~\ref{lem:tensor_efficient}, we can compute the required Tensor products in time $O(r^2 \cdot ||(G_i,c_i)|| \cdot n)$.

    Next, we insert the value $F(G_i,c_i)$ to the $i$-th input gate for all $i\in[\ell]$ and evaluate the circuit $\mathbb{D}(\mathfrak{h}_{\vecsigma})_{\vecsigma \in \mathcal{L}(H)}$. 
    As $\mathbb{D}(\mathfrak{h}_{\vecsigma})_{\vecsigma \in \mathcal{L}(H)}$ has constant fan-in and depth only depending on $H$ and $\mathcal{V}$, and as the bit size of the input gates is bounded by $O(g(H,\mathcal{V})\cdot \log n)$, the overall time required for the evaluation of the circuit is bounded generously by $O(g'(H,\mathcal{V})\cdot \mathsf{polylog}(n))$ for some computable function $g'$. 
    
    The evaluation of $\mathbb{D}(\mathfrak{h}_{\vecsigma})_{\vecsigma \in \mathcal{L}(H)}$ yields, by definition of Dedekind Circuits, the values $(a_{\vec\sigma})_{\vecsigma \in \mathcal{L}(H)}$ at the output gates. Finally, recalling that $a_{\vecsigma}:= \coeff_{H,\mathcal{V}}(\vecsigma) \cdot \cphoms{H \natural \vecsigma}{H}{(G,c)}$, the algorithm $R$ outputs, for all $\vecsigma$ with $\coeff_{H,\mathcal{H}}(\vec{\sigma})$,
    \[ \frac{a_{\vecsigma}}{\coeff_{H,\mathcal{V}}(\vecsigma)} = \cphoms{H \natural \vecsigma}{H}{(G,c)}\,, \]
    concluding the proof.
\end{proof}

\section{The Homomorphism Basis for Non-Degenerate Venn Diagrams}
\label{sec4:non_deg_Venn}
In Corollary~\ref{cor:top_coefficient}, we have shown that
\[ \coeff_{H,\mathcal{V}}(\vec{\top}) = \sum_{\vecsigma\in\mathcal{L}(H,\mathcal{V})}\prod_{v \in V(H)} (-1)^{|\sigma_v|-1} \cdot (|\sigma_v|-1)! \,.\]
In the current section, for each non-degenerate Venn diagram $\mathcal{V}$, we will find a non-$\alpha$-acyclic hypergraph $H$ satisfying that $\coeff_{H,\mathcal{V}}(\vec{\top})\neq 0$. As it turns out, we will be able to always choose $H$ to be obtainable from either a triangle or from a hyperclique minus one edge by adding degree-$1$ vertices to edges; we formalise those notions below:

\begin{definition}[$\Delta(j_1,j_2,j_3)$ and $\Gamma(j_1,j_2,j_3)$; cf.\ Figure~\ref{fig:hypergraphs_H}]
Let $j_1,j_2,j_3$ be non-negative integers. We write $\Delta$ to denote the triangle, that is, the (hyper)graph with vertices $\{a,b,c\}$ and edges $e_1=\{a,b\}$, $e_2=\{a,c\}$, and $e_3=\{b,c\}$. Moreover, we write $\Gamma$ to denote hypergraph obtained by deleting one edge from the $3$-uniform $4$-hyperclique, that is, $\Gamma$ is the hypergraph with vertices $\{a,b,c,d\}$ and edges $e_1=\{a,b,d\}$, $e_2=\{a,b,c\}$, and $e_3=\{a,c,d\}$.
\begin{itemize}
    \item The hypergraph $\Delta(j_1,j_2,j_3)$ is obtained from $\Delta$ by adding $j_i$ fresh vertices of degree $1$ to the edge $e_i$ for all $i \in \{1,2,3\}$.
    \item The hypergraph $\Gamma(j_1,j_2,j_3)$ is obtained from $\Gamma$ by adding $j_i$ fresh vertices of degree $1$ to the edge $e_i$ for all $i \in \{1,2,3\}$.
\end{itemize}
\end{definition}

\begin{figure}
    \centering
    \begin{subfigure}{0.35\linewidth}
        \includegraphics[width=\linewidth]{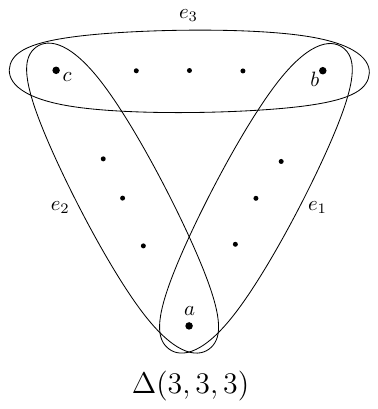}
        \caption{}
        \label{fig:Delta}
    \end{subfigure}
    ~~~~~~~~~
    \begin{subfigure}{0.31\linewidth}
        \includegraphics[width=\linewidth]{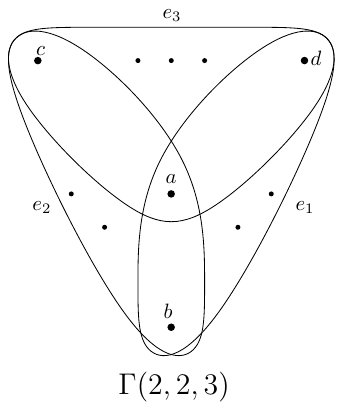}
        \caption{}
        \label{fig:Gamma}
    \end{subfigure}
    \caption{Examples of $\Delta(j_1,j_2,j_3)$ and $\Gamma(j_1,j_2,j_3)$.}
    \label{fig:hypergraphs_H}
\end{figure}

Note that neither of $\Delta(j_1,j_2,j_3)$ and $\Gamma(j_1,j_2,j_3)$ are $\alpha$-acyclic as they reduce, respectively, to the non-empty hypergraphs $\Delta(0,0,0)$ and $\Gamma(0,0,0)$, via GYO reductions\footnote{After Graham~\cite{Graham1979} and Yu and \"Ozsoyoglu~\cite{YuO79}; defined via successive deletion of vertices of degree $1$ and edges fully contained in other edges.} (see~\cite{BeeriFMY83}). Hence we obtain:
\begin{obs}\label{obs:delta_gamma_cyclic}
    For all $j_1,j_2,j_3 \geq 0$ we have $\mathsf{ghtw}(\Delta(j_1,j_2,j_3))>1$ and $\mathsf{ghtw}(\Gamma(j_1,j_2,j_3))>1$.
\end{obs}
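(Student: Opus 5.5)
The statement to prove is Observation~\ref{obs:delta_gamma_cyclic}: the hypergraphs $\Delta(j_1,j_2,j_3)$ and $\Gamma(j_1,j_2,j_3)$ are not $\alpha$-acyclic, i.e.\ $\mathsf{ghtw}>1$. The plan is to argue directly from the hypertree-decomposition definition of $\mathsf{ghtw}$ used in this paper, rather than invoking the GYO equivalence as a black box.

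The key step is a monotonicity property: deleting a degree-$1$ vertex $v$ does not increase $\mathsf{ghtw}$. Let $\mathcal{T}$ be a decomposition of width $1$ of the larger hypergraph. Remove $v$ from every bag and replace each edge $e\ni v$ by $e\setminus\{v\}$. The three decomposition conditions survive (they are only ever weakened by removing a vertex), and any single edge that covered a bag still covers the shrunken bag. One caveat: the reduced edge $e\setminus\{v\}$ is nonempty here, because every edge of $\Delta$ and $\Gamma$ keeps its core vertices. Iterating this removal of all the added vertices reduces the claim to showing $\mathsf{ghtw}(\Delta)>1$ and $\mathsf{ghtw}(\Gamma)>1$.

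For these base cases, suppose for contradiction that there is a width-$1$ decomposition, so every bag lies inside a single edge. For $\Delta$, the standard argument for a clique applies: the three edges $\{a,b\},\{a,c\},\{b,c\}$ each lie in some bag. Using the Helly property of subtrees of a tree, the subtrees $T_a,T_b,T_c$ pairwise intersect, so they share a common node. That bag then contains $\{a,b,c\}$, but no edge covers it. For $\Gamma$ the same reasoning works. Its edges $\{a,b,d\},\{a,b,c\},\{a,c,d\}$ give pairwise intersections of $T_b,T_c,T_d$ (via the pairs $bd$, $bc$, $cd$), so some bag contains $\{b,c,d\}$. That set is the deleted edge and lies in no edge of $\Gamma$, a contradiction.

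The only subtle point is the vertex-deletion monotonicity: one must check that the edge-coverage condition is preserved and that no isolated or empty edges appear. This is routine.
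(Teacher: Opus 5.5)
Your proof is correct, but it takes a different route from the paper.

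The paper's argument is one line. Running GYO reductions deletes the degree-$1$ vertices and leaves $\Delta(0,0,0)$ and $\Gamma(0,0,0)$. There no vertex has degree $1$ and no edge is contained in another, so the reduction gets stuck at a non-empty hypergraph. The paper then relies, implicitly, on two standard facts from the literature (Beeri et al.; Gottlob, Leone and Scarcello). One is that GYO reductions are confluent. The other is that reducibility to the empty hypergraph is equivalent to $\mathsf{ghtw}\le 1$, which is how the paper defines acyclicity.

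You instead work directly with hypertree decompositions. A restriction argument shows that deleting a degree-$1$ vertex cannot increase $\mathsf{ghtw}$. The Helly property of subtrees then forces some bag to contain $\{a,b,c\}$ (respectively $\{b,c,d\}$), and no single edge covers that set.

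What each route buys:
\begin{itemize}
\item Yours is self-contained and names the concrete obstruction: a pairwise-adjacent triple lying in no edge, i.e.\ non-conformality.
\item The paper's is shorter given the literature, and it is the natural algorithmic test.
\end{itemize}

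Your first step is also unnecessary. The Helly argument applies verbatim to $\Delta(j_1,j_2,j_3)$ and $\Gamma(j_1,j_2,j_3)$ themselves. The padded edges still contain the pairs $ab$, $ac$, $bc$ (respectively $bd$, $bc$, $cd$). Since the added vertices have degree $1$, no edge contains all of $a,b,c$ (respectively $b,c,d$). So you can skip the monotonicity lemma and its non-emptiness caveat entirely.
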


In the subsequent subsections, we will show that for each of the $14$ non-degenerate Venn diagrams (see Figure~\ref{fig:allVenns}) either $\Delta(j_1,j_2,j_3)$ or $\Gamma(j_1,j_2,j_3)$ survives with a non-zero coefficient for some $j_1,j_2,j_3 \geq 0$.

\subsection{Warm up: Coefficients of $\mathcal{V}_1$, $\mathcal{V}_2$, $\mathcal{V}_3$, and $\mathcal{V}_4$}
\label{sec4.1}

Recall the following Venn diagrams:\\

\noindent \begin{tikzpicture}[scale=0.5, transform shape]
    \def\r{1.5}
    \coordinate (A) at (0,0);
    \coordinate (B) at (2,0);
    \coordinate (C) at (1,1.732);

    \begin{scope}
        \clip (A) circle (\r);
        \fill[pattern=\nel, pattern color=black] (A) circle (\r);
    \end{scope}
   
    \begin{scope}
        \clip (B) circle (\r);
        \fill[pattern=\nel, pattern color=black] (B) circle (\r);
    \end{scope}

    \begin{scope}
        \clip (C) circle (\r);
        \fill[pattern=\nel, pattern color=black] (C) circle (\r);
    \end{scope}

    \begin{scope}
        \clip (A) circle (\r);
        \clip (B) circle (\r);
        \fill[pattern=\nel, pattern color=black] (-2,-2) rectangle (4,4);
    \end{scope}

    \begin{scope}
        \clip (A) circle (\r);
        \clip (C) circle (\r);
        \fill[pattern=\nel, pattern color=black] (-1,-1) rectangle (4,4);
    \end{scope}

    \begin{scope}
        \clip (B) circle (\r);
        \clip (C) circle (\r);
        \fill[pattern=\nel, pattern color=black] (0,0) rectangle (4,4);
    \end{scope}

    \begin{scope}
        \clip (A) circle (\r);
        \clip (B) circle (\r);
        \clip (C) circle (\r);
        \fill[fill=white, pattern color=black] (0,0) rectangle (4,4);
    \end{scope}

    \draw[line width=1.2pt] (A) circle (\r);
    \draw[line width=1.2pt] (B) circle (\r);
    \draw[line width=1.2pt] (C) circle (\r);

    \node (V_1) at (1.1,-2.5) {\huge $\mathcal{V}_1$};
\end{tikzpicture}
\begin{tikzpicture}[scale=0.5, transform shape]
    \def\r{1.5}
    \coordinate (A) at (0,0);
    \coordinate (B) at (2,0);
    \coordinate (C) at (1,1.732);

    \begin{scope}
        \clip (A) circle (\r);
        \fill[pattern=\nel, pattern color=black] (A) circle (\r);
    \end{scope}
   
    \begin{scope}
        \clip (B) circle (\r);
        \fill[pattern=none, pattern color=black] (B) circle (\r);
    \end{scope}

    \begin{scope}
        \clip (C) circle (\r);
        \fill[pattern=\nel, pattern color=black] (C) circle (\r);
    \end{scope}

    \begin{scope}
        \clip (A) circle (\r);
        \clip (B) circle (\r);
        \fill[pattern=\nel, pattern color=black] (-2,-2) rectangle (4,4);
    \end{scope}

    \begin{scope}
        \clip (A) circle (\r);
        \clip (C) circle (\r);
        \fill[pattern=\nel, pattern color=black] (-1,-1) rectangle (4,4);
    \end{scope}

    \begin{scope}
        \clip (B) circle (\r);
        \clip (C) circle (\r);
        \fill[pattern=\nel, pattern color=black] (0,0) rectangle (4,4);
    \end{scope}

    \begin{scope}
        \clip (A) circle (\r);
        \clip (B) circle (\r);
        \clip (C) circle (\r);
        \fill[fill=white, pattern color=black] (0,0) rectangle (4,4);
    \end{scope}

    \draw[line width=1.2pt] (A) circle (\r);
    \draw[line width=1.2pt] (B) circle (\r);
    \draw[line width=1.2pt] (C) circle (\r);

    \node (V_2) at (1.1,-2.5) {\huge $\mathcal{V}_2$};
\end{tikzpicture}
\begin{tikzpicture}[scale=0.5, transform shape]
    \def\r{1.5}
    \coordinate (A) at (0,0);
    \coordinate (B) at (2,0);
    \coordinate (C) at (1,1.732);

    \begin{scope}
        \clip (A) circle (\r);
        \fill[pattern=none, pattern color=black] (A) circle (\r);
    \end{scope}
   
    \begin{scope}
        \clip (B) circle (\r);
        \fill[pattern=none, pattern color=black] (B) circle (\r);
    \end{scope}

    \begin{scope}
        \clip (C) circle (\r);
        \fill[pattern=\nel, pattern color=black] (C) circle (\r);
    \end{scope}

    \begin{scope}
        \clip (A) circle (\r);
        \clip (B) circle (\r);
        \fill[pattern=\nel, pattern color=black] (-2,-2) rectangle (4,4);
    \end{scope}

    \begin{scope}
        \clip (A) circle (\r);
        \clip (C) circle (\r);
        \fill[pattern=\nel, pattern color=black] (-1,-1) rectangle (4,4);
    \end{scope}

    \begin{scope}
        \clip (B) circle (\r);
        \clip (C) circle (\r);
        \fill[pattern=\nel, pattern color=black] (0,0) rectangle (4,4);
    \end{scope}

    \begin{scope}
        \clip (A) circle (\r);
        \clip (B) circle (\r);
        \clip (C) circle (\r);
        \fill[fill=white, pattern color=black] (0,0) rectangle (4,4);
    \end{scope}

    \draw[line width=1.2pt] (A) circle (\r);
    \draw[line width=1.2pt] (B) circle (\r);
    \draw[line width=1.2pt] (C) circle (\r);

    \node (V_3) at (1.1,-2.5) {\huge $\mathcal{V}_3$};
\end{tikzpicture}
\begin{tikzpicture}[scale=0.5, transform shape]
    \def\r{1.5}
    \coordinate (A) at (0,0);
    \coordinate (B) at (2,0);
    \coordinate (C) at (1,1.732);

    \begin{scope}
        \clip (A) circle (\r);
        \fill[pattern=none, pattern color=black] (A) circle (\r);
    \end{scope}
   
    \begin{scope}
        \clip (B) circle (\r);
        \fill[pattern=none, pattern color=black] (B) circle (\r);
    \end{scope}

    \begin{scope}
        \clip (C) circle (\r);
        \fill[pattern=none, pattern color=black] (C) circle (\r);
    \end{scope}

    \begin{scope}
        \clip (A) circle (\r);
        \clip (B) circle (\r);
        \fill[pattern=\nel, pattern color=black] (-2,-2) rectangle (4,4);
    \end{scope}

    \begin{scope}
        \clip (A) circle (\r);
        \clip (C) circle (\r);
        \fill[pattern=\nel, pattern color=black] (-1,-1) rectangle (4,4);
    \end{scope}

    \begin{scope}
        \clip (B) circle (\r);
        \clip (C) circle (\r);
        \fill[pattern=\nel, pattern color=black] (0,0) rectangle (4,4);
    \end{scope}

    \begin{scope}
        \clip (A) circle (\r);
        \clip (B) circle (\r);
        \clip (C) circle (\r);
        \fill[fill=white, pattern color=black] (0,0) rectangle (4,4);
    \end{scope}

    \draw[line width=1.2pt] (A) circle (\r);
    \draw[line width=1.2pt] (B) circle (\r);
    \draw[line width=1.2pt] (C) circle (\r);

    \node (V_4) at (1.1,-2.5) {\huge $\mathcal{V}_4$};
\end{tikzpicture}\\

These Venn diagrams all represent cyclic hypergraphs with three edges. Adjacent edges share one vertex, while each edge itself may contain vertices of degree one. Crucially, the Venn diagrams in this group have in common that they forbid the presence of a vertex in the intersection of all three edges. In that way, $V_1,\dots,V_4$ naturally model the hypergraphs $\Delta(j_1,j_2,j_3)$.

For the proof of the following lemma, we observe that fractures of $\Delta(j_1,j_2,j_3)$ are obtained by splitting any subset of the three degree-2 vertices (see vertices $\{a,b,c\}$ in Figure~\ref{fig:Delta}). Note also that any fracture $\vec{\sigma}$ of $\Delta(j_1,j_2,j_3)$ must satisfy that $\vec{\rho}_x=\{\{e\}\}$ for any degree-1 vertex $x$ with $E(x)=\{e\}$ of $\Delta(j_1,j_2,j_3)$, as the only partition of the set $\{e\}$ is $\{\{e\}\}$. Specifically, when investigating the fractured graphs of $\Delta(j_1,j_2,j_3)$, we only have to check which of the degree-2 vertices $\{a,b,c\}$ are split.

\begin{lemma}
    We have
    \begin{itemize}
        \item  $\coeff_{\cycle(1,1,1),\mathcal{V}_1}(\vec{\top})\neq 0$.
        \item  $\coeff_{\cycle(1,1,0),\mathcal{V}_2}(\vec{\top})\neq 0$.
        \item  $\coeff_{\cycle(1,0,0),\mathcal{V}_3}(\vec{\top})\neq 0$.
        \item  $\coeff_{\cycle(0,0,0),\mathcal{V}_4}(\vec{\top})\neq 0$.
    \end{itemize}
    \label{lem:coeff_V1_to_V4}
\end{lemma}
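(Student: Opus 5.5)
The plan is to evaluate the formula of Corollary~\ref{cor:top_coefficient} directly:
\[\coeff_{H,\mathcal{V}}(\vec{\top}) =\sum_{\vecsigma\in\mathcal{L}(H,\mathcal{V})}~\prod_{v \in V(H)} (-1)^{|\sigma_v|-1} \cdot (|\sigma_v|-1)!\,.\]
The key step is to show that, in each of the four cases, $\mathcal{L}(H,\mathcal{V})=\{\vec{\top}\}$. Granting this, the sum has the single term $\prod_v (-1)^0\cdot 0! = 1$, so $\coeff_{H,\mathcal{V}}(\vec{\top})=1\neq 0$.

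First I read off the four diagrams from the pictures. All four require the three pairwise-only regions $J=\{1,2\},\{1,3\},\{2,3\}$ to be non-empty and the triple region $J=\{1,2,3\}$ to be empty. They differ only in the single-edge regions: $\mathcal{V}_1$ requires all three to be non-empty, $\mathcal{V}_2$ exactly two, $\mathcal{V}_3$ exactly one, and $\mathcal{V}_4$ none.

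Next I check that $\vec{\top}$ lies in $\mathcal{L}(H,\mathcal{V})$. Note that $\Delta(j_1,j_2,j_3)\natural\vec{\top}=\Delta(j_1,j_2,j_3)$. With the ordering $e_1,e_2,e_3$, the three pairwise regions are $\{a\}$, $\{b\}$, $\{c\}$, and the triple region is empty. The private region of $e_i$ is non-empty if and only if $j_i\geq 1$. Hence $\Delta(1,1,1)$ satisfies $\mathcal{V}_1$, and $\Delta(0,0,0)$ satisfies $\mathcal{V}_4$. For $\mathcal{V}_2$ and $\mathcal{V}_3$, satisfaction only requires the existence of some ordering of the edges. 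I would permute the edges so that the edges with $j_i=1$ match the filled single regions (sets $A,C$ for $\mathcal{V}_2$, set $C$ for $\mathcal{V}_3$). This works because the pairwise pattern is symmetric under all permutations.

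Now I show that no other fracture lies in $\mathcal{L}(H,\mathcal{V})$. As observed before the lemma, a fracture is determined by which of the degree-$2$ vertices $a,b,c$ are split. Suppose, say, $a$ is split. Then in $H\natural\vecsigma$ the vertex $a$ becomes two degree-$1$ vertices $a_{\{e_1\}}\in e_1$ and $a_{\{e_2\}}\in e_2$. Since $a$ was the unique vertex of $e_1\cap e_2$ in $H$, the intersection of the images of $e_1$ and $e_2$ becomes empty. Thus some pair of edges in $H\natural\vecsigma$ is disjoint. Every ordering would then have an empty pairwise region, whereas all of $\mathcal{V}_1,\dots,\mathcal{V}_4$ require all pairwise regions to be non-empty. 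The same argument applies to $b$ and $c$.

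The only point needing care is this last step. One must make sure the conclusion holds under every ordering of the three edges, which follows from the symmetry noted above. One must also check that edges of $H\natural\vecsigma$ remain distinct, which is immediate here.
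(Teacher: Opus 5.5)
Your proposal is correct and takes the same route as the paper. You check that $\Delta(j_1,j_2,j_3)$ itself satisfies the respective diagram, and that splitting any of $a,b,c$ empties a pairwise intersection, so $\mathcal{L}(H,\mathcal{V})=\{\vec{\top}\}$ and the coefficient equals $1$; you are just more explicit than the paper about the single-edge regions and the edge ordering for $\mathcal{V}_2$ and $\mathcal{V}_3$, which the paper dismisses as analogous.
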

\begin{proof}
Let us label the three degree-2 vertices $a,b,c$, and the three edges $e_1, e_2, e_3$ as in Figure \ref{fig:Delta}.

For $\mathcal{V}_1$, we will pick the hypergraph $H=\cycle(1,1,1)$ and we observe that $\mathcal{V}_1(H)=1$. As $H=H\natural\vec{\top}$ we thus have $\vec\top \in \mathcal{L}(H,\mathcal{V})$. For computing the coefficient $\coeff_{\cycle(1,1,1),\mathcal{V}_1}(\vec{\top})$, we sum over all fractures of $H$ that satisfy the Venn diagram. However, we observe that any fracture $\vec{\sigma}$ of $H$ that splits\footnote{Formally, we say that $\vec\sigma$ splits a degree-2 vertex $a$ with $E(a)=\{e_1,e_2\}$ if $\vec\sigma_a=\{\{e_1\},\{e_2\}\}$.} $a,b$ or $c$ will no longer satisfy the Venn diagram; Figure \ref{fig:fractured_hypergraph_example} illustrates splitting vertex $a$ and observe that the resulting fractured hypergraph no longer has the three pairwise intersections that $\mathcal{V}_1$ requires. Hence, $\mathcal{L}(H,\mathcal{V}_1)=\{\vec\top\}$ and thus the only term that is needed to calculate the coefficient for $\mathcal{V}_1$ is $\vec\top$.
As $|\vec\top_v|=1$ for all $v\in V(H)$, we have 
\begin{align*}
    \coeff_{\cycle(1,1,1),\mathcal{V}_1}(\vec{\top})&= \sum_{\vecsigma\in\mathcal{L}(H,\mathcal{V}_1)}~\prod_{v \in V(H)} (-1)^{|\vec\sigma_v|-1} \cdot (|\vec\sigma_v|-1)!\\
    &= \prod_{v \in V(H)} (-1)^{|\vec\top_v|-1} \cdot (|\vec\top_v|-1)!\\
    &= ((-1)^0\cdot0!)^{|V(H)|} \\
    &=1 \neq 0.
\end{align*}

For $\mathcal{V}_2,\mathcal{V}_3,$ and $\mathcal{V}_4$, we proceed with an analogous argument; the only difference is that we choose $\Delta(1,1,0)$, $\Delta(1,0,0)$, and $\Delta(0,0,0)$ to accommodate the restrictions on the absence of degree-1 vertices enforced by the Venn diagrams.

\end{proof}

\subsection{Coefficients of $\mathcal{V}_5$ and $\mathcal{V}_6$}

\begin{tikzpicture}[scale=0.5, transform shape]
    \def\r{1.5}
    \coordinate (A) at (0,0);
    \coordinate (B) at (2,0);
    \coordinate (C) at (1,1.732);

    \begin{scope}
        \clip (A) circle (\r);
        \fill[pattern=\nel, pattern color=black] (A) circle (\r);
    \end{scope}
   
    \begin{scope}
        \clip (B) circle (\r);
        \fill[pattern=\nel, pattern color=black] (B) circle (\r);
    \end{scope}

    \begin{scope}
        \clip (C) circle (\r);
        \fill[pattern=\nel, pattern color=black] (C) circle (\r);
    \end{scope}

    \begin{scope}
        \clip (A) circle (\r);
        \clip (B) circle (\r);
        \fill[pattern=\nel, pattern color=black] (-2,-2) rectangle (4,4);
    \end{scope}

    \begin{scope}
        \clip (A) circle (\r);
        \clip (C) circle (\r);
        \fill[fill=white, pattern color=black] (-1,-1) rectangle (4,4);
    \end{scope}

    \begin{scope}
        \clip (B) circle (\r);
        \clip (C) circle (\r);
        \fill[pattern=\nel, pattern color=black] (0,0) rectangle (4,4);
    \end{scope}

    \begin{scope}
        \clip (A) circle (\r);
        \clip (B) circle (\r);
        \clip (C) circle (\r);
        \fill[fill=white, pattern color=black] (0,0) rectangle (4,4);
    \end{scope}

    \draw[line width=1.2pt] (A) circle (\r);
    \draw[line width=1.2pt] (B) circle (\r);
    \draw[line width=1.2pt] (C) circle (\r);

    \node (V_5) at (1.1,-2.5) {\huge $\mathcal{V}_5$};
\end{tikzpicture}
\begin{tikzpicture}[scale=0.5, transform shape]
    \def\r{1.5}
    \coordinate (A) at (0,0);
    \coordinate (B) at (2,0);
    \coordinate (C) at (1,1.732);

    \begin{scope}
        \clip (A) circle (\r);
        \fill[pattern=\nel, pattern color=black] (A) circle (\r);
    \end{scope}
   
    \begin{scope}
        \clip (B) circle (\r);
        \fill[pattern=none, pattern color=black] (B) circle (\r);
    \end{scope}

    \begin{scope}
        \clip (C) circle (\r);
        \fill[pattern=\nel, pattern color=black] (C) circle (\r);
    \end{scope}

    \begin{scope}
        \clip (A) circle (\r);
        \clip (B) circle (\r);
        \fill[pattern=\nel, pattern color=black] (-2,-2) rectangle (4,4);
    \end{scope}

    \begin{scope}
        \clip (A) circle (\r);
        \clip (C) circle (\r);
        \fill[fill=white, pattern color=black] (-1,-1) rectangle (4,4);
    \end{scope}

    \begin{scope}
        \clip (B) circle (\r);
        \clip (C) circle (\r);
        \fill[pattern=\nel, pattern color=black] (0,0) rectangle (4,4);
    \end{scope}

    \begin{scope}
        \clip (A) circle (\r);
        \clip (B) circle (\r);
        \clip (C) circle (\r);
        \fill[fill=white, pattern color=black] (0,0) rectangle (4,4);
    \end{scope}

    \draw[line width=1.2pt] (A) circle (\r);
    \draw[line width=1.2pt] (B) circle (\r);
    \draw[line width=1.2pt] (C) circle (\r);

    \node (V_6) at (1.1,-2.5) {\huge $\mathcal{V}_6$};
\end{tikzpicture} \\

These Venn diagrams denote hypergraphs with three edges that form a `path-like' structure (for example, see $H\natural\vecrho$ in Figure~\ref{fig:fractured_hypergraph_example}. There exists an edge that has a common vertex with each of the other two edges, resulting a sequence of three connected edges. We will obtain hypergraphs satisfying $\mathcal{V}_5$ and $\mathcal{V}_6$ by taking a fracture of $\cycle(1,1,1)$ and $\cycle(1,1,0)$, respectively.

\begin{lemma}
    We have
    \begin{itemize}
        \item  $\coeff_{\cycle(1,1,1),\mathcal{V}_5}(\vec{\top})\neq 0$.
        \item  $\coeff_{\cycle(1,1,0),\mathcal{V}_6}(\vec{\top})\neq 0$.
    \end{itemize}
    \label{lem:coeff_V5_to_V6}
\end{lemma}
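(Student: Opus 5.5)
The plan is to compute $\coeff_{H,\mathcal{V}}(\vec\top)$ directly from the formula of Corollary~\ref{cor:top_coefficient}, exactly as in the proof of Lemma~\ref{lem:coeff_V1_to_V4}. As observed before that lemma, a fracture of $\cycle(j_1,j_2,j_3)$ is determined by the subset of the degree-$2$ vertices $a,b,c$ that it splits. A split vertex $v$ has $|\sigma_v|=2$ and so contributes the factor $(-1)^1\cdot 1!=-1$ to the product; unsplit vertices contribute $1$. Hence $\coeff_{H,\mathcal{V}}(\vec\top)=\sum_{\vec\sigma\in\mathcal{L}(H,\mathcal{V})}(-1)^{s(\vec\sigma)}$, where $s(\vec\sigma)$ is the number of split vertices. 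It therefore suffices to determine exactly which of the $8$ subsets $S\subseteq\{a,b,c\}$ produce a fractured hypergraph satisfying the given Venn diagram.

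Recall the labelling $e_1=\{a,b\}$, $e_2=\{a,c\}$, $e_3=\{b,c\}$, with the degree-$1$ vertices added. I first read off $\mathcal{V}_5$ and $\mathcal{V}_6$ from the pictures. Both require a ``path'': two pairs of edges intersect, the third pair is disjoint, and the triple intersection is empty. $\mathcal{V}_5$ additionally requires all three private regions to be non-empty. $\mathcal{V}_6$ requires the private region of the middle edge to be empty and the other two private regions to be non-empty.

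Next I rule out most subsets $S$ by the shape of $H\natural\vec\sigma$. For $S=\emptyset$ we obtain the triangle, in which all three pairwise intersections are non-empty, so it is not a path. If $|S|\geq 2$, some edge loses both of its shared vertices: for instance, splitting $a$ and $b$ isolates $e_1$. Then at most one pairwise intersection survives, and the result is neither connected nor a path. So only fractures with $|S|=1$ can qualify. Splitting exactly one vertex $x$ yields a path whose middle edge is the edge not containing $x$. The two copies of $x$ become private vertices of the two end edges, and the middle edge keeps only its original degree-$1$ vertices as private vertices.

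For $\mathcal{V}_5$ with $H=\cycle(1,1,1)$, every edge has a degree-$1$ vertex, so all three singleton splits give paths with all private regions non-empty. Thus $|\mathcal{L}(H,\mathcal{V}_5)|=3$ and the coefficient is $-3\neq 0$.

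For $\mathcal{V}_6$ with $H=\cycle(1,1,0)$, the middle edge must have no private vertex, and only $e_3$ has none. So we must split $a$, the vertex not in $e_3$. The end edges $e_1$ and $e_2$ then have private vertices: their original degree-$1$ vertices, plus the copies of $a$. Splitting $b$ or $c$ instead makes $e_2$ or $e_1$ the middle edge, and each of these carries a degree-$1$ vertex, which violates $\mathcal{V}_6$. Hence $\mathcal{L}(H,\mathcal{V}_6)=\{\vec\sigma\}$ with $\vec\sigma$ splitting only $a$, and the coefficient is $-1\neq 0$.

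The only delicate step is reading the two diagrams correctly, in particular which circle is the middle edge and which private regions are shaded. I would also check that membership in $\mathcal{L}(H,\mathcal{V})$ allows an arbitrary ordering of the edges, so that the identification of the middle edge does not depend on the labelling.
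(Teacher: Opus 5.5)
Your proposal is correct and follows the paper's proof essentially step for step. You restrict to fractures splitting a subset $S\subseteq\{a,b,c\}$ and rule out $S=\emptyset$ and $|S|\geq 2$. You then find that all three single splits of $\cycle(1,1,1)$ satisfy $\mathcal{V}_5$, giving coefficient $-3$, while only splitting $a$ in $\cycle(1,1,0)$ satisfies $\mathcal{V}_6$, giving coefficient $-1$; your reading of the two diagrams (a path whose middle edge's private region is non-empty in $\mathcal{V}_5$ and empty in $\mathcal{V}_6$) is the right one.
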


\begin{proof}
For $\mathcal{V}_5$, we use $H=\cycle(1,1,1)$ with vertices and edges labelled as in Figure \ref{fig:Delta}. However, note that $H$ does not directly satisfy $\mathcal{V}_5$, which only has two pairwise intersections.
To satisfy $\mathcal{V}_5$, we must fracture any of the three degree-2 vertices in $H$. For example, splitting vertex $a$ corresponds to the fracture
\begin{equation*}
    \vecsigma=\big(\{\{e_1\},\{e_2\}\},{\top},\dots,\top\big).
\end{equation*}
Then, the contribution of this fracture to the coefficient is 
\begin{align*}
    (-1)^1\cdot1!\cdot\prod_{\substack{v \in V(H) \\ v\neq a}}(-1)^0\cdot0! =-1.
\end{align*}
The fractured hypergraph resulting from splitting vertex $b$ or $c$ will also satisfy $\mathcal{V}_5$. Both of these fractures give us the same coefficient due to having the same number and size of blocks. Finally, note that splitting any two (or all three) of $a,b,c$ yields a hypergraph not satisfying $\mathcal{V}_5$. Hence
\begin{align*}
    \coeff_{\cycle(1,1,1),\mathcal{V}_5}(\vec{\top})&= 3\cdot (-1)
    =-3\neq 0.
\end{align*}

For the case of $\mathcal{V}_6$, using $H=\Delta(1,1,0)$, first note that the edge $e_3$ in $\cycle(1,1,0)$ contains no vertex of degree $1$ (in fact, $e_3=\{c,b\}$), but both $e_1$ and $e_2$ contain one vertex of degree $1$. Similarly as in the previous cases, the only choice for picking a fracture is whether or not we split $a$, $b$, and $c$. Splitting either two or all three of them yields an isolated edge and the Venn diagram $\mathcal{V}_6$ is not satisfied. Splitting none of them yields the hypergraph $\cycle(1,1,0)$ itself, which also does not satisfy $\mathcal{V}_6$. Hence we can only split precisely one of $a$, $b$, and $c$ if we hope to satisfy $\mathcal{V}_6$.

We observe that splitting $b$ or (symmetrically) $c$, we obtain a fractured graph that satisfies $\mathcal{V}_5$ and not $\mathcal{V}_6$. In fact, only the splitting of $a$ yields a fractured graph satisfying $\mathcal{V}_6$, the reason for which is that $e_3$ keeps being the unique edge with no degree $1$ vertex in the fractures graph.
Hence we are only splitting one of the degree-2 vertices, so the corresponding fracture satisfies 
\[\vecsigma =\big(\{\{e_1\},\{e_2\}\},\top,\dots,\top\big)\,, \]
and therefore
\begin{equation*}
    \coeff_{\cycle(1,1,0),\mathcal{V}_6}(\vec{\top})= (-1)^1\cdot1!\cdot\prod_{\substack{v \in V(H) \\ v\neq a}}(-1)^0\cdot0! =-1\neq 0.
\end{equation*}
\end{proof}

\subsection{Coefficients of $\mathcal{V}_7$, $\mathcal{V}_8$, $\mathcal{V}_9$, and $\mathcal{V}_{10}$}
\begin{tikzpicture}[scale=0.5, transform shape]
    \def\r{1.5}
    \coordinate (A) at (0,0);
    \coordinate (B) at (2,0);
    \coordinate (C) at (1,1.732);

    \begin{scope}
        \clip (A) circle (\r);
        \fill[pattern=\nel, pattern color=black] (A) circle (\r);
    \end{scope}
   
    \begin{scope}
        \clip (B) circle (\r);
        \fill[pattern=\nel, pattern color=black] (B) circle (\r);
    \end{scope}

    \begin{scope}
        \clip (C) circle (\r);
        \fill[pattern=\nel, pattern color=black] (C) circle (\r);
    \end{scope}

    \begin{scope}
        \clip (A) circle (\r);
        \clip (B) circle (\r);
        \fill[pattern=\nel, pattern color=black] (-2,-2) rectangle (4,4);
    \end{scope}

    \begin{scope}
        \clip (A) circle (\r);
        \clip (C) circle (\r);
        \fill[pattern=\nel, pattern color=black] (-1,-1) rectangle (4,4);
    \end{scope}

    \begin{scope}
        \clip (B) circle (\r);
        \clip (C) circle (\r);
        \fill[pattern=\nel, pattern color=black] (0,0) rectangle (4,4);
    \end{scope}

    \begin{scope}
        \clip (A) circle (\r);
        \clip (B) circle (\r);
        \clip (C) circle (\r);
        \fill[pattern=\nel, pattern color=black] (0,0) rectangle (4,4);
    \end{scope}

    \draw[line width=1.2pt] (A) circle (\r);
    \draw[line width=1.2pt] (B) circle (\r);
    \draw[line width=1.2pt] (C) circle (\r);

    \node (V_7) at (1.1,-2.5) {\huge $\mathcal{V}_7$};
\end{tikzpicture}
\begin{tikzpicture}[scale=0.5, transform shape]
    \def\r{1.5}
    \coordinate (A) at (0,0);
    \coordinate (B) at (2,0);
    \coordinate (C) at (1,1.732);

    \begin{scope}
        \clip (A) circle (\r);
        \fill[pattern=\nel, pattern color=black] (A) circle (\r);
    \end{scope}
   
    \begin{scope}
        \clip (B) circle (\r);
        \fill[pattern=none, pattern color=black] (B) circle (\r);
    \end{scope}

    \begin{scope}
        \clip (C) circle (\r);
        \fill[pattern=\nel, pattern color=black] (C) circle (\r);
    \end{scope}

    \begin{scope}
        \clip (A) circle (\r);
        \clip (B) circle (\r);
        \fill[pattern=\nel, pattern color=black] (-2,-2) rectangle (4,4);
    \end{scope}

    \begin{scope}
        \clip (A) circle (\r);
        \clip (C) circle (\r);
        \fill[pattern=\nel, pattern color=black] (-1,-1) rectangle (4,4);
    \end{scope}

    \begin{scope}
        \clip (B) circle (\r);
        \clip (C) circle (\r);
        \fill[pattern=\nel, pattern color=black] (0,0) rectangle (4,4);
    \end{scope}

    \begin{scope}
        \clip (A) circle (\r);
        \clip (B) circle (\r);
        \clip (C) circle (\r);
        \fill[pattern=\nel, pattern color=black] (0,0) rectangle (4,4);
    \end{scope}

    \draw[line width=1.2pt] (A) circle (\r);
    \draw[line width=1.2pt] (B) circle (\r);
    \draw[line width=1.2pt] (C) circle (\r);

    \node (V_8) at (1.1,-2.5) {\huge $\mathcal{V}_8$};
\end{tikzpicture}
\begin{tikzpicture}[scale=0.5, transform shape]
    \def\r{1.5}
    \coordinate (A) at (0,0);
    \coordinate (B) at (2,0);
    \coordinate (C) at (1,1.732);

    \begin{scope}
        \clip (A) circle (\r);
        \fill[pattern=none, pattern color=black] (A) circle (\r);
    \end{scope}
   
    \begin{scope}
        \clip (B) circle (\r);
        \fill[pattern=none, pattern color=black] (B) circle (\r);
    \end{scope}

    \begin{scope}
        \clip (C) circle (\r);
        \fill[pattern=\nel, pattern color=black] (C) circle (\r);
    \end{scope}

    \begin{scope}
        \clip (A) circle (\r);
        \clip (B) circle (\r);
        \fill[pattern=\nel, pattern color=black] (-2,-2) rectangle (4,4);
    \end{scope}

    \begin{scope}
        \clip (A) circle (\r);
        \clip (C) circle (\r);
        \fill[pattern=\nel, pattern color=black] (-1,-1) rectangle (4,4);
    \end{scope}

    \begin{scope}
        \clip (B) circle (\r);
        \clip (C) circle (\r);
        \fill[pattern=\nel, pattern color=black] (0,0) rectangle (4,4);
    \end{scope}

    \begin{scope}
        \clip (A) circle (\r);
        \clip (B) circle (\r);
        \clip (C) circle (\r);
        \fill[pattern=\nel, pattern color=black] (0,0) rectangle (4,4);
    \end{scope}

    \draw[line width=1.2pt] (A) circle (\r);
    \draw[line width=1.2pt] (B) circle (\r);
    \draw[line width=1.2pt] (C) circle (\r);

    \node (V_9) at (1.1,-2.5) {\huge $\mathcal{V}_9$};
\end{tikzpicture}
\begin{tikzpicture}[scale=0.5, transform shape]
    \def\r{1.5}
    \coordinate (A) at (0,0);
    \coordinate (B) at (2,0);
    \coordinate (C) at (1,1.732);

    \begin{scope}
        \clip (A) circle (\r);
        \fill[pattern=none, pattern color=black] (A) circle (\r);
    \end{scope}
   
    \begin{scope}
        \clip (B) circle (\r);
        \fill[pattern=none, pattern color=black] (B) circle (\r);
    \end{scope}

    \begin{scope}
        \clip (C) circle (\r);
        \fill[pattern=none, pattern color=black] (C) circle (\r);
    \end{scope}

    \begin{scope}
        \clip (A) circle (\r);
        \clip (B) circle (\r);
        \fill[pattern=\nel, pattern color=black] (-2,-2) rectangle (4,4);
    \end{scope}

    \begin{scope}
        \clip (A) circle (\r);
        \clip (C) circle (\r);
        \fill[pattern=\nel, pattern color=black] (-1,-1) rectangle (4,4);
    \end{scope}

    \begin{scope}
        \clip (B) circle (\r);
        \clip (C) circle (\r);
        \fill[pattern=\nel, pattern color=black] (0,0) rectangle (4,4);
    \end{scope}

    \begin{scope}
        \clip (A) circle (\r);
        \clip (B) circle (\r);
        \clip (C) circle (\r);
        \fill[pattern=\nel, pattern color=black] (0,0) rectangle (4,4);
    \end{scope}

    \draw[line width=1.2pt] (A) circle (\r);
    \draw[line width=1.2pt] (B) circle (\r);
    \draw[line width=1.2pt] (C) circle (\r);

    \node (V_10) at (1.1,-2.5) {\huge $\mathcal{V}_{10}$};
\end{tikzpicture} \\

The Venn diagrams $\mathcal{V}_7,\dots,\mathcal{V}_{10}$ correspond to the hypergraphs $\Gamma(j_1,j_2,j_3)$. Fractures of $\Gamma(j_1,j_2,j_3)$ are obtained by splitting any subset of the vertices having degree more than one (i.e., vertices $\{a,b,c,d\}$ in Figure~\ref{fig:Gamma}). 

\begin{lemma}[Coefficients for $\mathcal{V}_7,\dots,\mathcal{V}_{10}$]
    We have
    \begin{itemize}
        \item  $\coeff_{\hyperclique(1,1,1),\mathcal{V}_7}(\vec{\top})\neq 0$.
        \item  $\coeff_{\hyperclique(1,1,0),\mathcal{V}_8}(\vec{\top})\neq 0$.
        \item  $\coeff_{\hyperclique(1,0,0),\mathcal{V}_9}(\vec{\top})\neq 0$.
        \item  $\coeff_{\hyperclique(0,0,0),\mathcal{V}_{10}}(\vec{\top})\neq 0$.
    \end{itemize}
    \label{lem:coeff_V7_to_V10}
\end{lemma}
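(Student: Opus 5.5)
The plan is the same as for Lemma~\ref{lem:coeff_V1_to_V4}. For each $\mathcal{V}_i$ with $i\in\{7,\dots,10\}$, and the chosen $H=\hyperclique(j_1,j_2,j_3)$, I will show that $\mathcal{L}(H,\mathcal{V}_i)=\{\vec\top\}$. Corollary~\ref{cor:top_coefficient} then gives
\[\coeff_{H,\mathcal{V}_i}(\vec\top)=\prod_{v\in V(H)}(-1)^{0}\cdot 0!=1\neq 0.\]

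\textbf{Step 1: $\vec\top\in\mathcal{L}(H,\mathcal{V}_i)$.} Recall that $H\natural\vec\top=H$, and label $\hyperclique$ as in Figure~\ref{fig:Gamma}: $e_1=\{a,b,d\}$, $e_2=\{a,b,c\}$, $e_3=\{a,c,d\}$. The Venn regions of $H$ are then as follows.
\begin{itemize}
\item The triple intersection is exactly $\{a\}$.
\item The three ``pairwise only'' regions are $\{b\}$ (for $e_1,e_2$), $\{d\}$ (for $e_1,e_3$) and $\{c\}$ (for $e_2,e_3$).
\item The ``single'' region of $e_i$ consists of the $j_i$ added degree-$1$ vertices.
\end{itemize}
So all four intersection regions are non-empty, as $\mathcal{V}_7,\dots,\mathcal{V}_{10}$ require. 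The number of non-empty single regions is $3,2,1,0$ for $(1,1,1),(1,1,0),(1,0,0),(0,0,0)$ respectively. Since satisfaction of a Venn diagram allows reordering the edges, this matches $\mathcal{V}_7,\mathcal{V}_8,\mathcal{V}_9,\mathcal{V}_{10}$.

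\textbf{Step 2: no other fracture lies in $\mathcal{L}(H,\mathcal{V}_i)$.} Degree-$1$ vertices admit only the trivial partition, so a fracture $\vecsigma\neq\vec\top$ splits at least one of $a,b,c,d$.
\begin{itemize}
\item If $\vecsigma$ splits $b$ (the case of $c$ or $d$ is symmetric), then $b$ was the unique vertex of the region ``in $e_1\cap e_2$ only''. In $H\natural\vecsigma$ the copies of $b$ each lie in a single edge. The only other vertex lying in both $e_1$ and $e_2$ is $a$, or copies of $a$, and these lie either in all three edges or in fewer. So that pairwise region becomes empty, and $H\natural\vecsigma$ violates $\mathcal{V}_i$.
\item If $\vecsigma$ splits $a$, that is $\sigma_a\neq\top$, then no copy of $a$ lies in all three edges. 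Since $a$ was the only vertex in the triple intersection, the triple region of $H\natural\vecsigma$ is empty, again violating $\mathcal{V}_i$.
\end{itemize}

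The one point needing care is that the conclusion must hold for every ordering of the three edges of $H\natural\vecsigma$. This is automatic here: the failure is an \emph{emptiness} of a region that each $\mathcal{V}_i$ requires non-empty. The triple region is invariant under reordering, and every pairwise region is required to be non-empty, so no reordering rescues the fracture. I expect this bookkeeping to be the only mildly delicate step; it is much simpler than in Lemma~\ref{lem:coeff_V5_to_V6}, because here no non-trivial fracture contributes and hence no cancellation can occur.
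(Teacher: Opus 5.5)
Your proposal is correct and is essentially the paper's argument: show $\mathcal{L}(H,\mathcal{V}_i)=\{\vec\top\}$, since splitting $a$ empties the triple region and splitting any of $b,c,d$ empties a pairwise region, then read off the coefficient $1$. One small slip in your first bullet: if $a$ is also split, for example with $\sigma_a=\{\{e_1,e_2\},\{e_3\}\}$, the copy of $a$ lies in exactly $e_1\cap e_2$, so that pairwise region is not empty. State that bullet under the assumption $\sigma_a=\top$ and let your second bullet cover every fracture that splits $a$.
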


\begin{proof}
The Venn diagrams $\mathcal{V}_7,\dots,\mathcal{V}_{10}$ are those that cannot be satisfied when taking any non-trivial fracture of the respective hypergraphs $H=\Gamma(j_1,j_2,j_3)$ that split vertices $a,b,c,d$ (as labelled in Figure~\ref{fig:Gamma}): taking any non-trivial fracture of vertex $a$ leaves the intersection of all three edges empty. Splitting any subset of the degree-2 vertices will leave at least one pairwise intersection empty. In both cases, we no longer satisfy any of the Venn diagrams. Therefore, for each of the four cases, the only fracture we can hope to yield a fractured graph satisfying the respective Venn diagram is the trivial fracture $\vec{\top}$. 

Observe that $\mathcal{V}_7(\Gamma(1,1,1)\natural \vec{\top})= \mathcal{V}_7(\Gamma(1,1,1))=1$; hence $\mathcal{L}(\mathcal{V}_7,\Gamma(1,1,1))=\{\vec{\top}\}$, and thus
\[\coeff_{\hyperclique(1,1,1),\mathcal{V}_7}(\vec{\top}) = \prod_{v \in V(\Gamma(1,1,1))}(-1)^0\cdot0! =1\neq 0.\]
The identical argument applies to $\mathcal{V}_8$, $\mathcal{V}_9$, and $\mathcal{V}_{10}$ and, respectively, $\Gamma(1,1,0)$, $\Gamma(1,0,0)$, and $\Gamma(0,0,0)$.
\end{proof}

\subsection{Coefficients of $\mathcal{V}_{11}$, $\mathcal{V}_{12}$, and $\mathcal{V}_{13}$}
\begin{tikzpicture}[scale=0.5, transform shape]
    \def\r{1.5}
    \coordinate (A) at (0,0);
    \coordinate (B) at (2,0);
    \coordinate (C) at (1,1.732);

    \begin{scope}
        \clip (A) circle (\r);
        \fill[pattern=\nel, pattern color=black] (A) circle (\r);
    \end{scope}
   
    \begin{scope}
        \clip (B) circle (\r);
        \fill[pattern=none, pattern color=black] (B) circle (\r);
    \end{scope}

    \begin{scope}
        \clip (C) circle (\r);
        \fill[pattern=\nel, pattern color=black] (C) circle (\r);
    \end{scope}

    \begin{scope}
        \clip (A) circle (\r);
        \clip (B) circle (\r);
        \fill[pattern=\nel, pattern color=black] (-2,-2) rectangle (4,4);
    \end{scope}

    \begin{scope}
        \clip (A) circle (\r);
        \clip (C) circle (\r);
        \fill[fill=white, pattern color=black] (-1,-1) rectangle (4,4);
    \end{scope}

    \begin{scope}
        \clip (B) circle (\r);
        \clip (C) circle (\r);
        \fill[pattern=\nel, pattern color=black] (0,0) rectangle (4,4);
    \end{scope}

    \begin{scope}
        \clip (A) circle (\r);
        \clip (B) circle (\r);
        \clip (C) circle (\r);
        \fill[pattern=\nel, pattern color=black] (0,0) rectangle (4,4);
    \end{scope}

    \draw[line width=1.2pt] (A) circle (\r);
    \draw[line width=1.2pt] (B) circle (\r);
    \draw[line width=1.2pt] (C) circle (\r);

    \node (V_11) at (1.1,-2.5) {\huge $\mathcal{V}_{11}$};
\end{tikzpicture} 
\begin{tikzpicture}[scale=0.5, transform shape]
    \def\r{1.5}
    \coordinate (A) at (0,0);
    \coordinate (B) at (2,0);
    \coordinate (C) at (1,1.732);

    \begin{scope}
        \clip (A) circle (\r);
        \fill[pattern=\nel, pattern color=black] (A) circle (\r);
    \end{scope}
   
    \begin{scope}
        \clip (B) circle (\r);
        \fill[pattern=\nel, pattern color=black] (B) circle (\r);
    \end{scope}

    \begin{scope}
        \clip (C) circle (\r);
        \fill[pattern=\nel, pattern color=black] (C) circle (\r);
    \end{scope}

    \begin{scope}
        \clip (A) circle (\r);
        \clip (B) circle (\r);
        \fill[pattern=\nel, pattern color=black] (-2,-2) rectangle (4,4);
    \end{scope}

    \begin{scope}
        \clip (A) circle (\r);
        \clip (C) circle (\r);
        \fill[fill=white, pattern color=black] (-1,-1) rectangle (4,4);
    \end{scope}

    \begin{scope}
        \clip (B) circle (\r);
        \clip (C) circle (\r);
        \fill[fill=white, pattern color=black] (0,0) rectangle (4,4);
    \end{scope}

    \begin{scope}
        \clip (A) circle (\r);
        \clip (B) circle (\r);
        \clip (C) circle (\r);
        \fill[pattern=\nel, pattern color=black] (0,0) rectangle (4,4);
    \end{scope}

    \draw[line width=1.2pt] (A) circle (\r);
    \draw[line width=1.2pt] (B) circle (\r);
    \draw[line width=1.2pt] (C) circle (\r);

    \node (V_12) at (1.1,-2.5) {\huge $\mathcal{V}_{12}$};
\end{tikzpicture}
\begin{tikzpicture}[scale=0.5, transform shape]
    \def\r{1.5}
    \coordinate (A) at (0,0);
    \coordinate (B) at (2,0);
    \coordinate (C) at (1,1.732);

    \begin{scope}
        \clip (A) circle (\r);
        \fill[pattern=\nel, pattern color=black] (A) circle (\r);
    \end{scope}
   
    \begin{scope}
        \clip (B) circle (\r);
        \fill[pattern=\nel, pattern color=black] (B) circle (\r);
    \end{scope}

    \begin{scope}
        \clip (C) circle (\r);
        \fill[pattern=\nel, pattern color=black] (C) circle (\r);
    \end{scope}

    \begin{scope}
        \clip (A) circle (\r);
        \clip (B) circle (\r);
        \fill[fill=white, pattern color=black] (-2,-2) rectangle (4,4);
    \end{scope}

    \begin{scope}
        \clip (A) circle (\r);
        \clip (C) circle (\r);
        \fill[fill=white, pattern color=black] (-1,-1) rectangle (4,4);
    \end{scope}

    \begin{scope}
        \clip (B) circle (\r);
        \clip (C) circle (\r);
        \fill[fill=white, pattern color=black] (0,0) rectangle (4,4);
    \end{scope}

    \begin{scope}
        \clip (A) circle (\r);
        \clip (B) circle (\r);
        \clip (C) circle (\r);
        \fill[pattern=\nel, pattern color=black] (0,0) rectangle (4,4);
    \end{scope}

    \draw[line width=1.2pt] (A) circle (\r);
    \draw[line width=1.2pt] (B) circle (\r);
    \draw[line width=1.2pt] (C) circle (\r);

    \node (V_13) at (1.1,-2.5) {\huge $\mathcal{V}_{13}$};
\end{tikzpicture} \\
All three of these Venn diagrams will require the hypergraph $H=\Gamma(0,0,0)$. $H$ simply contains vertices $a,b,c,d$ from Figure~\ref{fig:Gamma} and no degree-1 vertices. 
\begin{lemma}
    We have
    \begin{itemize}
        \item  $\coeff_{\hyperclique(0,0,0),\mathcal{V}_{11}}(\vec{\top})\neq 0$.
        \item  $\coeff_{\hyperclique(0,0,0),\mathcal{V}_{12}}(\vec{\top})\neq 0$.
        \item  $\coeff_{\hyperclique(0,0,0),\mathcal{V}_{13}}(\vec{\top})\neq 0$.
    \end{itemize}
    \label{lem:coeff_V11_to_V13}
\end{lemma}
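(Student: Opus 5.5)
The plan is to evaluate the explicit formula of Corollary~\ref{cor:top_coefficient} for $H=\Gamma(0,0,0)$, which requires determining $\mathcal{L}(H,\mathcal{V})$ exactly for $\mathcal{V}\in\{\mathcal{V}_{11},\mathcal{V}_{12},\mathcal{V}_{13}\}$. I first read off the three Venn diagrams from the pictures (the first set is $A$, the second $B$, the third $C$). All three require the triple intersection to be non-empty. For the remaining regions:
\begin{itemize}
\item $\mathcal{V}_{11}$ requires the pairwise-only regions $A\cap B$ and $B\cap C$ to be non-empty and $A\cap C$ to be empty, and the private regions of $A$ and $C$ to be non-empty and that of $B$ to be empty.
\item $\mathcal{V}_{12}$ requires all three private regions to be non-empty, exactly one pairwise-only region ($A\cap B$) to be non-empty, and the other two to be empty.
\item $\mathcal{V}_{13}$ requires all three private regions to be non-empty and all pairwise-only regions to be empty.
\end{itemize}

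Next I describe all fractures of $H$ with edges $e_1=\{a,b,d\}$, $e_2=\{a,b,c\}$, $e_3=\{a,c,d\}$. The vertex $a$ has degree $3$, so $\sigma_a$ ranges over the $5$ partitions of $\{e_1,e_2,e_3\}$. Each of $b,c,d$ has degree $2$ and is either kept or split. Since $a$ is the only vertex lying in all three edges, and splitting never creates a new vertex of degree $3$, the triple region is non-empty if and only if $\sigma_a=\top$. Hence every relevant fracture has $\sigma_a=\top$, and $a$ contributes a factor $1$ to the product.

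With $\sigma_a=\top$, the pairwise-only region of two edges is non-empty exactly when their common degree-$2$ vertex is kept: $b$ for $e_1,e_2$; $c$ for $e_2,e_3$; $d$ for $e_1,e_3$. The private region of an edge is non-empty exactly when one of its two degree-$2$ vertices is split, since that produces a degree-$1$ copy. Each split contributes $(-1)^1\cdot 1!=-1$.

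The case analysis is then a direct count.
\begin{itemize}
\item For $\mathcal{V}_{13}$ all pairwise regions must be empty, forcing $b,c,d$ all split. Then every edge gets a private vertex, so $\mathcal{L}(H,\mathcal{V}_{13})$ is a single fracture and the coefficient is $(-1)^3=-1$.
\item For $\mathcal{V}_{12}$ exactly one of $b,c,d$ is kept. Splitting the other two gives every edge a private vertex: for instance, keeping $b$, the edge $e_1$ gets a copy of $d$, $e_2$ a copy of $c$, and $e_3$ copies of both. Choosing the ordering so that the kept vertex lies in $A\cap B$ matches $\mathcal{V}_{12}$. There are three such fractures, each contributing $(-1)^2=1$, so the coefficient is $3$.
\item For $\mathcal{V}_{11}$ exactly one of $b,c,d$ is split. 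Say $d$ is split: then $e_1$ and $e_3$ receive private vertices, $e_2$ receives none, $e_1\cap e_2$ and $e_2\cap e_3$ remain non-empty, and $e_1\cap e_3$ becomes empty. With $B=e_2$ this is exactly $\mathcal{V}_{11}$. The cases of splitting $b$ or $c$ are symmetric. Keeping all three gives $\Gamma$ itself, which has no private regions and so does not satisfy $\mathcal{V}_{11}$. Hence the coefficient is $3\cdot(-1)=-3$.
\end{itemize}

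The only point requiring care is the completeness of these enumerations, that is, that no other fracture satisfies the respective diagram. This is where the obstacle lies. The key facts are that $\sigma_a\neq\top$ destroys the triple region, and that once $\sigma_a=\top$ the number of non-empty pairwise-only regions equals the number of kept vertices among $b,c,d$. Together these pin down the number of splits in each case (three, two, and one respectively), and the listed fractures are then all of them. Because all contributions within a given case have the same sign, no cancellation can occur, which yields the non-zero values $-3$, $3$, and $-1$ claimed in Lemma~\ref{lem:coeff_V11_to_V13}.
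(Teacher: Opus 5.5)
Your proposal is correct and follows essentially the same route as the paper. Both arguments first force $\sigma_a=\top$ because the triple region must be non-empty, and then show that exactly one, two, or three of $b,c,d$ must be split for $\mathcal{V}_{11}$, $\mathcal{V}_{12}$, $\mathcal{V}_{13}$, giving coefficients $-3$, $3$, and $-1$. Your observation that the number of non-empty pairwise-only regions equals the number of unsplit vertices among $b,c,d$ makes the completeness of the enumeration explicit and independent of the edge ordering, which the paper only asserts.
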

\begin{proof}
First note that splitting the centre vertex $a$ in $\Gamma(0,0,0)$ yields a hypergraph without a vertex of degree $3$, hence any fracture $\vecsigma$ of $\Gamma(0,0,0)$ yielding a fractured graph satisfying any one of $\mathcal{V}_{11}$, $\mathcal{V}_{12}$ or $\mathcal{V}_{13}$ must satisfy $\vecsigma_a = \top$

The Venn diagram $\mathcal{V}_{11}$ is satisfied by taking the fracture of any one of the degree-2 vertices in $H=\hyperclique(0,0,0)$. Ordering the vertices of $\hyperclique(0,0,0)$ as $(a,b,c,d)$,
the following fracture, for example, corresponding to the splitting of vertex $d$ in Figure \ref{fig:Gamma} and yields a fractured graph satisfying $\mathcal{V}_{11}$: $$\big(\top,\top,\top,\{\{e_1\},\{e_3\}\}\big)\,.$$
The contribution to the coefficient $\coeff_{\hyperclique(0,0,0),\mathcal{V}_{11}}(\vec{\top})$ is thus
$$\big((-1)^0\cdot0!\big)^3\cdot\big((-1)^1\cdot1!\big)=-1.$$
Since we obtain isomorphic fractured graphs by splitting vertices $b$ or $c$, we have
$$ \coeff_{\hyperclique(0,0,0),\mathcal{V}_{11}}(\vec{\top})= -3. $$

Next, $\mathcal{V}_{12}$ is satisfied by taking the fracture of any pair of degree-2 vertices in $\hyperclique(0,0,0)$. For example, splitting $c$ and $d$ corresponds to
$$\big(\top,\top,\{\{e_2\},\{e_3\}\},\{\{e_1\},\{e_3\}\}\big)\,,$$
and the contribution to the coefficient $\coeff_{\hyperclique(0,0,0),\mathcal{V}_{12}}(\vec{\top})$ is
$$\big((-1)^0\cdot0!\big)^2\cdot\big((-1)^1\cdot1!\big)^2=1\,.$$
Since there are $3$ choices of picking a pair of the degree-2 vertices $b,c,d$, we obtain
$$\coeff_{\hyperclique(0,0,0),\mathcal{V}_{12}}(\vec{\top})= 3\,.$$

Finally, $\mathcal{V}_{13}$ is satisfied by splitting all three degree-2 vertices $b,c,d$, yielding the fracture
$$\big(\top,\{{\{e_1\},\{e_2\}}\},\{\{e_2\},\{e_3\}\},\{\{e_1\},\{e_3\}\}\big)\,.$$
This is the only fracture for which the fractured graph satisfies $\mathcal{V}_{13}$, and thus
\[\coeff_{\hyperclique(0,0,0),\mathcal{V}_{13}}(\vec{\top})=\big((-1)^0\cdot0!\big)\cdot\big((-1)^1\cdot1!\big)^3= -1\neq 0\,,\]
concluding the proof.
\end{proof}

\subsection{Coefficient of $\mathcal{V}_{14}$}
\label{sec4.5}
\begin{tikzpicture}[scale=0.5, transform shape]
    \def\r{1.5}
    \coordinate (A) at (0,0);
    \coordinate (B) at (2,0);
    \coordinate (C) at (1,1.732);

    \begin{scope}
        \clip (A) circle (\r);
        \fill[pattern=\nel, pattern color=black] (A) circle (\r);
    \end{scope}
   
    \begin{scope}
        \clip (B) circle (\r);
        \fill[pattern=\nel, pattern color=black] (B) circle (\r);
    \end{scope}

    \begin{scope}
        \clip (C) circle (\r);
        \fill[pattern=\nel, pattern color=black] (C) circle (\r);
    \end{scope}

    \begin{scope}
        \clip (A) circle (\r);
        \clip (B) circle (\r);
        \fill[pattern=\nel, pattern color=black] (-2,-2) rectangle (4,4);
    \end{scope}

    \begin{scope}
        \clip (A) circle (\r);
        \clip (C) circle (\r);
        \fill[fill=white, pattern color=black] (-1,-1) rectangle (4,4);
    \end{scope}

    \begin{scope}
        \clip (B) circle (\r);
        \clip (C) circle (\r);
        \fill[pattern=\nel, pattern color=black] (0,0) rectangle (4,4);
    \end{scope}

    \begin{scope}
        \clip (A) circle (\r);
        \clip (B) circle (\r);
        \clip (C) circle (\r);
        \fill[pattern=\nel, pattern color=black] (0,0) rectangle (4,4);
    \end{scope}

    \draw[line width=1.2pt] (A) circle (\r);
    \draw[line width=1.2pt] (B) circle (\r);
    \draw[line width=1.2pt] (C) circle (\r);

    \node (V_14) at (1.1,-2.5) {\huge $\mathcal{V}_{14}$};
\end{tikzpicture}\\
\begin{lemma}
    We have $\coeff_{\hyperclique(1,0,0),\mathcal{V}_{14}}(\vec{\top})\neq 0$.
    \label{lem:coeff_V14}
\end{lemma}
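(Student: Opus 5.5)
The plan is the same as for the previous lemmas: use the formula for $\coeff_{H,\mathcal{V}}(\vec{\top})$ from Corollary~\ref{cor:top_coefficient} with $H=\Gamma(1,0,0)$. We determine $\mathcal{L}(H,\mathcal{V}_{14})$ explicitly and check that the resulting sum is non-zero.

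Reading off the Venn diagram, $\mathcal{V}_{14}$ requires:
\begin{itemize}
\item every one of the three edges has a private vertex;
\item the triple intersection is non-empty;
\item exactly one of the three regions ``in exactly two edges'' is empty.
\end{itemize}
So one edge plays the role of a ``middle'' edge meeting each of the other two outside the triple intersection.

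In $H=\Gamma(1,0,0)$ the edges are $e_1=\{a,b,d,x\}$, $e_2=\{a,b,c\}$ and $e_3=\{a,c,d\}$, where $x$ is the single degree-$1$ vertex. The vertex $a$ is the only vertex in all three edges. Vertex $b$ lies only in $e_1\cap e_2$, vertex $c$ only in $e_2\cap e_3$, and vertex $d$ only in $e_1\cap e_3$. Degree-$1$ vertices admit only the trivial partition, so a fracture is determined by $\vecsigma_a,\vecsigma_b,\vecsigma_c,\vecsigma_d$.

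First, any $\vecsigma_a\neq\top$ empties the triple intersection, because no other vertex lies in all three edges. Hence $\vecsigma_a=\top$ for every $\vecsigma\in\mathcal{L}(H,\mathcal{V}_{14})$. Next, splitting a degree-$2$ vertex $y\in\{b,c,d\}$ empties the corresponding pairwise-only region and creates a private vertex in each of the two edges containing $y$. Since exactly one pairwise-only region must be empty, exactly one of $b,c,d$ is split. In $H$ itself only $e_1$ has a private vertex, namely $x$. So the third edge, the one not containing the split vertex, has a private vertex only if it is $e_1$. This forces the split vertex to be $c$.

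Let $\vecsigma$ be the fracture with $\vecsigma_c=\{\{e_2\},\{e_3\}\}$ and $\top$ everywhere else. Its fractured hypergraph has the following structure:
\begin{itemize}
\item private vertices $x$, $c_{\{e_2\}}$ and $c_{\{e_3\}}$;
\item triple intersection $\{a\}$;
\item $e_1\cap e_2=\{a,b\}$ and $e_1\cap e_3=\{a,d\}$;
\item $e_2\cap e_3=\{a\}$.
\end{itemize}
This satisfies $\mathcal{V}_{14}$ with $e_1$ as the middle edge. Splitting $b$ leaves $e_3$ without a private vertex, and splitting $d$ leaves $e_2$ without one.

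Therefore $\mathcal{L}(H,\mathcal{V}_{14})=\{\vecsigma\}$, and Corollary~\ref{cor:top_coefficient} gives
\[\coeff_{\hyperclique(1,0,0),\mathcal{V}_{14}}(\vec{\top})=(-1)^1\cdot 1!\cdot\prod_{v\neq c}(-1)^0\cdot 0!=-1\neq 0.\]

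The main point needing care is the case exclusion. We must confirm that the unsplit hypergraph, the fractures splitting two or three of $b,c,d$, and every non-trivial partition at the degree-$3$ vertex $a$ (including the three-block partition) all violate $\mathcal{V}_{14}$. Each of these follows from the emptiness or non-emptiness pattern described above.
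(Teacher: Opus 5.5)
Your proposal is correct and follows essentially the same route as the paper. You force $\vecsigma_a=\top$, show that exactly one of $b,c,d$ must be split, and single out $c$, which gives $\mathcal{L}(\hyperclique(1,0,0),\mathcal{V}_{14})=\{\vecsigma\}$ and coefficient $-1$; your private-vertex argument for excluding $b$ and $d$ just makes explicit what the paper delegates to Figure~\ref{fig:V14_fractures}.
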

\begin{proof}
Recall that $\hyperclique(1,0,0)$ contains the four vertices $a,b,c,d$ and one degree-$1$ vertex --- we call it $x$ --- in the edge $e_1=\{a,b,d,x\}$. We show that there is precisely $1$ fracture yielding to a fractured graph satisfying $\mathcal{V}_{14}$. To this end, assume that $\vecsigma$ is such a fracture. Then \[\vecsigma=(\vecsigma_a,\vecsigma_b,\vecsigma_c,\vecsigma_d,\vecsigma_x)\,.\]
First, since $x$ has degree $1$, $\vecsigma_x=\top$. Next, for satisfying $\mathcal{V}_{14}$, a vertex of degree $3$ is necessary; this can only be achieved if $a$ is not split, hence $\vecsigma_a=\top$. Moreover, $\mathcal{V}_{14}$ also requires the presence of at least $2$ vertices of degree $2$, hence we can split at most one of the vertices $b,c,d$. If none is split, $\vecsigma=\vec\top$, but $\hyperclique(1,0,0)\natural \vec{\top} = \hyperclique(1,0,0)$ does not satisfy $\mathcal{V}_{14}$. Hence, the only possibility for obtaining a fractured graph satisfying $\mathcal{V}_{14}$ is to split precisely one of $b,c,d$, and to not split all other vertices. Finally, it can easily be verified that splitting $c$ yields a fractured graph satisfying $\mathcal{V}_{14}$, while splitting $b$ or $d$ does not --- see Figure~\ref{fig:V14_fractures} for a depiction of those three cases. 

\begin{figure}[t]
\centering
\includegraphics[width=1.0\linewidth]{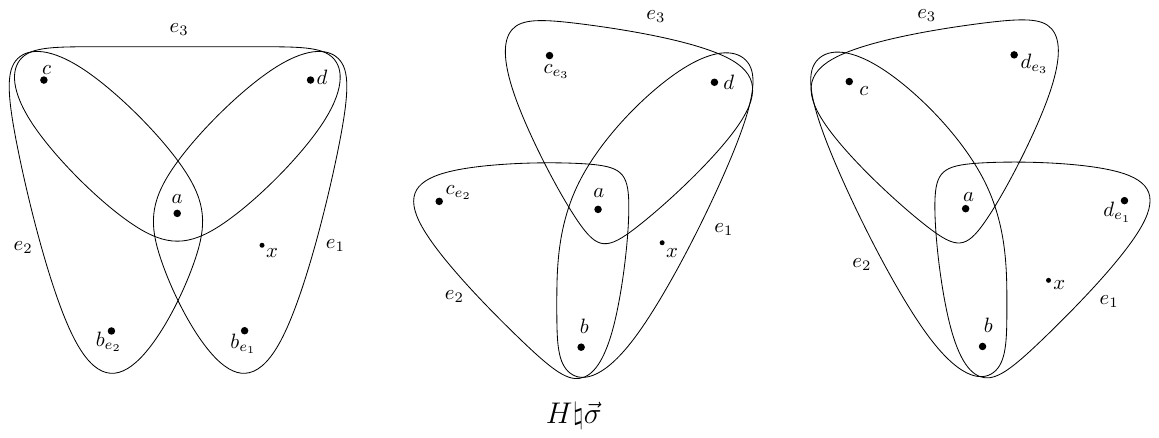}\\
~\\
\begin{tikzpicture}[scale=0.9, transform shape]
    \def\r{1.5}
    \coordinate (A) at (0,0);
    \coordinate (B) at (2,0);
    \coordinate (C) at (1,1.732);

    \begin{scope}
        \clip (A) circle (\r);
        \fill[pattern=\nel, pattern color=black] (A) circle (\r);
    \end{scope}
   
    \begin{scope}
        \clip (B) circle (\r);
        \fill[pattern=\nel, pattern color=black] (B) circle (\r);
    \end{scope}

    \begin{scope}
        \clip (C) circle (\r);
        \fill[pattern=none, pattern color=black] (C) circle (\r);
    \end{scope}

    \begin{scope}
        \clip (A) circle (\r);
        \clip (B) circle (\r);
        \fill[fill=white, pattern color=black] (-2,-2) rectangle (4,4);
    \end{scope}

    \begin{scope}
        \clip (A) circle (\r);
        \clip (C) circle (\r);
        \fill[pattern=\nel, pattern color=black] (-1,-1) rectangle (4,4);
    \end{scope}

    \begin{scope}
        \clip (B) circle (\r);
        \clip (C) circle (\r);
        \fill[pattern=\nel, pattern color=black] (0,0) rectangle (4,4);
    \end{scope}

    \begin{scope}
        \clip (A) circle (\r);
        \clip (B) circle (\r);
        \clip (C) circle (\r);
        \fill[pattern=\nel, pattern color=black] (0,0) rectangle (4,4);
    \end{scope}

    \draw[line width=1.2pt] (A) circle (\r);
    \draw[line width=1.2pt] (B) circle (\r);
    \draw[line width=1.2pt] (C) circle (\r);

    \node (V1) at (1.1,-2.5) {~~};

\end{tikzpicture}~~~~~~
\begin{tikzpicture}[scale=0.9, transform shape]
    \def\r{1.5}
    \coordinate (A) at (0,0);
    \coordinate (B) at (2,0);
    \coordinate (C) at (1,1.732);

    \begin{scope}
        \clip (A) circle (\r);
        \fill[pattern=\nel, pattern color=black] (A) circle (\r);
    \end{scope}
   
    \begin{scope}
        \clip (B) circle (\r);
        \fill[pattern=\nel, pattern color=black] (B) circle (\r);
    \end{scope}

    \begin{scope}
        \clip (C) circle (\r);
        \fill[pattern=\nel, pattern color=black] (C) circle (\r);
    \end{scope}

    \begin{scope}
        \clip (A) circle (\r);
        \clip (B) circle (\r);
        \fill[pattern=\nel, pattern color=black] (-2,-2) rectangle (4,4);
    \end{scope}

    \begin{scope}
        \clip (A) circle (\r);
        \clip (C) circle (\r);
        \fill[fill=white, pattern color=black] (-1,-1) rectangle (4,4);
    \end{scope}

    \begin{scope}
        \clip (B) circle (\r);
        \clip (C) circle (\r);
        \fill[pattern=\nel, pattern color=black] (0,0) rectangle (4,4);
    \end{scope}

    \begin{scope}
        \clip (A) circle (\r);
        \clip (B) circle (\r);
        \clip (C) circle (\r);
        \fill[pattern=\nel, pattern color=black] (0,0) rectangle (4,4);
    \end{scope}

    \draw[line width=1.2pt] (A) circle (\r);
    \draw[line width=1.2pt] (B) circle (\r);
    \draw[line width=1.2pt] (C) circle (\r);

    \node (V_14) at (1.1,-2.5) {\Large $\mathcal{V}_{14}$};
\end{tikzpicture}~~~~~~
\begin{tikzpicture}[scale=0.9, transform shape]
    \def\r{1.5}
    \coordinate (A) at (0,0);
    \coordinate (B) at (2,0);
    \coordinate (C) at (1,1.732);

    \begin{scope}
        \clip (A) circle (\r);
        \fill[pattern=none, pattern color=black] (A) circle (\r);
    \end{scope}
   
    \begin{scope}
        \clip (B) circle (\r);
        \fill[pattern=\nel, pattern color=black] (B) circle (\r);
    \end{scope}

    \begin{scope}
        \clip (C) circle (\r);
        \fill[pattern=\nel, pattern color=black] (C) circle (\r);
    \end{scope}

    \begin{scope}
        \clip (A) circle (\r);
        \clip (B) circle (\r);
        \fill[pattern=\nel, pattern color=black] (-2,-2) rectangle (4,4);
    \end{scope}

    \begin{scope}
        \clip (A) circle (\r);
        \clip (C) circle (\r);
        \fill[pattern=\nel, pattern color=black] (-1,-1) rectangle (4,4);
    \end{scope}

    \begin{scope}
        \clip (B) circle (\r);
        \clip (C) circle (\r);
        \fill[fill=white, pattern color=black] (0,0) rectangle (4,4);
    \end{scope}

    \begin{scope}
        \clip (A) circle (\r);
        \clip (B) circle (\r);
        \clip (C) circle (\r);
        \fill[pattern=\nel, pattern color=black] (0,0) rectangle (4,4);
    \end{scope}

    \draw[line width=1.2pt] (A) circle (\r);
    \draw[line width=1.2pt] (B) circle (\r);
    \draw[line width=1.2pt] (C) circle (\r);

    \node (V1) at (1.1,-2.5) {~~};

\end{tikzpicture}
\caption{The resulting fractured hypergraphs of $H=\hyperclique(1,0,0)$ from splitting vertices $b$, $c$, and $d$, respectively, and their representative Venn diagrams.}
\label{fig:V14_fractures}
\end{figure}

Thus, $\mathcal{L}(\hyperclique(1,0,0),\mathcal{V}_{14})$ contains precisely one fracture:
\[ \vecsigma=(\top,\top, \{\{e_2\},\{e_3\}\},\top,\top) \,.\]
Thus
$$\coeff_{\hyperclique(1,0,0),\mathcal{V}_{14}}(\vec{\top})=\big((-1)^0\cdot0!\big)^3\cdot\big((-1)^1\cdot1!\big)=-1\neq 0\,,$$
concluding the proof.
\end{proof}

\begin{corollary}\label{cor:coeff_summary}
    Let $\mathcal{V}$ be a non-degenerate Venn diagram. Then there is a hypergraph $H$ with $\mathsf{ghtw}(H)> 1$ such that $\coeff_{H,\mathcal{V}}(\vec\top)\neq 0$.
\end{corollary}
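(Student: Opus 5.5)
This corollary is obtained by assembling the case analysis of the preceding subsections. By Figure~\ref{fig:allVenns}, the non-degenerate Venn diagrams are exactly $\mathcal{V}_1,\dots,\mathcal{V}_{14}$, so I would fix an arbitrary non-degenerate $\mathcal{V}$ and take $\mathcal{V}=\mathcal{V}_i$ for some $i\in[14]$. The proof then proceeds by a case distinction on $i$, choosing $H$ to be the hypergraph used in the corresponding lemma.

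For $i\in\{1,2,3,4\}$, I would invoke Lemma~\ref{lem:coeff_V1_to_V4}, taking $H$ to be $\Delta(1,1,1)$, $\Delta(1,1,0)$, $\Delta(1,0,0)$ and $\Delta(0,0,0)$ respectively. For $i\in\{5,6\}$, Lemma~\ref{lem:coeff_V5_to_V6} gives $H=\Delta(1,1,1)$ and $H=\Delta(1,1,0)$. For $i\in\{7,\dots,10\}$, Lemma~\ref{lem:coeff_V7_to_V10} gives $H=\Gamma(1,1,1)$, $\Gamma(1,1,0)$, $\Gamma(1,0,0)$ and $\Gamma(0,0,0)$. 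For $i\in\{11,12,13\}$, Lemma~\ref{lem:coeff_V11_to_V13} gives $H=\Gamma(0,0,0)$, and for $i=14$, Lemma~\ref{lem:coeff_V14} gives $H=\Gamma(1,0,0)$. In every case the cited lemma yields $\coeff_{H,\mathcal{V}}(\vec\top)\neq 0$.

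It remains to show that each chosen $H$ satisfies $\mathsf{ghtw}(H)>1$. Every $H$ above has the form $\Delta(j_1,j_2,j_3)$ or $\Gamma(j_1,j_2,j_3)$ with $j_1,j_2,j_3\ge 0$, so this is exactly Observation~\ref{obs:delta_gamma_cyclic}. Combining the two facts proves the corollary.

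There is no real obstacle: all the work is in the preceding lemmas. The one point to check is that the list $\mathcal{V}_1,\dots,\mathcal{V}_{14}$ really exhausts the non-degenerate diagrams in the sense of Definition~\ref{def:degenerate}. I take this as given from Figure~\ref{fig:allVenns} and the enumeration of Lee et al. One would also note that reorderings of the three sets do not matter, since satisfying a Venn diagram is defined up to reordering the edges.
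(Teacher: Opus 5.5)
Your proposal is correct and follows the paper's proof: the paper likewise reads off, for each of $\mathcal{V}_1,\dots,\mathcal{V}_{14}$, a hypergraph $\Delta(j_1,j_2,j_3)$ or $\Gamma(j_1,j_2,j_3)$ with non-zero top coefficient from Lemmas~\ref{lem:coeff_V1_to_V4}--\ref{lem:coeff_V14}, and then invokes Observation~\ref{obs:delta_gamma_cyclic} to get $\mathsf{ghtw}(H)>1$. Your case-by-case choices of $H$ match exactly the ones used in those lemmas.
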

\begin{proof}
Lemmas~\ref{lem:coeff_V1_to_V4}-\ref{lem:coeff_V14} cover all non-degenerate Venn diagrams and show that, for each diagram $\mathcal{V}$ of them, there are $j_1,j_2,j_3\geq 0$ such that either $\coeff_{\cycle(j_1,j_2,j_3),\mathcal{V}}(\vec\top)\neq 0$ or $\coeff_{\hyperclique(j_1,j_2,j_3),\mathcal{V}}(\vec\top)\neq 0$. Finally, by Observation~\ref{obs:delta_gamma_cyclic}, we have that $\mathsf{ghtw}(\Delta(j_1,j_2,j_3))>1$ and $\mathsf{ghtw}(\Gamma(j_1,j_2,j_3))>1$.
\end{proof}

\section{Proof of the Main Result}
\label{sec5:proof}
\subsection*{Upper bounds}
For establishing our upper bounds, we rely on the following simple observation --- recall that $H\downarrow$ denotes the hypergraph obtained from $H$ by removing non-maximimal edges, that is, by deleting all edges $e$ such that $e \subsetneq e'$ for some $e' \in E(H)$; recall further that an isolated vertex of a hypergraph $H$ is a vertex not included in any edge.

\begin{lemma}\label{lem:easy_quotient_props}
    Let $H$ be a hypergraph without isolated vertices.
    \begin{enumerate}
        \item If $|E(H)|\leq 3$ then $\mathsf{ghtw}(H/\rho)\leq 2$ for each partition $\rho$ of $V(H)$.
        \item If $|E(H\downarrow)|\leq 2$ then $\mathsf{ghtw}(H/\rho)\leq 1$, that is, $H/\rho$ is acyclic, for each partition $\rho$ of $V(H)$
    \end{enumerate}
\end{lemma}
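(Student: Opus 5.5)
The plan is to exploit that quotients never increase the number of edges, and then to exhibit explicit hypertree decompositions with few bags.

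First, observe from Definition~\ref{def:hypergraph_quotient} that $E(H/\rho)=\{e/\rho \mid e\in E(H)\}$. Hence $|E(H/\rho)|\le |E(H)|$. Moreover, $H/\rho$ has no isolated vertices: every block $B$ of $\rho$ is non-empty, and every vertex of $B$ lies in some edge $e$ of $H$, so $B\in e/\rho$.

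\emph{Part 1.} Suppose a hypergraph $F$ without isolated vertices has at most $3$ edges $f_1,f_2,f_3$ (we apply this to $F=H/\rho$). We use a path decomposition with two bags, $B_1=f_1\cup f_2$ and $B_2=f_2\cup f_3$; with fewer edges, a single bag consisting of the union of all edges suffices, and its width is at most $2$. We check the three conditions in turn. The bags cover $V(F)$ since there are no isolated vertices. Every edge lies in some bag. For connectivity, a vertex in both bags causes no problem, because the tree has only two nodes. Each bag is covered by two edges, so the width is at most $2$ and $\mathsf{ghtw}(F)\le 2$.

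\emph{Part 2.} This is the step needing care: $H$ may have many edges, yet only two maximal ones. Let $g_1,g_2$ be the maximal edges of $H$; if there is only one, let $g_1=g_2$. Every edge $e$ of $H$ satisfies $e\subseteq g_i$ for some $i$. Since $X\mapsto X/\rho$ is monotone under inclusion, we get $e/\rho\subseteq g_i/\rho$. So every edge of $F=H/\rho$ is contained in $g_1/\rho$ or $g_2/\rho$, and these two sets are themselves edges of $F$.

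Take the decomposition with two adjacent bags $g_1/\rho$ and $g_2/\rho$. Since $F$ has no isolated vertices, every vertex lies in some edge and hence in one of these bags, so the bags cover $V(F)$. Each edge of $F$ is contained in a bag, connectivity is trivial on two nodes, and each bag is a single edge of $F$, so the width is $1$. Therefore $\mathsf{ghtw}(F)\le 1$, and $F$ is acyclic by definition.

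The only subtle point is the containment-monotonicity in Part 2. It ensures that non-maximal edges of $H$ stay covered after taking the quotient, even though the quotient may create new containments or coincidences between edges; these only help.
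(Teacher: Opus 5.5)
Your proof is correct and follows the paper's argument. Quotients neither add edges nor create isolated vertices, so in each part an explicit two-bag decomposition suffices: in Part~1 your second bag $f_2\cup f_3$ replaces the paper's $e_3$, an inessential difference. In Part~2 you justify, via $X\subseteq Y\Rightarrow X/\rho\subseteq Y/\rho$, the step the paper only asserts, namely that $H/\rho$ has at most as many maximal edges as $H$ (the paper's text says ``non-maximal'', apparently a slip); this justification is what makes the bags $g_1/\rho$ and $g_2/\rho$ work.
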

\begin{proof}
    \begin{enumerate}
        \item Note that $|E(H/\rho)|\leq |E(H)|$ for each $\rho$, and that taking quotients cannot create isolated vertices. Thus it suffices to show that any hypergraph with at most $3$ edges has generalised hypertreewidth at most $2$. To this end, let $e_1$, $e_2$, and $e_3$ denote the three edges. The hypertree decomposition then just consists of a two node tree $T$ with $V(T)=\{t,t'\}$ and $E(T)=\{(t,t')\}$ such that $B_t=e_1 \cup e_2$ and $B_{t'}= e_3$. Clearly, $B_t \cup B_{t'}$ is the set of all vertices of the hypergraph, and each bag is covered by at most $2$ edges. Furthermore, for each vertex $v$, the subtree $T_v=T[\{t \mid v \in B_t\}]$ is connected as $T_v$ is either an edge or a single node.
        \item Observe that $H/\rho$ cannot have more non-maximal edges than $H$ for any partition $\rho$. Hence, it suffices to show that any hypergraph with at most two maximal edges $e_1$ and $e_2$ has generalised hypertreewidth at most $1$. This is trivial as we can just pick bags $B_t=e_1$ and $B_{t'}=e_2$ and connect the two nodes $t$ and $t'$ to create a hypertree decomposition in which every bag ($B_t$ and $B_{t'}$) can be covered with at most one edge.
    \end{enumerate}
\end{proof}

Our algorithms will operate on the uncoloured problem $\motifprob{\mathcal{V}}$. To this end, we will show a simple transformation of $\motifs{\mathcal{V}}{G}$ into a linear combination of (non-coloured) homomorphism counts. While the transformation will be easier than compared to the coloured setting via fractures in the previous section, the coefficient formula will be much more intricate. However, for the properties of the coefficients required for our upper bounds will be much easier to prove, which is why we do not need to take the technical detour to fractured hypergraphs.

\begin{lemma}\label{lem:uncoloured_motif_to_hom}
    Let $r$ be a positive integer, and let $\mathcal{V}$ be a Venn diagram. Moreover, let $S(\mathcal{V},r)$ be the set of all (isomorphism types of) $3$-edge hypergraphs $H$ of rank at most $r$ such that $\mathcal{V}(H)=1$. Then, for all hypergraphs $G$ of rank at most $r$ we have
    \[ \motifs{\mathcal{V}}{G} = \sum_{F} \alpha_{\mathcal{V},r}(F) \cdot \homs{F}{G} \,,\]
    where
    \[\alpha_{\mathcal{V},r}(F) = \sum_{H \in S(\mathcal{V},r)} \auts{H}^{-1} \cdot \sum_{\substack{\rho \in P(V(H))\\H/\rho \cong F}} \mu(\rho)\,.\]
    Here $\mu(\rho)$ denotes the M\"obius function of the partition lattice of $V(H)$.
\end{lemma}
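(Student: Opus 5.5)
The plan is to decompose the motif count by isomorphism type of the induced $3$-edge subhypergraph, convert each subgraph count into embeddings, and then into homomorphisms via the identities~\eqref{sub=emb/aut} and~\eqref{embintermsofhom}.

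\emph{First step.} Every $3$-edge set $A\subseteq E(G)$ with $\mathcal{V}(G[A])=1$ determines the subhypergraph $G[A]$. Since $G$ has rank at most $r$, $G[A]$ has rank at most $r$, so it is isomorphic to exactly one $H\in S(\mathcal{V},r)$. Conversely, each subhypergraph of $G$ isomorphic to some $H\in S(\mathcal{V},r)$ is of the form $G[A]$: hypergraphs have no isolated vertices, so its vertex set is the union of its edges. It therefore corresponds to a unique such $A$. This gives
\[\motifs{\mathcal{V}}{G}=\sum_{H\in S(\mathcal{V},r)}\subs{H}{G}\,.\]
We must check that $S(\mathcal{V},r)$ is finite, so that the sum, and hence the support of $\alpha_{\mathcal{V},r}$, is finite. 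This holds because a $3$-edge hypergraph of rank at most $r$ has at most $3r$ vertices.

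\emph{Second step.} Apply~\eqref{sub=emb/aut} to rewrite each term as $\auts{H}^{-1}\cdot\embs{H}{G}$. Then apply~\eqref{embintermsofhom} to write $\embs{H}{G}=\sum_{\rho\in P(V(H))}\mu(\rho)\cdot\homs{H/\rho}{G}$.

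\emph{Third step.} Regroup the resulting double sum over pairs $(H,\rho)$ according to the isomorphism type $F$ of $H/\rho$. Homomorphism counts are invariant under isomorphism of the pattern, so $\homs{H/\rho}{G}=\homs{F}{G}$ whenever $H/\rho\cong F$. Collecting coefficients then yields exactly the stated formula for $\alpha_{\mathcal{V},r}(F)$.

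There is no real obstacle. The only point needing care is the bijection in the first step, where we use the convention that hypergraphs have no isolated vertices, together with the observation that the rank bound on $G$ makes restricting to $S(\mathcal{V},r)$ harmless. The quotient and Möbius identities are taken as given from Bressan et al.
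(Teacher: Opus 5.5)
Your proposal is correct and follows essentially the same route as the paper: you write $\motifs{\mathcal{V}}{G}=\sum_{H\in S(\mathcal{V},r)}\subs{H}{G}$, apply~\eqref{sub=emb/aut} and~\eqref{embintermsofhom}, and collect terms by the isomorphism type of $H/\rho$. Your remarks on the bijection (via the no-isolated-vertices convention) and on the finiteness of $S(\mathcal{V},r)$ only make explicit what the paper's proof leaves implicit.
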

\begin{proof}
    We have
    \begin{align*}
        \motifs{\mathcal{V}}{G} &\stackrel{(a)}{=} \sum_{H \in S(\mathcal{V},r)} \subs{H}{G}\\
        ~&\stackrel{(b)}{=} \sum_{H \in S(\mathcal{V},r)}\auts{H}^{-1}\cdot \embs{H}{G}\\
        ~&\stackrel{(c)}{=} \sum_{H \in S(\mathcal{V},r)}\auts{H}^{-1}\cdot \sum_{\rho \in P(V(H))} \mu(\rho)\cdot \homs{H/\rho}{G}\,,
    \end{align*}
    where (a) follows from the observation that there is a canonical bijection between the elements of $\motif{\mathcal{V}}{G}$ and the $3$-edge subgraphs of $G$ that satisfy $\mathcal{V}$, (b) follows from Equation~\eqref{sub=emb/aut}, and (c) follows from Equation~\eqref{embintermsofhom}.
    The formula for the coefficient $\alpha_{\mathcal{V},r}(F)$ is the easily obtained by collection for all quotients $H/\rho$ isomorphic to $F$. 
\end{proof}

The final ingredient for our upper bounds is Yannakakis' algorithm~\cite{Yannakakis} which yields a near-linear-time algorithm for evaluating, and counting answers to, acyclic conjunctive queries, inducing immediately a near-linear-time algorithm for counting homomorphisms from an acyclic hypergraph, that is, a hypergraph of generalised hypertreewidth $1$ --- see Section 3.2 of Mengel's survey~\cite{Mengel,Mengel_full} for a detailed exposition of the counting problem. Moreover,  
it is well known that counting homomorphisms from a hypergraph $H$ into a hypergraph $G$ can, in time $O(|G|^{\mathsf{ghtw}(H)})$, be reduced to counting answers to an acyclic conjunctive query $Q(H)$ in a database of size at most $O(|G|^{\mathsf{ghtw}(H)})$ (see e.g.\ Scarcello's survey~\cite{Scarcello05}). In combination, this yields a near-linear-time algorithm for counting homomorphisms from hypergraphs of generalised hypertreewidth~$1$ and a near-quadratic-time algorithm for counting homomorphisms from hypergraphs of generalised hypertreewidth~$2$. Formally, we obtain the following algorithm --- recall that, for any hypergraph $H$, the problem $\#\textsc{Hom}(\{H\})$ expects as input a hypergraph $G$ and outputs $\homs{H}{G}$:
\begin{theorem}[\cite{Yannakakis,Scarcello05,BraultBaron13,Mengel_full}]\label{thm:upper_bound_oracle}
    The problem $\#\textsc{Hom}(\{H\})$ can be solved in time $\Tilde{O}(|G|^{\mathsf{ghtw}(H)})$.
\end{theorem}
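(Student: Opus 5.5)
The approach is to reduce homomorphism counting to the evaluation of an acyclic conjunctive query, cited as a black box from the literature in two steps. For a fixed hypergraph $H$ and input $G$, take an optimal hypertree decomposition $(T,\{B_t\})$ of $H$ of width $k=\mathsf{ghtw}(H)$. This can be found by brute force in time depending only on $H$, which is a constant here. Each bag $B_t$ is covered by at most $k$ edges $A_t\subseteq E(H)$.

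\textbf{Step 1: build the relations.} For every node $t$, construct a relation $R_t$ over the variables $B_t$. Enumerate all $k$-tuples of edges of $G$ assigned to the edges in $A_t$, together with consistent vertex maps $\varphi_t : \cup A_t \to V(G)$. A map is consistent if each $e\in A_t$ is mapped onto its chosen edge of $G$; since $|e|\leq\mathsf{rank}(G)$ and $H$ is fixed, there are $O(1)$ such local maps per choice. Then project the result to $B_t$.

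For each $R_t$, keep only those assignments for which every edge $e\in E(H)$ with $e\subseteq B_t$ is mapped to an edge of $G$. Each edge of $H$ must be checked in at least one bag, which is guaranteed by condition 2 of the decomposition. Membership can be tested in $O(\log |G|)$ time via a sorted dictionary of $E(G)$.

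Every vertex of $H$ lies in some edge, so every $B_t$-assignment extending to a homomorphism is produced. Hence $|R_t|=O(|G|^{k})$, and all relations can be built in time $\tilde O(|G|^k)$.

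\textbf{Step 2: count on the join tree.} Homomorphisms $H\to G$ are in bijection with assignments $V(H)\to V(G)$ whose restrictions lie in all $R_t$. Condition 3 (connectivity) makes $T$ a join tree for the relations $R_t$, so this is an acyclic join.

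Run the counting version of Yannakakis' algorithm as dynamic programming bottom-up on $T$. For each tuple $\tau\in R_t$, store the number of extensions to the variables in the subtree below $t$. This number is the product over children $t'$ of the sum of the counts of tuples in $R_{t'}$ that agree with $\tau$ on $B_t\cap B_{t'}$.

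These sums are computed by grouping each child relation by its projection onto the separator, using sorting or hashing. That takes $\tilde O(|R_t|+|R_{t'}|)$ time per tree edge. The answer is the sum over the root relation. The total time is $\tilde O(|G|^{k})$.

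\textbf{Main obstacle.} The delicate points are, first, that the relation sizes are genuinely $O(|G|^k)$ rather than $O(|V(G)|^{|B_t|})$, which is ensured by covering the bags by edges; and second, that every edge constraint is enforced exactly once without losing homomorphisms. Beyond this, I would defer to \cite{Yannakakis,Scarcello05,Mengel_full} for the details of the counting variant.
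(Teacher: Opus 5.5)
Your proposal is correct and takes the same route the paper indicates: a hypertree decomposition of width $\mathsf{ghtw}(H)$ turns $\homs{H}{G}$ into a full acyclic join over relations of size $O(|G|^{\mathsf{ghtw}(H)})$, which the counting version of Yannakakis' algorithm then evaluates (the paper only cites Scarcello and Mengel for these two steps, whereas you spell them out). One small fix: the $O(1)$ bound on local maps per choice of edges holds because mapping $e$ onto $f$ forces $|f|\leq|e|$, giving at most $|e|^{|e|}$ maps; the inequality $|e|\leq\mathsf{rank}(G)$ you cite would only give a bound depending on $\mathsf{rank}(G)$.
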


Now recall that a Venn diagram is called degenerate if one of its sets is fully contained in another set --- see Definition~\ref{def:degenerate}. Recall further that all degenerate and non-degenerate Venn diagrams are depicted in Figure~\ref{fig:allVenns}.

\begin{theorem}[Main Theorem, Upper Bound]
    Let $\mathcal{V}$ be a Venn diagram. The problem $\motifprob{\mathcal{V}}$ can be solved in FPT near-quadratic time, that is, there is a computable function $f$ such that the problem can be solved in time
    \[ f(\mathsf{rank}(G))\cdot \tilde{O}(|G|^2) \,.\]
    Moreover, if $\mathcal{V}$ is degenerate, then $\motifprob{\mathcal{V}}$ can be solved in FPT near-linear time, that is, there is a computable function $f$ such that the problem can be solved in time
    \[ f(\mathsf{rank}(G))\cdot \tilde{O}(|G|) \,.\]
\end{theorem}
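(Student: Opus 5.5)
The plan is to combine Lemma~\ref{lem:uncoloured_motif_to_hom}, Lemma~\ref{lem:easy_quotient_props} and Theorem~\ref{thm:upper_bound_oracle}. Given $G$, set $r=\mathsf{rank}(G)$. By Lemma~\ref{lem:uncoloured_motif_to_hom},
\[ \motifs{\mathcal{V}}{G}=\sum_F \alpha_{\mathcal{V},r}(F)\cdot\homs{F}{G}. \]
The sum is finite and effectively computable from $r$ and $\mathcal{V}$. The set $S(\mathcal{V},r)$ contains only $3$-edge hypergraphs of rank at most $r$, hence at most $3r$ vertices, so there are at most $g(r)$ of them. For each we can compute $\auts{H}$, enumerate all partitions $\rho$ with their M\"obius values, and collect the quotients $H/\rho$ up to isomorphism. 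All of this takes time $f_0(r)$ for a computable $f_0$ and yields the list of pairs $(F,\alpha_{\mathcal{V},r}(F))$ with non-zero coefficient.

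The algorithm evaluates $\homs{F}{G}$ for each $F$ in the support and returns the weighted sum. The number of terms and the bit-size of the rational coefficients are bounded by a function of $r$. The hom counts are at most $|G|^{3r}$, so each arithmetic operation costs $f(r)\cdot\mathrm{polylog}(|G|)$, which is absorbed into $\tilde O$.

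For the general bound, every $F$ in the support is a quotient $H/\rho$ of some $H$ with $|E(H)|=3$. Part~1 of Lemma~\ref{lem:easy_quotient_props} gives $\mathsf{ghtw}(F)\le 2$. Theorem~\ref{thm:upper_bound_oracle} then computes each term in $\tilde O(|G|^2)$, with hidden constants depending only on $F$ and hence only on $r$.

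For the degenerate case, the idea is to apply part~2 of Lemma~\ref{lem:easy_quotient_props}, which requires $|E(H\downarrow)|\le 2$ for every $H\in S(\mathcal{V},r)$. By Definition~\ref{def:degenerate}, every such $H$ has edges $e_1\subseteq e_2$. Since the edges of $H$ are distinct, the inclusion is strict, so $e_1$ is non-maximal and is deleted in $H\downarrow$. Consequently every quotient $F$ is acyclic, and Theorem~\ref{thm:upper_bound_oracle} yields time $\tilde O(|G|)$ per term.

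The main subtlety is to ensure that the coefficient computation and the constants in Theorem~\ref{thm:upper_bound_oracle} depend on $F$ only through a computable function of $r$. This holds because the support has size bounded in terms of $r$, and a hypertree decomposition of width at most $2$ (respectively $1$) for each $F$ can be found by brute force in time depending only on $F$.
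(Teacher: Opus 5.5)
Your proposal is correct and follows the paper's proof essentially step by step. You expand via Lemma~\ref{lem:uncoloured_motif_to_hom} with $r=\mathsf{rank}(G)$, note that the support depends only on $r$, and bound $\mathsf{ghtw}$ of every quotient via Lemma~\ref{lem:easy_quotient_props}, using degeneracy to get $|E(H\downarrow)|\le 2$; each term is then evaluated with Theorem~\ref{thm:upper_bound_oracle}. Your additional remarks make explicit what the paper leaves implicit: the strict-inclusion argument for degeneracy, the bit-size bookkeeping, and the computable dependence of the hidden constants on $F$.
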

\begin{proof}
    On input $G$, both algorithms compute the transformation given by Lemma~\ref{lem:uncoloured_motif_to_hom} with $r=\mathsf{rank}(G)$:
    \[ \motifs{\mathcal{V}}{G} = \sum_{F} \alpha_{\mathcal{V},r}(F) \cdot \homs{F}{G} \,,\]
    where
    \[\alpha_{\mathcal{V},r}(F) = \sum_{H \in S(\mathcal{V},r)} \auts{H}^{-1} \cdot \sum_{\substack{\rho \in P(V(H))\\H/\rho \cong F}} \mu(\rho)\,.\]
    As $\mathcal{V}$ is fixed, the size of $S(\mathcal{V},r)$ only depends on the parameter $r$. Next, note that $\alpha_{\mathcal{V},r}(F)\neq 0$ implies that $F$ must be a quotient of $H$ for some $H \in S(\mathcal{V},r)$. This has three important consequences:
    \begin{enumerate}
        \item[(a)] The size of the support of $\alpha_{\mathcal{V},r}$ only depends on $S(\mathcal{V},r)$ and thus only on $r$. Hence all non-zero $\alpha_{\mathcal{V},r}(F)$ can be computed (by brute force) in time only depending on $r$.
        \item[(b)] As every $H \in S(\mathcal{V},r)$ has at most $3$ edges, it follows from Lemma~\ref{lem:easy_quotient_props}.1 that $\alpha_{\mathcal{V},r}(F)\neq 0$ only if $\mathsf{ghtw}(F)\leq 2$.
        \item[(c)] If $\mathcal{V}$ is degenerate, which implies that $|E(H\downarrow)|\leq 2$ for all $H \in S(\mathcal{V},r)$, then Lemma~\ref{lem:easy_quotient_props}.2 implies that $\alpha_{\mathcal{V},r}(F)\neq 0$ only if $\mathsf{ghtw}(F)\leq 1$.
    \end{enumerate}
    We can thus conclude by observing that the desired running times are achieved by running the algorithm from Theorem~\ref{thm:upper_bound_oracle} to compute $\homs{F}{G}$ for all $F$ with $\alpha_{\mathcal{V},r}(F)\neq 0$ and evaluating the linear combination.

    For general Venn diagrams, (2) yields thus an FPT near-quadratic algorithm, and for degenerate Venn diagrams, (3) yields thus an FPT near-linear-time algorithm.
\end{proof}

\begin{corollary}
    When restricted to input hypergraphs of bounded rank, $\motifprob{\mathcal{V}}$ is solvable in near-quadratic time for arbitrary Venn diagrams, and it is solvable in near-linear-time for degenerate Venn diagrams.
\end{corollary}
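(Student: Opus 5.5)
This corollary is an immediate specialisation of the Main Theorem (Upper Bound) just proved, so the plan is simply to instantiate the parameter function. Suppose the input is restricted to hypergraphs $G$ with $\mathsf{rank}(G)\leq r_0$ for some fixed constant $r_0$. The theorem provides computable functions $f$ such that $\motifprob{\mathcal{V}}$ is solved in time $f(\mathsf{rank}(G))\cdot\tilde{O}(|G|^2)$ for every $\mathcal{V}$, and in time $f(\mathsf{rank}(G))\cdot\tilde{O}(|G|)$ for degenerate $\mathcal{V}$.

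The key step is to bound the parameter factor uniformly. Since $\mathsf{rank}(G)\in\{1,\dots,r_0\}$, we have $f(\mathsf{rank}(G))\leq \max_{1\leq r\leq r_0} f(r)$, which is a constant depending only on $r_0$. Absorbing this constant into the $\tilde{O}$ yields running times $\tilde{O}(|G|^2)$ in general and $\tilde{O}(|G|)$ in the degenerate case.

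There is one minor point worth checking. The algorithm of the theorem precomputes the support of $\alpha_{\mathcal{V},r}$ by brute force over $S(\mathcal{V},r)$ and the quotients of its members. This precomputation also takes time depending only on $r\leq r_0$, so it contributes only a constant. Hence no step of the argument is a real obstacle: the corollary follows directly from the theorem, which in turn rests on Lemma~\ref{lem:easy_quotient_props} and Theorem~\ref{thm:upper_bound_oracle}.
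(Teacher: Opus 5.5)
Your proposal is correct and follows the paper's approach. The paper presents this corollary as a direct implication of the upper-bound theorem, using that $f(\mathsf{rank}(G))\in O(1)$ for bounded rank, which is your bound $f(\mathsf{rank}(G))\leq\max_{1\leq r\leq r_0}f(r)$. Your remark on the brute-force precomputation is harmless but redundant, since that cost is already included in $f$.
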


\subsection*{Lower bounds}
For the conditional lower bounds, we will rely on the following chain of reductions:
\begin{lemma}\label{lem:lower_bound_reduction_chain}
    Let $\mathcal{V}$ be a non-degenerate Venn diagram. Then there exists a hypergraph $H$ of rank at most $4$ with $\mathsf{ghtw}(H)>1$ such that
    \[\#\textsc{Hom}(\{H\}) \fptlinred \#\textsc{cpHom}(\{H\}) \fptlinred \colmotifprob{\mathcal{V}} \fptlinred \motifprob{\mathcal{V}}  \,.\]
\end{lemma}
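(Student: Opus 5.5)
We take $H$ from Corollary~\ref{cor:coeff_summary}, i.e.\ one of $\Delta(j_1,j_2,j_3)$ or $\Gamma(j_1,j_2,j_3)$ with $j_i\le 1$. Then $\mathsf{rank}(H)\le 4$, $\mathsf{ghtw}(H)>1$ and $\coeff_{H,\mathcal{V}}(\vec\top)\neq 0$. Since $H\natural\vec\top=H$, we establish the three reductions separately and compose them; $\fptlinred$ is transitive because the parameters of all intermediate instances are bounded in terms of $H$.

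\emph{First reduction, $\#\textsc{Hom}(\{H\})\fptlinred\#\textsc{cpHom}(\{H\})$.} Given $G$, build the edgewise-injectively $H$-coloured hypergraph $G\times H$. Its vertex set is $V(G)\times V(H)$ and its edges are the sets $\{(\varphi(x),x): x\in e\}$ for $e\in E(H)$ and each map $\varphi:e\to V(G)$ with $\varphi(e)\in E(G)$. The colouring is the projection to $V(H)$. Colour-prescribed homomorphisms $H\to G\times H$ correspond bijectively to homomorphisms $H\to G$ via the first projection. Each edge of $H$ has at most $4!\cdot|E(G)|$ admissible maps $\varphi$, so the construction takes time $f(H)\cdot O(|G|)$ and needs one oracle call. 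We will check that the colouring is a homomorphism and injective on edges; both hold by construction.

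\emph{Second reduction, $\#\textsc{cpHom}(\{H\})\fptlinred\colmotifprob{\mathcal{V}}$.} This is Lemma~\ref{lem:dedekind_applied} applied with $\vecsigma=\vec\top$. On input $(G,c)$, algorithm $R$ uses $f(\mathcal{V},H)$ oracle queries on tensor products $(G,c)\otimes(G_i,c_i)$ and returns $\cphoms{H}{H}{(G,c)}$, which is legitimate because $\coeff_{H,\mathcal{V}}(\vec\top)\neq0$. We also verify that each tensor product has linear size and rank at most $\mathsf{rank}(H)\le 4$, using Lemma~\ref{lem:tensor_efficient} with fixed $G_i$.

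\emph{Third reduction, $\colmotifprob{\mathcal{V}}\fptlinred\motifprob{\mathcal{V}}$.} This is the main obstacle. We use inclusion--exclusion over colour classes. For $S\subseteq E(H)$, let $G_S$ be the uncoloured hypergraph consisting of the edges $e$ with $c(e)\in S$. Then
\[\sum_{S\subseteq E(H)}(-1)^{3-|S|}\motifs{\mathcal{V}}{G_S}\]
counts the $\mathcal{V}$-motifs whose three edges together use all three colours. Since there are exactly three edges, each colour appears exactly once, which is precisely $\colmotifs{\mathcal{V}}{H}{(G,c)}$. This needs only $8$ oracle calls, each on a subhypergraph computable in linear time with rank at most that of $G$.

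The subtle point is the definition of edges of $G$ as sets. Two edges of different colours cannot coincide as vertex sets, because they have different colour images. So each $G_S$ is a genuine hypergraph, and the motifs of $G_S$ are exactly the motifs of $G$ whose edges lie in colour classes from $S$. The inclusion--exclusion therefore holds verbatim, which completes the chain.
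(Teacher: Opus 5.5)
Your proposal is correct and follows the paper's proof. You take $H$ from Corollary~\ref{cor:coeff_summary}. The first step is the standard product construction, which the paper only cites from the graph case; your explicit hypergraph version works. The second step is Lemma~\ref{lem:dedekind_applied} with $\vec\top$, and the third is exactly the $8$-term inclusion--exclusion of Lemma~\ref{lem:incl_excl}, with $S=E(H)\setminus J$.

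One harmless slip: the maps $\varphi$ need not be injective, since an edge of $H$ may be sent onto a smaller edge of $G$. So the count per edge of $H$ is at most $|e|^{|e|}\cdot|E(G)|$ rather than $4!\cdot|E(G)|$, which is still a constant factor.
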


We first establish the last step of the reduction chain via an inclusion-exclusion argument.
\begin{lemma}\label{lem:incl_excl}
    Let $\mathcal{V}$ be a Venn Diagram. Then 
    \[\colmotifprob{\mathcal{V}} \fptlinred \motifprob{\mathcal{V}} \,.\]
\end{lemma}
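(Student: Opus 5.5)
Given an edgewise-injectively $H$-coloured hypergraph $(G,c)$ with $|E(H)|=3$, write $E_1,E_2,E_3$ for the three edge-colour classes of $G$ induced by $c$ (Observation~\ref{obs:induced_edge-colouring}). We must count the $3$-edge sets $A$ with $\mathcal{V}(G[A])=1$ that contain exactly one edge from each class. Since the oracle only counts uncoloured motifs, we use inclusion--exclusion over the colour classes. For $J\subseteq[3]$, let $G_J$ be the subhypergraph with edge set $\bigcup_{i\in J}E_i$; its vertex set is the union of these edges, so $G_J$ has no isolated vertices. A $3$-edge set $A$ uses all three colours if and only if it is not contained in $\bigcup_{i\in J}E_i$ for any $J\subsetneq[3]$. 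Inclusion--exclusion then gives
\[\colmotifs{\mathcal{V}}{H}{(G,c)}=\sum_{J\subseteq[3]}(-1)^{3-|J|}\,\motifs{\mathcal{V}}{G_J}\,.\]
Here we use that $\mathcal{V}(G[A])$ depends only on $A$: the induced hypergraph $G[A]$ is the same whether we view $A$ inside $G$ or inside $G_J$.

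This formula is not quite correct as stated, because distinct edges of $G$ may have different colours yet coincide as vertex sets. That cannot happen, however, since $E(G)$ is a set. A subtler issue is that an edge's colour class is determined by its vertex colours, so the classes $E_i$ are disjoint and well defined. The reduction is therefore: compute the eight hypergraphs $G_J$ in linear time, query the oracle on each, and combine the answers with signs. The queries satisfy $\mathsf{rank}(G_J)\le\mathsf{rank}(G)$, so the parameter does not grow. There are $8$ queries in total, and each $G_J$ has size at most $|G|$, so all requirements of $\fptlinred$ hold.

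The only delicate step is confirming that $\motif{\mathcal{V}}{G_J}$ counts exactly the motifs of $G$ whose edges lie in $\bigcup_{i\in J}E_i$. This needs care with the convention that hypergraphs carry no isolated vertices, and with the definition $G[A]$ having vertex set $\cup A$. Both are immediate from the definitions, so there is no substantive obstacle.
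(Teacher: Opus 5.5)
Your proof is correct and is the same inclusion--exclusion argument as the paper's. Your $G_J$ keeps the colour classes in $J$, so it equals the paper's $G\setminus([3]\setminus J)$, and your sign $(-1)^{3-|J|}$ is the paper's $(-1)^{|J|}$ after complementing; the reduction uses the same $8$ oracle calls, none of which increases the rank. The paragraph claiming the formula is ``not quite correct as stated'' should simply be removed, since, as you note yourself, edges of $G$ form a set and each edge has exactly one colour $c(e)\in E(H)$, so the objection is vacuous.
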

\begin{proof}
    Let $(G,c)$ be an input to $\colmotifprob{\mathcal{V}}$. Recall that $(G,c)$ is $H$-coloured for some hypergraph~$H$ with $|E(H)|=3$ and the task is to compute
    \[ \colmotifs{\mathcal{V}}{H}{(G,c)} = \#\{A \subseteq E(G) \mid |A|=3 ~\wedge~c(A)=E(H)~\wedge \mathcal{V}(G[A])=1 \} \]
    For each $J \subseteq E(H)$ we let $G \setminus J$ be the graph obtained from $G$ by deleting each edge $e \in E(G)$ with $c(e)\in J$ (and by deleting isolated vertices afterwards). Clearly, $G\setminus J$ can be constructed in $\Tilde{O}(|G|)$

    Observe that, by inclusion-exclusion, we obtain
    \begin{align*}
      \colmotifs{\mathcal{V}}{H}{(G,c)} = \sum_{J \subseteq E(H)} (-1)^{|J|} \cdot  \motifs{\mathcal{V}}{G\setminus J}\,,
    \end{align*}
    which can be computed via $2^{|E(H)|}=8$ calls to our oracle for $\motifprob{\mathcal{V}}$. 
\end{proof}

We are now able to establish the full reduction chain:
\begin{proof}[Proof of Lemma \ref{lem:lower_bound_reduction_chain}]
    Let $\mathcal{V}$ be any non-degenerate Venn diagram. By Corollary~\ref{cor:coeff_summary}, there is a hypergraph $H$ with $\mathsf{ghtw}(H)>1$ such that $\coeff_{H,\mathcal{V}}(\vec\top) \neq 0.$

    For the first part of the reduction chain $\#\textsc{Hom}(\{H\}) \fptlinred \#\textsc{cpHom}(\{H\})$, we note that the construction is analogous to the case of graphs --- see~\cite[Lemma 4]{DorflerRSW22} and observe that the construction yields edgewise-injectively $H$-coloured graphs as required.

    The second part of the reduction chain $\#\textsc{cpHom}(\{H\}) \fptlinred \colmotifprob{\mathcal{V}}$ is given by the oracle algorithm $R$ established in Lemma~\ref{lem:dedekind_applied}: specifically, since $\coeff_{H,\mathcal{V}}(\vec\top) \neq 0$, we have that $R$ computes
    \[ \cphoms{H \natural \vec{\top}}{H}{(G,c)} = \cphoms{H}{H}{(G,c)} \,.\]

    The third and last part of the reduction chain is established in Lemma~\ref{lem:incl_excl}.
\end{proof}

Finally, we rely on the following folklore result for the base of our chain of reductions; the first explicit proof of this lower bound\footnote{Note that this lower bound is usually stated in the language of database conjunctive query evaluation without projections, but it is well-known to apply directly to hypergraph homomorphisms} is due to Brault-Baron~\cite{BraultBaron13}, but we also refer the reader to the more concise overview and statement in Mengel's survey (see Theorem 3.8 in~\cite{Mengel,Mengel_full}).

\begin{theorem}[\cite{BraultBaron13,Mengel,Mengel_full}]\label{thm:lower_bound_base_mengel}
    Let $H$ be a non-acyclic hypergraph and assume that both the Triangle and the Hyperclique Hypotheses hold. Then $\#\textsc{Hom}(\{H\})$ cannot be solved in time $\Tilde{O}(|G|)$.
\end{theorem}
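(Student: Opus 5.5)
We argue by contraposition. Assuming an $\tilde{O}(|G|)$ algorithm for $\#\textsc{Hom}(\{H\})$, we obtain a near-linear algorithm either for triangle detection or for detecting $k$-hypercliques in $(k-1)$-uniform hypergraphs in time $\tilde{O}(n^{k-1})$. Both contradict the respective hypotheses. The structural starting point is the classical characterisation of $\alpha$-acyclicity, as in Brault-Baron~\cite{BraultBaron13}. If $\mathsf{ghtw}(H)>1$, then there is a vertex set $S\subseteq V(H)$ such that the trimmed induced hypergraph $H[S]:=\{e\cap S \mid e\in E(H)\}$, after removing non-maximal edges, is one of the following:
\begin{itemize}
\item[(i)] a chordless graph cycle $C_k$ with $k\ge 3$;
\item[(ii)] the boundary $\partial K_k^{(k-1)}$, with $k\ge 4$, consisting of all $(k-1)$-subsets of a $k$-set.
\end{itemize}
We obtain this from the GYO reduction: acyclicity fails exactly when the hypergraph is not conformal or not chordal, and a minimal witness of either failure gives (i) or (ii). Note that (ii) with $k=3$ is just the triangle.

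In case (i), we reduce from triangle detection in a graph $D$ with $m$ edges. First, we construct in linear time a graph whose homomorphic images of $C_k$ certify triangles. To do this, we subdivide one edge class of a $3$-partite, tripled copy of $D$ so that colourful $k$-cycles correspond exactly to triangles of $D$. In case (ii), we reduce from $k$-hyperclique detection in a $(k-1)$-uniform $D$ with $n$ vertices. Its size is $O(n^{k-1})$, so an $\tilde{O}(|D|)$ algorithm would contradict the Hyperclique Hypothesis for $h=k-1\ge 3$.

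In both cases we build an $H$-coloured host $G$ in two steps:
\begin{itemize}
\item The vertices of colour $v\in S$ are copies of $V(D)$ (respectively of the appropriate layer of the cycle gadget). Each vertex $v\notin S$ receives a single dummy vertex $\star_v$.
\item For every edge $e\in E(H)$, we add, for each edge (or tuple) of $D$ that is consistent on $e\cap S$, the hyperedge which takes that tuple on $e\cap S$ and the dummies on $e\setminus S$.
\end{itemize}
Since $H$ is fixed, $|G|=O(|D|)$. Colour-prescribed homomorphisms from $H$ to $G$ then correspond bijectively to colourful copies of $C_k$ (respectively of the $k$-hyperclique) in the instance, because every edge of $H[S]$ must be realised. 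Hence $\cphoms{H}{H}{(G,c)}>0$ if and only if $D$ contains a triangle (respectively a hyperclique).

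It remains to pass from colour-prescribed counts to plain counts $\homs{H}{G}$. For this we use the standard inclusion--exclusion over colour classes, as in~\cite{DorflerRSW22}. Every homomorphism $H\to G$ composed with the colouring $c$ is an endomorphism of $H$. Deleting subsets of colour classes and counting homomorphisms in each resulting subgraph isolates the homomorphisms whose composition with $c$ is surjective, and hence an automorphism. Their number is $\auts{H}\cdot\cphoms{H}{H}{(G,c)}$. This uses $O_H(1)$ oracle calls on instances of linear size, so an $\tilde{O}(|G|)$ algorithm for $\#\textsc{Hom}(\{H\})$ decides the source problem in near-linear time.

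I expect the main obstacle to be the structural step, namely extracting a witness of type (i) or (ii) from an arbitrary non-acyclic $H$ so that the trimming by $S$ is compatible with the encoding. One has to check that edges of $H$ which meet $S$ in a non-maximal set do not destroy the correspondence. Intersecting each such edge with $S$ and encoding it via projections of $D$'s relation handles this, since those extra constraints are implied by the maximal ones. I would follow Brault-Baron's minimal-witness argument there.
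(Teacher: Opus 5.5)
Your proposal is correct and is essentially the argument the paper relies on. The paper gives no proof of its own; it cites Brault-Baron (via Mengel's survey), whose proof is exactly your extraction of a minimal non-chordal or non-conformal witness (a chordless cycle, or the boundary of a $(k-1)$-uniform $k$-clique), followed by linear-size encodings of triangle or hyperclique detection. Your inclusion--exclusion over deleted colour classes, using that vertex-surjective endomorphisms of $H$ are automorphisms, is the standard way to make precise the paper's footnote that the self-join-free conjunctive-query lower bound transfers directly to uncoloured hypergraph homomorphism counting.
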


\begin{theorem}[Main Theorem; Lower Bound]
    Let $\mathcal{V}$ be a non-degenerate Venn diagram. Then the problem $\motifprob{\mathcal{V}}$ is not solvable in FPT-near-linear time (and thus also not in near-linear time) unless either the Triangle Hypothesis or the Hyperclique Hypothesis fails.
\end{theorem}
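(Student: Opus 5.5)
The plan is to argue by contraposition, combining the reduction chain of Lemma~\ref{lem:lower_bound_reduction_chain} with the conditional lower bound of Theorem~\ref{thm:lower_bound_base_mengel}. Fix a non-degenerate Venn diagram $\mathcal{V}$ and suppose that $\motifprob{\mathcal{V}}$ admits an algorithm running in time $f(\mathsf{rank}(G))\cdot\tilde{O}(|G|)$ for some computable $f$. By Lemma~\ref{lem:lower_bound_reduction_chain}, there is a hypergraph $H$ of rank at most $4$ with $\mathsf{ghtw}(H)>1$ such that
\[\#\textsc{Hom}(\{H\}) \fptlinred \#\textsc{cpHom}(\{H\}) \fptlinred \colmotifprob{\mathcal{V}} \fptlinred \motifprob{\mathcal{V}}\,.\]

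The first step is to note that $\fptlinred$ is transitive. Each reduction runs in time $g(\kappa)\cdot O(|x|)$, asks at most $g(\kappa)$ queries, and each query has parameter bounded by $g(\kappa)$. The one point to check is that every query produced in the chain has size $O(g(\kappa)\cdot|x|)$, which holds because a reduction running in time $g(\kappa)\cdot O(|x|)$ cannot write down a longer query. Composing the reductions therefore yields a single linear-time parameterised reduction from $\#\textsc{Hom}(\{H\})$ to $\motifprob{\mathcal{V}}$.

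Next, I apply \cite[Lemma 2.8]{paramHolantsArxiv} to this composed reduction. Since $\motifprob{\mathcal{V}}$ is solvable in FPT-near-linear time, so is $\#\textsc{Hom}(\{H\})$. Its parameter is $|H|$, and $H$ is a single fixed hypergraph depending only on $\mathcal{V}$. The factor $f(|H|)$ is therefore a constant, and $\#\textsc{Hom}(\{H\})$ is solvable in time $\tilde{O}(|G|)$.

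Finally, $\mathsf{ghtw}(H)>1$ means exactly that $H$ is not ($\alpha$-)acyclic. Theorem~\ref{thm:lower_bound_base_mengel} then says that a $\tilde{O}(|G|)$ algorithm for $\#\textsc{Hom}(\{H\})$ contradicts the conjunction of the Triangle and Hyperclique Hypotheses, so one of them fails. The parenthetical claim follows because a near-linear-time algorithm is in particular FPT-near-linear with constant $f$.

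The main obstacle is the parameter and size bookkeeping in the composition, so that the polylogarithmic factors and the dependence on the parameter stay controlled when the near-linear oracle is invoked. For this I would rely on the cited Lemma 2.8 of Aivasiliotis et al. Everything else is a direct consequence of earlier results in the paper.
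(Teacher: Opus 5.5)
Your proof is correct and follows the paper's route. You compose the reduction chain of Lemma~\ref{lem:lower_bound_reduction_chain}, note that every parameter involved is bounded by a constant because $H$ is fixed, and invoke Theorem~\ref{thm:lower_bound_base_mengel}. The only difference is bookkeeping: the paper argues that the input $G$ to $\#\textsc{Hom}(\{H\})$ may be assumed to have rank at most $4$, whereas you use directly that the parameter $|H|$ is constant and make explicit both the transitivity of $\fptlinred$ and the appeal to \cite[Lemma 2.8]{paramHolantsArxiv}, which is if anything slightly cleaner.
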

\begin{proof}
    By Lemma~\ref{lem:lower_bound_reduction_chain}, there is a hypergraph $H$ of rank at most $4$ with $\mathsf{ghtw}(H)>1$ such that
    \[\#\textsc{Hom}(\{H\}) \fptlinred \motifprob{\mathcal{V}}\,. \]
    Since $\mathsf{ghtw}(H)>1$ we have that $H$ is not acyclic. Hence, by Theorem~\ref{thm:lower_bound_base_mengel}, we have that $\#\textsc{Hom}(\{H\})$ is not solvable in near-linear-time unless either the Triangle Hypothesis or the Hyperclique Hypothesis fails. As $H$ has rank at most $4$, we can assume that the input $G$ to $\#\textsc{Hom}(\{H\})$ also has rank at most $4$ since edges of cardinality larger than $4$ in $G$ are irrelevant for homomorphisms from hypergraphs of rank $4$ or below and can thus be deleted. Thus, for the purpose of this lower bound, the parameter $\mathsf{rank}(G)$ is constant and, by definition of $\fptlinred$, all oracle queries made to $ \motifprob{\mathcal{V}}$ must have their parameter bounded by $f(\mathsf{rank}(G))$ for some computable function $f$. However, $f(\mathsf{rank}(G)) \in O(1)$, and thus, for $H$ having rank at most $4$, an FPT-near-linear time algorithm for $\motifprob{\mathcal{V}}$ yields a near linear-time algorithm for $\#\textsc{Hom}(\{H\})$.
\end{proof}

\section{Generalised Hypergraph Motifs}
\label{sec6:general_case}
In the last part of this work, we investigate hypergraph motifs beyond $3$ edges. Following the proposal of Lee et al.\ \cite[Section 5.2]{Lee}, and naturally extending our definition of Venn diagrams in the previous sections, a hypergraph motif with $k$ edges can be encoded as a binary vector of size $2^k-1$:
\begin{definition}[Generalised Hypergraph Motifs]
    A \emph{Generalised Hypergraph Motif} (``GHM'') of order $k$ is a vector $\mathfrak{H}\in \{0,1\}^{2^k-1}$, indexed by non-empty subsets of $[k]$, that is
    \[\mathfrak{H}=(\mathfrak{H}_J)_{\emptyset \neq J\subseteq [k]}\,.\]
    We say that a $k$-edge hypergraph $G$ satisfies $\mathfrak{H}$, denoted by $\mathcal{\mathfrak{H}}(G)=1$, if there is an ordering $e_1,\dots,e_k$ of $E(G)$ such that for all non-empty $J \subseteq [k]$ we have
    \[\mathfrak{H}_J=1 \Leftrightarrow \left(\bigcap_{i \in J}e_i \right)\setminus \left(\bigcup_{i \in [k]\setminus J} e_i\right) \neq \emptyset \,.\]
    Furthermore, we define
    \[\motif{\mathfrak{H}}{G}:=\{A \subseteq E(G) \mid \#A=k~\wedge~\mathfrak{H}(G[A])=1\}\]
    for the set of all \emph{motifs} of $G$ satisfying $\mathfrak{H}$.
\end{definition}

Our aim in this section is to investigate the parameterised complexity of counting generalised hypergraph motifs. To this end, we first note that, for any GHM $\mathfrak{H}$ of order $k$, we can compute the function $G \mapsto \#\motif{\mathfrak{H}}{G}$ in time $|G|^{O(k)}$ up to a factor only depending on $\mathfrak{H}$ by brute force, leading to a polynomial-time algorithm if $\mathfrak{H}$ (and thus $k$) are fixed. For this reason, we ask for which generalised hypergraph motifs we can obtain an running time in which the exponent of the host graph $G$ does not depend on the order of the motif. In order to state and answer this question formally, we follow the standard approach for parameterised motif detection and counting problems (cf.\ \cite{Grohe07,Marx13,Bressanetal26}) and restrict the problem to classes of allowed hypergraph motifs:

\begin{definition}
Let $\mathfrak{C}$ be a recursively enumerable class of generalised hypergraph motifs. We define the following problem:

    \begin{mdframed}
\#\textsc{GeneralisedHyperMotif}($\mathfrak{C}$) \\
\textbf{Input:} A GHM $\mathfrak{H}\in \mathfrak{C}$, and a hypergraph $G$ \\
\textbf{Parameter:} $|\mathfrak{H}|+\mathsf{rank}(G)$ \\
\textbf{Output:} $\#\motif{\mathfrak{H}}{G}$
\end{mdframed}
\end{definition}

The aforementioned question on the tractability of counting generalised hypergraph motifs in a hypergraph $G$ can now be stated rigorously: For which classes $\mathfrak{C}$ is $\#\textsc{GeneralisedHyperMotif}(\mathfrak{C})$ fixed-parameter tractable? For attempting to answer this question, we need to take a brief tour to more intricate hypergraph width measures.

\subsection{Fractional Hypertreewidth and Adaptive Width}
Given a hypergraph $H$, a \emph{fractional independent set} of $H$ is a mapping $\iota: V(H) \to [0,1]$ such that, for each $e \in E(H)$, we have $\sum_{v \in e} \iota(v)\leq 1$. For a subset $X \subseteq V(H)$, we set $\iota(X):=\sum_{v \in X}\iota(v)$. A \emph{fractional edge cover} of a subset $X\subseteq V(H)$ is a mapping $\gamma: E(H) \to [0,1]$ such that, for each $v \in X$, we have $\sum_{e \in E(H):v \in e}\gamma(e)\geq 1$,
and the total weight of $\gamma$ is defined as $\sum_{e \in E(H)}\gamma(e)$. We write $\gamma^\ast_H:2^{V(H)}\to \mathbb{R}_{\geq 0}$ for the function that maps $X$ to the minimum total weight of any fractional edge cover of $X$.

\begin{definition}[$f$-width and $\mathcal{F}$-width]
    Let $H$ be a hypergraph, let $f:2^{V(H)}\to \mathbb{R}_{\geq 0}$, and let $\mathcal{T}=(T,\{B_t\}_{t \in V(T)})$ be a hypertree decomposition of $H$. The $f$\emph{-width} of $\mathcal{T}$  is $\max\{f(B_t) \mid t \in V(T)\}$. The  $f$\emph{-width} if $H$ is the minimum $f$-width of any hypertree decomposition of $H$.

    Given a (possibly infinite) set of functions $\mathcal{F}$ such that each $f \in \mathcal{F}$ maps vertices of $H$ to non-negative reals, the  $\mathcal{F}$\emph{-width} of $H$ is defined as
    \[ \mathcal{F}\text{-}\mathrm{width}(H):= \sup\{f\text{-}\mathrm{width}(H) \mid f \in \mathcal{F}\},. \]
\end{definition}

The fractional hypertreewidth and the adaptive width of a hypergraph can be defined via fractional edge covers and independent sets as follows (cf.\ \cite{Marx13}):
\begin{definition}[Fractional Hypertreewidth and Adaptive width]
    Let $H$ be a hypergraph. The \emph{fractional hypertreewidth} of $H$, denoted by $\mathsf{fhtw}(H)$, is defined as
    \[\mathsf{fhtw}(H):= \gamma^\ast_H\text{-}\mathrm{width}(H)\,.\]
    Moreover, writing $I$ for the set of all fractional independent sets of $H$, the \emph{adaptive width} of $H$, denoted by $\mathsf{aw}(H)$, is defined as
    \[\mathsf{aw}(H) := I\text{-}\mathrm{width}(H) \,.\]
\end{definition}

Given a class of hypergraphs $\mathcal{H}$, we say that $\mathcal{H}$ has bounded fractional hypertreewidth (resp.\ bounded adaptive width) if there is a constant $c$ such that, for all $H \in \mathcal{H}$ we have $\mathsf{fhtw}(H)\leq c$ (resp.\ $\mathsf{aw}(H)\leq c$).
It is well-known (cf.\ \cite{Marx13}) that any class $\mathcal{H}$ of bounded fractional hypertreewidth also has bounded adaptive width, but the backwards direction is not true.

It follows from the work of Grohe and Marx on constraint solving via fractional edge-covers~\cite{GroheM14} that $\#\textsc{Hom}(\mathcal{H})$ is efficiently solvable whenever $\mathcal{H}$ has bounded fractional hypertreewidth:\footnote{We note that Grohe and Marx state their result for the decision problem only; however, their argument applies to counting with only trivial modifications.}   
\begin{theorem}[\cite{GroheM14}]\label{thm:hom_upper_bound}
    Let $\mathcal{H}$ be a recursively enumerable class of hypergraphs of bounded fractional hypertreewidth. Then $\#\textsc{Hom}(\mathcal{H})$ is solvable in polynomial time.
\end{theorem}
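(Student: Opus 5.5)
The plan is to follow the template of Grohe and Marx's algorithm for constraint satisfaction problems of bounded fractional hypertreewidth, replacing decision by counting at the final dynamic-programming stage. Fix a constant $c$ with $\mathsf{fhtw}(H)\leq c$ for all $H\in\mathcal{H}$. On input $(H,G)$, we first compute a hypertree decomposition $\mathcal{T}=(T,\{B_t\}_{t\in V(T)})$ of $H$ whose $\gamma^\ast_H$-width is bounded by a function of $c$, say $O(c^3)$. This can be done via Marx's approximation algorithm, or simply by brute force over all decompositions of $H$. In either case the cost depends only on $|H|$ (or is polynomial), since $H$ is part of the input but $\mathcal{H}$ is fixed. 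We then view $\homs{H}{G}$ as the number of solutions of the constraint satisfaction instance with variables $V(H)$ and, for each $e\in E(H)$, a constraint on the variables of $e$ whose allowed tuples are the maps $\varphi|_e$ with $\varphi(e)\in E(G)$. This relation has size at most $|E(G)|\cdot \mathsf{rank}(G)^{|e|}$, which is polynomial since $|e|\leq|H|$; if one insists on true polynomial time uniformly, one notes that a tuple is determined by an edge of $G$ together with an assignment of the at most $|e|$ variables to its vertices. For later convenience we restrict each relation to injective-on-edge-compatible tuples as needed.

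The key step is to materialise, for each bag $B_t$, the relation $R_t$ of all assignments $B_t\to V(G)$ that satisfy every constraint $e$ restricted to $e\cap B_t$ (projection of the edge relation). By the AGM bound, applied with a fractional edge cover of $B_t$ of weight at most $w=O(c^3)$, we have $|R_t|\leq \max_e |R_e|^{w}\leq |G|^{O(w)}$, and $R_t$ can be enumerated in time $|G|^{O(w)}$ by the Grohe–Marx join procedure (worst-case optimal join). Every edge $e$ is contained in some bag, and we attach $e$ to one such bag $t_e$ so that the full constraint $R_e$ is enforced exactly there. The resulting structure is an acyclic join query whose join tree is $T$, with relations of polynomial size. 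Homomorphisms from $H$ to $G$ correspond bijectively to consistent choices of tuples from the $R_t$ that agree on shared variables. This correspondence uses connectivity of the sets $\{t\mid v\in B_t\}$ and the condition that every vertex lies in a bag.

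Finally, we count solutions of this acyclic instance by the counting version of Yannakakis' algorithm, as invoked in Theorem~\ref{thm:upper_bound_oracle}. We root $T$ and process it bottom-up, storing for each tuple $\tau\in R_t$ the number of extensions of $\tau$ to the variables in the subtree of $t$. Each child's counts are aggregated by grouping on the separator $B_t\cap B_{t'}$ and then multiplied in. This runs in time polynomial in $\sum_t |R_t|$, and hence in $|G|^{O(w)}\cdot \mathrm{poly}(|H|)$, which is polynomial because $c$ is a constant.

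The main obstacle is the enumeration of $R_t$ within the AGM bound together with the need for the decomposition to be found efficiently. The counting part itself is routine, since the only change from Grohe and Marx is that the final step performs counting rather than a semi-join emptiness test. For the enumeration step I would cite the Grohe–Marx join lemma directly. If the decomposition cannot be computed exactly, I would use Marx's cubic approximation of fractional hypertreewidth, which suffices because only boundedness matters.
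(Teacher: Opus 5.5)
Your route is the one the paper has in mind. The paper gives no argument beyond citing Grohe and Marx and remarking that their decision procedure adapts to counting with trivial modifications. You reconstruct exactly that: a CSP encoding, bag relations materialised within the fractional-cover bound, then counting Yannakakis over the decomposition.

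The genuine gap is the size claim. You assert that $R_e$, the set of maps $e\to V(G)$ whose image is an edge of $G$, has size at most $|E(G)|\cdot\mathsf{rank}(G)^{|e|}$, and that this is polynomial because $|e|\le|H|$. It is not. Bounded fractional hypertreewidth places no bound on edge sizes, since one edge covers an arbitrarily large bag with weight $1$. A single edge $e$ of size $k$ mapped onto an edge of $G$ of size $k$ already contributes $k!$ tuples. Your fallback remark, that a tuple is determined by an edge of $G$ plus an assignment of $e$ into it, does not change this count. The step $\max_e|R_e|^{w}\le|G|^{O(w)}$ and the final bound $|G|^{O(w)}\cdot\mathrm{poly}(|H|)$ inherit the error. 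Two smaller points: restricting to tuples injective on $e$ would be wrong, since homomorphisms here may collapse an edge onto a smaller edge of $G$; and brute force over decompositions is not polynomial in $|H|$, though Marx's approximation is.

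A cleverer enumeration cannot repair this. With the set-based homomorphisms of this paper and unbounded edge sizes, the statement as phrased fails unless $\mathsf{FP}=\#\mathsf{P}$.

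\emph{Counterexample.} Let $H_k$ ($k\ge3$) have vertices $v_1,\dots,v_k$ and edges $\{v_1,\dots,v_k\}$ and $\{v_i,v_{i+1}\}$ (indices modulo $k$). A single bag covered by the big edge shows $\mathsf{fhtw}(H_k)=1$; indeed $H_k$ is acyclic. For a graph $D$ on a $k$-set $X$, let $G$ have vertex set $X$ and edges $X$ together with $E(D)$. Since $G$ has no singleton edges, a homomorphism maps every $\{v_i,v_{i+1}\}$ onto an edge of $D$. It maps $\{v_1,\dots,v_k\}$ either onto $X$, in which case it is a bijection tracing a Hamiltonian cycle of $D$, or onto an edge of $D$. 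The latter is possible only for even $k$, in exactly two ways per edge. Hence $\homs{H_k}{G}$ equals $2k$ times the number of Hamiltonian cycles of $D$, plus $2|E(D)|$ when $k$ is even.

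\emph{What your argument does prove.} Only edges $f$ of $G$ with $|f|\le|e|$ can be images of $e$, so $|R_e|\le|E(G)|\cdot|e|^{|e|}$. With this correction your argument gives running time $g(|H|)\cdot|G|^{O(w)}$. That is fixed-parameter tractability in $|H|$, and polynomial time if $\mathcal{H}$ also has bounded rank. Grohe and Marx's polynomial bound is relative to CSP instances whose relations are listed explicitly, and here those relations are exponentially larger than $G$. The FPT form is also all the paper actually uses, namely in the upper bound of Theorem~\ref{thm:main_GHMs}.
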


For our lower bounds, we rely on the following result due to Marx~\cite[Theorem 7.1]{Marx13}:
\begin{theorem}[\cite{Marx13}]\label{thm:hom_lower_bound}
    Let $\mathcal{H}$ be a recursively enumerable class of hypergraphs of unbounded adaptive width. Then $\#\textsc{Hom}(\mathcal{H})$ is not fixed-parameter tractable, unless ETH fails.
\end{theorem}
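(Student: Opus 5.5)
The statement to prove is Theorem~\ref{thm:hom_lower_bound}, Marx's lower bound, cited from~\cite[Theorem 7.1]{Marx13}. Within this paper it is an imported result rather than something to re-derive, so the plan is mainly to check that Marx's statement transfers to our setting. Marx proves that, for a recursively enumerable class $\mathcal{H}$ of hypergraphs of unbounded adaptive width, the \emph{decision} problem $\textsc{CSP}(\mathcal{H})$ is not fixed-parameter tractable under ETH. Here the instance is a hypergraph $H\in\mathcal{H}$ together with relations on the hyperedges, and the parameter is $|H|$. I would reduce that decision problem to our counting problem $\#\textsc{Hom}(\mathcal{H})$ via a parameterised reduction, and then conclude by contraposition.

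The first step is to translate a CSP instance with primal hypergraph $H$ into a hypergraph-homomorphism instance. Marx's hardness proof already produces instances in which each variable $v$ has a domain $D_v$, and each hyperedge $e$ carries a relation $R_e\subseteq\prod_{v\in e}D_v$. From such an instance I build a host hypergraph $G$ as follows. Its vertex set is the set of pairs $(v,d)$ with $d\in D_v$. For every $e\in E(H)$ and every tuple $t\in R_e$, I add the hyperedge $\{(v,t_v)\mid v\in e\}$. This $G$ is edgewise-injectively $H$-coloured via the map $(v,d)\mapsto v$. Colour-prescribed homomorphisms from $H$ to $G$ then correspond bijectively to solutions of the CSP instance.

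The second step passes from colour-prescribed homomorphism counts back to uncoloured ones, i.e.\ it needs $\#\textsc{cpHom}(\{H\})\leq^{\mathsf{FPT}}\#\textsc{Hom}(\mathcal{H})$ uniformly in $H$. I would use the standard inclusion–exclusion or interpolation argument, as in the graph case of~\cite{DorflerRSW22}. One sums over subsets of colour classes, or tensors $G$ with $H$ itself, so that only colour-respecting homomorphisms survive. A homomorphism that is surjective on colours and respects the colouring can then be isolated using automorphism counts of $H$. Because homomorphisms are required to map edges exactly onto edges (not merely into subsets of edges), I must check that the projection $G\otimes H\to H$ behaves as intended. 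I expect this to hold because edges of the tensor product project onto edges of $H$.

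Finally, the number of oracle queries and the size of each queried pattern depend only on $|H|$. Hence an FPT algorithm for $\#\textsc{Hom}(\mathcal{H})$ would decide $\textsc{CSP}(\mathcal{H})$ in FPT time, contradicting Marx's theorem under ETH. The main obstacle is the second step: the precise form in which Marx's reduction produces instances, and carrying the colour-removal argument through for non-trimmed hypergraph homomorphisms. If the uncoloured reduction proves delicate, the fallback is to observe that Marx's construction already works directly with homomorphisms into an $H$-coloured host. In that case the claim follows from the fact that counting is at least as hard as deciding, since a count is positive exactly when a solution exists.
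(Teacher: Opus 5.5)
Your proposal is correct, and it actually does more than the paper. The paper gives no argument for this statement. It imports Marx's Theorem~7.1, which concerns the \emph{decision} problem $\textsc{CSP}(\mathcal{H})$ with explicitly listed relations, and leaves the passage to uncoloured counting under its exact-edge notion of homomorphism implicit.

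Your route makes that passage explicit in the standard way:
\begin{itemize}
\item normalise the instance (no repeated variables in a scope, and one relation per scope obtained by intersecting), so that its hypergraph is exactly $H$;
\item encode it as the edgewise-injectively $H$-coloured host on the pairs $(v,d)$, dropping vertices that lie in no edge, which is harmless because $H$ has no isolated vertices;
\item note that colour-prescribed homomorphisms are exactly the solutions, because an edge built from $e'\neq e$ has colour set $e'\neq e$;
\item remove the colours.
\end{itemize}
The citation buys brevity. Your version buys a self-contained check that nothing breaks for non-trimmed homomorphisms.

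There are two points to tighten in step~2.

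First, the check you need is not about $G\otimes H$. Tensoring with $H$ is the tool for the converse reduction $\#\textsc{Hom}(\{H\})\fptlinred\#\textsc{cpHom}(\{H\})$ used in Lemma~\ref{lem:lower_bound_reduction_chain}, and it plays no role here. Instead, for $S\subseteq V(H)$ let $G_S$ keep only those edges of $G$ all of whose vertices have colours in $S$. Edges meeting a removed colour are deleted, not shrunk; this is precisely where exact edge-mapping matters. Then $\sum_{S\subseteq V(H)}(-1)^{|V(H)\setminus S|}\,\#\mathsf{Hom}(H\to G_S)$ counts the homomorphisms $\varphi$ with $c\circ\varphi$ surjective. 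Such a $c\circ\varphi$ is a vertex-bijective endomorphism of the finite hypergraph $H$, hence an automorphism. So $\varphi\mapsto\varphi\circ(c\circ\varphi)^{-1}$ is an $\#\mathsf{Aut}(H)$-to-one map onto the colour-prescribed homomorphisms. This uses $2^{|V(H)|}$ oracle calls, all with the same pattern $H$.

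Second, your fallback does not avoid this step. Hardness of \emph{deciding} whether a colour-prescribed homomorphism exists says nothing by itself about uncoloured counts. For example, a singleton edge $\{x\}$ in $G$ already makes the constant map to $x$ a homomorphism. So the colour-removal argument is needed in either case.
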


We note that there is a gap between the previous two theorems in the sense that there are classes of hypergraphs of unbounded fractional hypertreewidth but bounded adaptive width. For those classes, the parameterised complexity of counting homomorphisms is still unresolved. Nevertheless, we are able to use Theorems~\ref{thm:hom_upper_bound} and ~\ref{thm:hom_lower_bound} to provide a partial classification of $\#\textsc{GeneralisedHyperMotif}(\mathfrak{C})$, but we remark that a full resolution of the complexity of the hypergraph homomorphism counting problem might be required for an exhaustive classification of $\#\textsc{GeneralisedHyperMotif}(\mathfrak{C})$.

\subsection{The Parameterised Complexity of $\#\textsc{GeneralisedHyperMotif}(\mathfrak{C})$}
Given a generalised hypergraph motif $\mathfrak{H}$ of order $k$, and a positive integer $r$, we write $S(\mathfrak{H},r)$ for the set of all (isomorphism types of) $k$-edge hypergraphs $H$ of rank at most $r$ such that $\mathfrak{H}(H)=1$. Moreover, we define
\[ \alpha_{\mathfrak{}{H},r}(F) := \sum_{H \in S(\mathfrak{H},r)} \auts{H}^{-1} \cdot \sum_{\substack{\rho \in P(V(H))\\H/\rho \cong F}} \mu(\rho)\,,\]
where $\mu(\rho)$ denotes the M\"obius function of the partition lattice of $V(H)$.

\begin{lemma}\label{lem:uncoloured_motif_to_hom_generalised}
    Let $r$ be a positive integer, and let $\mathfrak{H}$ be generalised hypergraph motif of order $k$. For all hypergraphs~$G$ of rank at most~$r$ we have
    \[ \motifs{\mathfrak{H}}{G} = \sum_{F} \alpha_{\mathfrak{H},r}(F) \cdot \homs{F}{G} \,.\]
\end{lemma}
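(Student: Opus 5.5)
The proof follows the same three-step chain as Lemma~\ref{lem:uncoloured_motif_to_hom}, now with $k$ edges in place of three. First, we would establish that there is a canonical bijection between $\motif{\mathfrak{H}}{G}$ and the set of $k$-edge subhypergraphs of $G$ that satisfy $\mathfrak{H}$. A set $A\subseteq E(G)$ with $\#A=k$ determines the subhypergraph $G[A]$. Conversely, since we assume hypergraphs have no isolated vertices, a $k$-edge subhypergraph is determined by its edge set. Moreover, $G[A]$ has rank at most $r$ because $G$ does, so $G[A]$ is isomorphic to exactly one $H\in S(\mathfrak{H},r)$. Partitioning by isomorphism type then gives
\[ \motifs{\mathfrak{H}}{G}=\sum_{H\in S(\mathfrak{H},r)} \subs{H}{G}\,. \]
The set $S(\mathfrak{H},r)$ is finite: a $k$-edge hypergraph of rank at most $r$ has at most $kr$ vertices, so there are only finitely many isomorphism types. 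Hence this sum is finite.

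Next, we apply Equation~\eqref{sub=emb/aut} to replace $\subs{H}{G}$ by $\auts{H}^{-1}\cdot\embs{H}{G}$. We then apply Equation~\eqref{embintermsofhom} to expand $\embs{H}{G}$ as $\sum_{\rho\in P(V(H))}\mu(\rho)\cdot\homs{H/\rho}{G}$. This yields
\[ \motifs{\mathfrak{H}}{G}=\sum_{H\in S(\mathfrak{H},r)}\auts{H}^{-1}\sum_{\rho\in P(V(H))}\mu(\rho)\cdot\homs{H/\rho}{G}\,. \]

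Finally, we regroup the terms according to the isomorphism type $F$ of the quotient $H/\rho$. This uses the fact that $\homs{F}{G}$ depends only on the isomorphism class of $F$. Collecting coefficients gives exactly $\alpha_{\mathfrak{H},r}(F)$ as defined above the statement. The resulting sum over $F$ is finite, since only finitely many pairs $(H,\rho)$ occur.

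None of these steps is a real obstacle, since the argument is identical to the three-edge case. The only point needing a little care is the first bijection. Two distinct $k$-subsets $A$ give distinct subhypergraphs, and the condition $\mathfrak{H}(G[A])=1$ is invariant under isomorphism, because satisfaction is defined via the existence of some ordering of the edges. With that checked, each motif in $\motif{\mathfrak{H}}{G}$ is counted exactly once, in the term of its unique isomorphism type.
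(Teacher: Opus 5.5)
Your proposal is correct and follows the paper's route exactly. The paper's proof simply defers to the $k=3$ case, Lemma~\ref{lem:uncoloured_motif_to_hom}, whose proof is the same chain: a bijection with $k$-edge subhypergraphs grouped by isomorphism type, then Equation~\eqref{sub=emb/aut}, then Equation~\eqref{embintermsofhom}, then collecting terms by the isomorphism type of $H/\rho$. Your remarks on the finiteness of $S(\mathfrak{H},r)$ and the isomorphism-invariance of $\mathfrak{H}(\cdot)$ are correct details that the paper leaves implicit.
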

\begin{proof}
    The proof is analogous to the case $k=3$ in Lemma~\ref{lem:uncoloured_motif_to_hom}.
\end{proof}

\begin{definition}[Hereditary width measures]
    Let $\mathfrak{H}$ be a generalised hypergraph motif of order~$k$ The \emph{hereditary fractional hypertreewidth} of $\mathfrak{H}$ is defined as $\max\{\mathsf{fhtw}(F)\mid \exists r>0:\alpha_{\mathfrak{H},r}(F)\neq 0\}$.
    Similarly, we define the \emph{hereditary adaptive width} of 
    $\mathfrak{H}$ as $\max\{\mathsf{aw}(F)\mid \exists r>0: \alpha_{\mathfrak{H},r}(F)\neq 0\}$.
\end{definition}

We are now able to prove the following partial classification for $\#\textsc{GeneralisedHyperMotif}(\mathfrak{C})$ via an easy application of uncoloured complexity monotonicity for computing linear combinations of hypergraph homomorphisms established by Bressan et al.\ \cite{Bressanetal26}.
\begin{theorem}\label{thm:main_GHMs}
    Let $\mathfrak{C}$ be a recursively enumerable class of generalised hypergraph motifs.
    \begin{itemize}
        \item If $\mathfrak{C}$ has bounded hereditary fractional hypertreewidth, then $\#\textsc{GeneralisedHyperMotif}(\mathfrak{C})$ is fixed-parameter tractable.
        \item If $\mathfrak{C}$ has unbounded hereditary adaptive width, then $\#\textsc{GeneralisedHyperMotif}(\mathfrak{C})$ is not fixed-parameter tractable, unless ETH fails.
    \end{itemize}
\end{theorem}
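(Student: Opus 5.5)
Both parts rest on the homomorphism expansion of Lemma~\ref{lem:uncoloured_motif_to_hom_generalised}, applied with $r=\mathsf{rank}(G)$. It writes $\motifs{\mathfrak{H}}{G}$ as a finite linear combination $\sum_F \alpha_{\mathfrak{H},r}(F)\cdot\homs{F}{G}$. Every $F$ in the support is a quotient of a $k$-edge hypergraph of rank at most $r$. Hence the support and all coefficients can be computed by brute force in time $g(|\mathfrak{H}|+r)$ for a computable $g$, since $S(\mathfrak{H},r)$ is finite and enumerable.

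\emph{Upper bound.} Assume every $\mathfrak{H}\in\mathfrak{C}$ has hereditary fractional hypertreewidth at most $c$. Then every $F$ with $\alpha_{\mathfrak{H},r}(F)\neq 0$ satisfies $\mathsf{fhtw}(F)\le c$. The algorithm has four steps.
\begin{enumerate}
\item Compute $r=\mathsf{rank}(G)$.
\item Enumerate $S(\mathfrak{H},r)$ and all quotients of its members.
\item Compute the coefficients $\alpha_{\mathfrak{H},r}(F)$.
\item Evaluate each $\homs{F}{G}$ with nonzero coefficient using Theorem~\ref{thm:hom_upper_bound}, applied to the class of hypergraphs of fractional hypertreewidth at most $c$.
\end{enumerate}
Step 4 runs in time $|G|^{O(c)}$ times a function of $|F|$. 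The number of terms is bounded in the parameter, so the whole computation is FPT.

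\emph{Lower bound.} Assume $\mathfrak{C}$ has unbounded hereditary adaptive width. Let $\mathcal{H}$ be the class of all $F$ with $\alpha_{\mathfrak{H},r}(F)\neq0$ for some $\mathfrak{H}\in\mathfrak{C}$ and some $r$. This class has unbounded adaptive width, and it is recursively enumerable: enumerate triples $(\mathfrak{H},r,F)$ and test whether the coefficient is nonzero. By Theorem~\ref{thm:hom_lower_bound}, $\#\textsc{Hom}(\mathcal{H})$ is not FPT unless ETH fails. It therefore suffices to give a reduction $\#\textsc{Hom}(\mathcal{H})\le^{\mathsf{FPT}}\#\textsc{GeneralisedHyperMotif}(\mathfrak{C})$.

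The reduction works as follows. Given $(F,G)$, it searches, in time depending only on $F$, for a witness $\mathfrak{H}\in\mathfrak{C}$ and $r$ with $\alpha_{\mathfrak{H},r}(F)\neq 0$. Since $F$ is a quotient of a member of $S(\mathfrak{H},r)$, we may take $r\ge\mathsf{rank}(F)$. We may also assume $G$ has rank at most $r$, after deleting edges of size greater than $\mathsf{rank}(F)$ from $G$, which does not change $\homs{F}{G}$. The reduction then invokes the uncoloured complexity monotonicity (Dedekind interpolation) of Bressan et al.~\cite{Bressanetal26}. This recovers $\homs{F}{G}$ from oracle values $\motifs{\mathfrak{H}}{G\otimes G_i}$, where $G\otimes G_i$ is the hypergraph tensor product and the $G_i$ depend only on $\mathfrak{H}$ and $r$. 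The parameters of these queries, $|\mathfrak{H}|+\mathsf{rank}(G\otimes G_i)$, are bounded by a function of $|F|$.

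\emph{Main obstacle.} One technical point is that the tensor product can increase rank, while the expansion uses a fixed $r$. I would resolve it by choosing $r$ at least the maximum rank of all queried tensor products. This is legitimate because $\alpha_{\mathfrak{H},r}$ stabilises once $r$ exceeds the rank of every hypergraph in the support that matters. More simply, $r$ only needs to bound the rank of the hosts, and the rank of a tensor product is at most the minimum of the two factors' ranks. The remaining checks are routine.
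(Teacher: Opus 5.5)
Your route is the paper's: the expansion of Lemma~\ref{lem:uncoloured_motif_to_hom_generalised} with $r=\mathsf{rank}(G)$ plus Theorem~\ref{thm:hom_upper_bound} for the upper bound, and, for the lower bound, the class $\mathcal{H}$ of support hypergraphs, Theorem~\ref{thm:hom_lower_bound} and the complexity monotonicity of Bressan et al. Your upper bound is fine. In the lower bound you notice something the paper passes over; it only cites Corollary~4.11 of \cite{Bressanetal26full} and remarks that $G$ may be assumed to have rank at most $\mathsf{rank}(F)$. The point is that the identity $\motifs{\mathfrak{H}}{X}=\sum_{F'}\alpha_{\mathfrak{H},r}(F')\cdot\homs{F'}{X}$ holds only for hosts $X$ of rank at most $r$, whereas the interpolation queries $G\otimes G_i$.

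However, both of your resolutions fail.
\begin{itemize}
\item The rank of the uncoloured tensor product is not bounded by the minimum of the factors' ranks. If $e,e'$ are $2$-element edges, then $e\times e'$ projects onto both. It must be an edge if hom counts are to be multiplicative: take a single $4$-element edge as pattern and a bijection onto $e\times e'$, whose two projections are homomorphisms. So the rank jumps to $4$, and in general it is the product of the ranks. The minimum bound holds only for the edgewise-injectively coloured product of Definition~\ref{def:coloured_hypergraph_tensor}.
\item Raising $r$ is circular. The hosts $G_i$ are built from the support of $\alpha_{\mathfrak{H},r}$, whose members have rank up to $r$, so $\mathsf{rank}(G\otimes G_i)$ can again exceed $r$.
\item Moreover, $\alpha_{\mathfrak{H},r}$ does not stabilise. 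Take the one-edge motif $\mathfrak{H}=(1)$, write $K^{(n)}$ for a single edge of size $n$ and $s(n,j)$ for the signed Stirling numbers of the first kind. Then $\alpha_{\mathfrak{H},r}(K^{(j)})=\sum_{n=j}^{r}s(n,j)/n!$. So $\alpha_{\mathfrak{H},r}(K^{(1)})=\sum_{n=1}^{r}(-1)^{n-1}/n$ changes with every $r$, and $\alpha_{\mathfrak{H},2}(K^{(2)})=1/2$ while $\alpha_{\mathfrak{H},3}(K^{(2)})=0$. Enlarging $r$ beyond the witness can therefore kill the coefficient of the very $F$ you want to extract.
\end{itemize}

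The missing idea is to truncate the hosts instead of enlarging $r$. Keep the witness $r$ fixed, and let $X_{\le r}$ be $X$ with all edges of size greater than $r$ deleted, then all isolated vertices deleted. Every $F'$ with $\alpha_{\mathfrak{H},r}(F')\neq 0$ is a quotient of a hypergraph of rank at most $r$, so it has rank at most $r$. Hence $\homs{F'}{X_{\le r}}=\homs{F'}{X}$ for every $X$. Therefore $\Phi(X):=\motifs{\mathfrak{H}}{X_{\le r}}=\sum_{F'}\alpha_{\mathfrak{H},r}(F')\cdot\homs{F'}{X}$ holds for \emph{all} hypergraphs $X$, and complexity monotonicity applies to $\Phi$ verbatim. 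Each value $\Phi(G\otimes G_i)$ costs one oracle call on $(G\otimes G_i)_{\le r}$, whose parameter $|\mathfrak{H}|+r$ is bounded by a function of $|F|$. With this change the rest of your reduction goes through.
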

\begin{proof}
    For the upper bound, on input $\mathfrak{H}\in \mathfrak{C}$ and a hypergraph $G$ or rank $r$, we can compute in time only depending on $\mathfrak{H}$ and $r$, the hypergraphs $F$ and the coefficients $\alpha_{\mathfrak{H},r}(F)$ for all $F$ with $\alpha_{\mathfrak{H},r}(F)\neq 0$ --- note that $\alpha_{\mathfrak{H},r}(F)\neq 0$ implies that $F$  is a quotient of a hypergraph $H \in S(\mathfrak{H},r)$, the latter of which only depends on $\mathfrak{H}$ and $r$; hence we can iterate by bruteforce over all possible partitions $\rho \in P(V(H))$ for all $H \in S(\mathfrak{H},r)$. Afterwards, we can evaluate the linear combination in Lemma~\ref{lem:uncoloured_motif_to_hom_generalised} in FPT time by using the algorithm from Theorem~\ref{thm:hom_upper_bound} for each term $\homs{F}{G}$ with $\alpha_{\mathfrak{H},k}(F)\neq 0$; recall that those $F$ have bounded fractional hypertreewidth as $\mathfrak{C}$ has bounded hereditary fractional hypertreewidth.

    For the lower bound, let $\mathcal{H}$ be the class of all hypergraphs $F$ such that $\alpha_{\mathfrak{H},r}(F)\neq 0$ for some $\mathfrak{H}\in \mathfrak{C}$ and $r \geq 0$. Observe that $\mathcal{H}$ has unbounded adaptive width as $\mathfrak{C}$ has unbounded hereditary adaptive width. Hence, by Theorem~\ref{thm:hom_lower_bound}, the problem $\#\textsc{Hom}(\mathcal{H})$ is not fixed-parameter tractable unless ETH fails. Finally, for our choice of $\mathcal{H}$, we obtain
    \[\#\textsc{Hom}(\mathcal{H}) \leq^{\mathsf{FPT}} \#\textsc{GeneralisedHyperMotif}(\mathfrak{C}) \]
    via the uncoloured version of hypergraph complexity monotonicity as shown by Bressan et al.\ \cite{Bressanetal26}. An explicit statement of the reduction can be found in the full version~\cite[Corollary 4.11]{Bressanetal26full}, which applies to our setting despite $\#\textsc{GeneralisedHyperMotif}(\mathfrak{C})$ being parameterised by the rank of the input graph $G$; this is due to the fact that any input $(H,G)$ of $\#\textsc{Hom}(\mathcal{H})$ can be assumed to satisfy that the rank of $G$ is at most the rank of $H$ as edges in $G$ of higher cardinality do not affect homomorphisms from $H$ and can thus be deleted.
\end{proof}

\bibliographystyle{plain}
\bibliography{references}

\end{document}

%% file: big_venn_diagram_picture.tex
\begin{center}
\begin{tikzpicture}[scale=0.4, transform shape]
    
    \def\r{1.5}
    \coordinate (A) at (0,0);
    \coordinate (B) at (2,0);
    \coordinate (C) at (1,1.732);

    \begin{scope}
        \clip (A) circle (\r);
        \fill[pattern=\nel, pattern color=\redish] (A) circle (\r);
    \end{scope}
   
    \begin{scope}
        \clip (B) circle (\r);
        \fill[pattern=\nel, pattern color=\redish] (B) circle (\r);
    \end{scope}

    \begin{scope}
        \clip (C) circle (\r);
        \fill[pattern=\nel, pattern color=\redish] (C) circle (\r);
    \end{scope}

    \begin{scope}
        \clip (A) circle (\r);
        \clip (B) circle (\r);
        \fill[pattern=\nel, pattern color=\redish] (-2,-2) rectangle (4,4);
    \end{scope}

    \begin{scope}
        \clip (A) circle (\r);
        \clip (C) circle (\r);
        \fill[pattern=\nel, pattern color=\redish] (-1,-1) rectangle (4,4);
    \end{scope}

    \begin{scope}
        \clip (B) circle (\r);
        \clip (C) circle (\r);
        \fill[pattern=\nel, pattern color=\redish] (0,0) rectangle (4,4);
    \end{scope}

    \begin{scope}
        \clip (A) circle (\r);
        \clip (B) circle (\r);
        \clip (C) circle (\r);
        \fill[fill=white, pattern color=\redish] (0,0) rectangle (4,4);
    \end{scope}

    \draw[line width=1.2pt] (A) circle (\r);
    \draw[line width=1.2pt] (B) circle (\r);
    \draw[line width=1.2pt] (C) circle (\r);

    \node (V_1) at (1.1,-2.5) {\huge $\mathcal{V}_1$};
\end{tikzpicture} 
\begin{tikzpicture}[scale=0.4, transform shape]
    \def\r{1.5}
    \coordinate (A) at (0,0);
    \coordinate (B) at (2,0);
    \coordinate (C) at (1,1.732);

    \begin{scope}
        \clip (A) circle (\r);
        \fill[pattern=\nel, pattern color=\redish] (A) circle (\r);
    \end{scope}
   
    \begin{scope}
        \clip (B) circle (\r);
        \fill[pattern=none, pattern color=\redish] (B) circle (\r);
    \end{scope}

    \begin{scope}
        \clip (C) circle (\r);
        \fill[pattern=\nel, pattern color=\redish] (C) circle (\r);
    \end{scope}

    \begin{scope}
        \clip (A) circle (\r);
        \clip (B) circle (\r);
        \fill[pattern=\nel, pattern color=\redish] (-2,-2) rectangle (4,4);
    \end{scope}

    \begin{scope}
        \clip (A) circle (\r);
        \clip (C) circle (\r);
        \fill[pattern=\nel, pattern color=\redish] (-1,-1) rectangle (4,4);
    \end{scope}

    \begin{scope}
        \clip (B) circle (\r);
        \clip (C) circle (\r);
        \fill[pattern=\nel, pattern color=\redish] (0,0) rectangle (4,4);
    \end{scope}

    \begin{scope}
        \clip (A) circle (\r);
        \clip (B) circle (\r);
        \clip (C) circle (\r);
        \fill[fill=white, pattern color=\redish] (0,0) rectangle (4,4);
    \end{scope}

    \draw[line width=1.2pt] (A) circle (\r);
    \draw[line width=1.2pt] (B) circle (\r);
    \draw[line width=1.2pt] (C) circle (\r);

    \node (V_2) at (1.1,-2.5) {\huge $\mathcal{V}_2$};
\end{tikzpicture}
\begin{tikzpicture}[scale=0.4, transform shape]
    \def\r{1.5}
    \coordinate (A) at (0,0);
    \coordinate (B) at (2,0);
    \coordinate (C) at (1,1.732);

    \begin{scope}
        \clip (A) circle (\r);
        \fill[pattern=none, pattern color=\redish] (A) circle (\r);
    \end{scope}
   
    \begin{scope}
        \clip (B) circle (\r);
        \fill[pattern=none, pattern color=\redish] (B) circle (\r);
    \end{scope}

    \begin{scope}
        \clip (C) circle (\r);
        \fill[pattern=\nel, pattern color=\redish] (C) circle (\r);
    \end{scope}

    \begin{scope}
        \clip (A) circle (\r);
        \clip (B) circle (\r);
        \fill[pattern=\nel, pattern color=\redish] (-2,-2) rectangle (4,4);
    \end{scope}

    \begin{scope}
        \clip (A) circle (\r);
        \clip (C) circle (\r);
        \fill[pattern=\nel, pattern color=\redish] (-1,-1) rectangle (4,4);
    \end{scope}

    \begin{scope}
        \clip (B) circle (\r);
        \clip (C) circle (\r);
        \fill[pattern=\nel, pattern color=\redish] (0,0) rectangle (4,4);
    \end{scope}

    \begin{scope}
        \clip (A) circle (\r);
        \clip (B) circle (\r);
        \clip (C) circle (\r);
        \fill[fill=white, pattern color=\redish] (0,0) rectangle (4,4);
    \end{scope}

    \draw[line width=1.2pt] (A) circle (\r);
    \draw[line width=1.2pt] (B) circle (\r);
    \draw[line width=1.2pt] (C) circle (\r);

    \node (V_3) at (1.1,-2.5) {\huge $\mathcal{V}_3$};
\end{tikzpicture}
\begin{tikzpicture}[scale=0.4, transform shape]
    \def\r{1.5}
    \coordinate (A) at (0,0);
    \coordinate (B) at (2,0);
    \coordinate (C) at (1,1.732);

    \begin{scope}
        \clip (A) circle (\r);
        \fill[pattern=none, pattern color=\redish] (A) circle (\r);
    \end{scope}
   
    \begin{scope}
        \clip (B) circle (\r);
        \fill[pattern=none, pattern color=\redish] (B) circle (\r);
    \end{scope}

    \begin{scope}
        \clip (C) circle (\r);
        \fill[pattern=none, pattern color=\redish] (C) circle (\r);
    \end{scope}

    \begin{scope}
        \clip (A) circle (\r);
        \clip (B) circle (\r);
        \fill[pattern=\nel, pattern color=\redish] (-2,-2) rectangle (4,4);
    \end{scope}

    \begin{scope}
        \clip (A) circle (\r);
        \clip (C) circle (\r);
        \fill[pattern=\nel, pattern color=\redish] (-1,-1) rectangle (4,4);
    \end{scope}

    \begin{scope}
        \clip (B) circle (\r);
        \clip (C) circle (\r);
        \fill[pattern=\nel, pattern color=\redish] (0,0) rectangle (4,4);
    \end{scope}

    \begin{scope}
        \clip (A) circle (\r);
        \clip (B) circle (\r);
        \clip (C) circle (\r);
        \fill[fill=white, pattern color=\redish] (0,0) rectangle (4,4);
    \end{scope}

    \draw[line width=1.2pt] (A) circle (\r);
    \draw[line width=1.2pt] (B) circle (\r);
    \draw[line width=1.2pt] (C) circle (\r);

    \node (V_4) at (1.1,-2.5) {\huge $\mathcal{V}_4$};
\end{tikzpicture}
\begin{tikzpicture}[scale=0.4, transform shape]
    \def\r{1.5}
    \coordinate (A) at (0,0);
    \coordinate (B) at (2,0);
    \coordinate (C) at (1,1.732);

    \begin{scope}
        \clip (A) circle (\r);
        \fill[pattern=\nel, pattern color=\redish] (A) circle (\r);
    \end{scope}
   
    \begin{scope}
        \clip (B) circle (\r);
        \fill[pattern=\nel, pattern color=\redish] (B) circle (\r);
    \end{scope}

    \begin{scope}
        \clip (C) circle (\r);
        \fill[pattern=\nel, pattern color=\redish] (C) circle (\r);
    \end{scope}

    \begin{scope}
        \clip (A) circle (\r);
        \clip (B) circle (\r);
        \fill[pattern=\nel, pattern color=\redish] (-2,-2) rectangle (4,4);
    \end{scope}

    \begin{scope}
        \clip (A) circle (\r);
        \clip (C) circle (\r);
        \fill[fill=white, pattern color=\redish] (-1,-1) rectangle (4,4);
    \end{scope}

    \begin{scope}
        \clip (B) circle (\r);
        \clip (C) circle (\r);
        \fill[pattern=\nel, pattern color=\redish] (0,0) rectangle (4,4);
    \end{scope}

    \begin{scope}
        \clip (A) circle (\r);
        \clip (B) circle (\r);
        \clip (C) circle (\r);
        \fill[fill=white, pattern color=\redish] (0,0) rectangle (4,4);
    \end{scope}

    \draw[line width=1.2pt] (A) circle (\r);
    \draw[line width=1.2pt] (B) circle (\r);
    \draw[line width=1.2pt] (C) circle (\r);

    \node (V_5) at (1.1,-2.5) {\huge $\mathcal{V}_5$};
\end{tikzpicture}
\begin{tikzpicture}[scale=0.4, transform shape]
    \def\r{1.5}
    \coordinate (A) at (0,0);
    \coordinate (B) at (2,0);
    \coordinate (C) at (1,1.732);

    \begin{scope}
        \clip (A) circle (\r);
        \fill[pattern=\nel, pattern color=\redish] (A) circle (\r);
    \end{scope}
   
    \begin{scope}
        \clip (B) circle (\r);
        \fill[pattern=none, pattern color=\redish] (B) circle (\r);
    \end{scope}

    \begin{scope}
        \clip (C) circle (\r);
        \fill[pattern=\nel, pattern color=\redish] (C) circle (\r);
    \end{scope}

    \begin{scope}
        \clip (A) circle (\r);
        \clip (B) circle (\r);
        \fill[pattern=\nel, pattern color=\redish] (-2,-2) rectangle (4,4);
    \end{scope}

    \begin{scope}
        \clip (A) circle (\r);
        \clip (C) circle (\r);
        \fill[fill=white, pattern color=\redish] (-1,-1) rectangle (4,4);
    \end{scope}

    \begin{scope}
        \clip (B) circle (\r);
        \clip (C) circle (\r);
        \fill[pattern=\nel, pattern color=\redish] (0,0) rectangle (4,4);
    \end{scope}

    \begin{scope}
        \clip (A) circle (\r);
        \clip (B) circle (\r);
        \clip (C) circle (\r);
        \fill[fill=white, pattern color=\redish] (0,0) rectangle (4,4);
    \end{scope}

    \draw[line width=1.2pt] (A) circle (\r);
    \draw[line width=1.2pt] (B) circle (\r);
    \draw[line width=1.2pt] (C) circle (\r);

    \node (V_6) at (1.1,-2.5) {\huge $\mathcal{V}_6$};
\end{tikzpicture}
\begin{tikzpicture}[scale=0.4, transform shape]
    \def\r{1.5}
    \coordinate (A) at (0,0);
    \coordinate (B) at (2,0);
    \coordinate (C) at (1,1.732);

    \begin{scope}
        \clip (A) circle (\r);
        \fill[pattern=\nel, pattern color=\redish] (A) circle (\r);
    \end{scope}
   
    \begin{scope}
        \clip (B) circle (\r);
        \fill[pattern=\nel, pattern color=\redish] (B) circle (\r);
    \end{scope}

    \begin{scope}
        \clip (C) circle (\r);
        \fill[pattern=\nel, pattern color=\redish] (C) circle (\r);
    \end{scope}

    \begin{scope}
        \clip (A) circle (\r);
        \clip (B) circle (\r);
        \fill[pattern=\nel, pattern color=\redish] (-2,-2) rectangle (4,4);
    \end{scope}

    \begin{scope}
        \clip (A) circle (\r);
        \clip (C) circle (\r);
        \fill[pattern=\nel, pattern color=\redish] (-1,-1) rectangle (4,4);
    \end{scope}

    \begin{scope}
        \clip (B) circle (\r);
        \clip (C) circle (\r);
        \fill[pattern=\nel, pattern color=\redish] (0,0) rectangle (4,4);
    \end{scope}

    \begin{scope}
        \clip (A) circle (\r);
        \clip (B) circle (\r);
        \clip (C) circle (\r);
        \fill[pattern=\nel, pattern color=\redish] (0,0) rectangle (4,4);
    \end{scope}

    \draw[line width=1.2pt] (A) circle (\r);
    \draw[line width=1.2pt] (B) circle (\r);
    \draw[line width=1.2pt] (C) circle (\r);

    \node (V_7) at (1.1,-2.5) {\huge $\mathcal{V}_7$};
\end{tikzpicture}\\ 
~\\
\begin{tikzpicture}[scale=0.4, transform shape]
    \def\r{1.5}
    \coordinate (A) at (0,0);
    \coordinate (B) at (2,0);
    \coordinate (C) at (1,1.732);

    \begin{scope}
        \clip (A) circle (\r);
        \fill[pattern=\nel, pattern color=\redish] (A) circle (\r);
    \end{scope}
   
    \begin{scope}
        \clip (B) circle (\r);
        \fill[pattern=none, pattern color=\redish] (B) circle (\r);
    \end{scope}

    \begin{scope}
        \clip (C) circle (\r);
        \fill[pattern=\nel, pattern color=\redish] (C) circle (\r);
    \end{scope}

    \begin{scope}
        \clip (A) circle (\r);
        \clip (B) circle (\r);
        \fill[pattern=\nel, pattern color=\redish] (-2,-2) rectangle (4,4);
    \end{scope}

    \begin{scope}
        \clip (A) circle (\r);
        \clip (C) circle (\r);
        \fill[pattern=\nel, pattern color=\redish] (-1,-1) rectangle (4,4);
    \end{scope}

    \begin{scope}
        \clip (B) circle (\r);
        \clip (C) circle (\r);
        \fill[pattern=\nel, pattern color=\redish] (0,0) rectangle (4,4);
    \end{scope}

    \begin{scope}
        \clip (A) circle (\r);
        \clip (B) circle (\r);
        \clip (C) circle (\r);
        \fill[pattern=\nel, pattern color=\redish] (0,0) rectangle (4,4);
    \end{scope}

    \draw[line width=1.2pt] (A) circle (\r);
    \draw[line width=1.2pt] (B) circle (\r);
    \draw[line width=1.2pt] (C) circle (\r);

    \node (V_8) at (1.1,-2.5) {\huge $\mathcal{V}_8$};
\end{tikzpicture}
\begin{tikzpicture}[scale=0.4, transform shape]
    \def\r{1.5}
    \coordinate (A) at (0,0);
    \coordinate (B) at (2,0);
    \coordinate (C) at (1,1.732);

    \begin{scope}
        \clip (A) circle (\r);
        \fill[pattern=none, pattern color=\redish] (A) circle (\r);
    \end{scope}
   
    \begin{scope}
        \clip (B) circle (\r);
        \fill[pattern=none, pattern color=\redish] (B) circle (\r);
    \end{scope}

    \begin{scope}
        \clip (C) circle (\r);
        \fill[pattern=\nel, pattern color=\redish] (C) circle (\r);
    \end{scope}

    \begin{scope}
        \clip (A) circle (\r);
        \clip (B) circle (\r);
        \fill[pattern=\nel, pattern color=\redish] (-2,-2) rectangle (4,4);
    \end{scope}

    \begin{scope}
        \clip (A) circle (\r);
        \clip (C) circle (\r);
        \fill[pattern=\nel, pattern color=\redish] (-1,-1) rectangle (4,4);
    \end{scope}

    \begin{scope}
        \clip (B) circle (\r);
        \clip (C) circle (\r);
        \fill[pattern=\nel, pattern color=\redish] (0,0) rectangle (4,4);
    \end{scope}

    \begin{scope}
        \clip (A) circle (\r);
        \clip (B) circle (\r);
        \clip (C) circle (\r);
        \fill[pattern=\nel, pattern color=\redish] (0,0) rectangle (4,4);
    \end{scope}

    \draw[line width=1.2pt] (A) circle (\r);
    \draw[line width=1.2pt] (B) circle (\r);
    \draw[line width=1.2pt] (C) circle (\r);

    \node (V_9) at (1.1,-2.5) {\huge $\mathcal{V}_9$};
\end{tikzpicture}
\begin{tikzpicture}[scale=0.4, transform shape]
    \def\r{1.5}
    \coordinate (A) at (0,0);
    \coordinate (B) at (2,0);
    \coordinate (C) at (1,1.732);

    \begin{scope}
        \clip (A) circle (\r);
        \fill[pattern=none, pattern color=\redish] (A) circle (\r);
    \end{scope}
   
    \begin{scope}
        \clip (B) circle (\r);
        \fill[pattern=none, pattern color=\redish] (B) circle (\r);
    \end{scope}

    \begin{scope}
        \clip (C) circle (\r);
        \fill[pattern=none, pattern color=\redish] (C) circle (\r);
    \end{scope}

    \begin{scope}
        \clip (A) circle (\r);
        \clip (B) circle (\r);
        \fill[pattern=\nel, pattern color=\redish] (-2,-2) rectangle (4,4);
    \end{scope}

    \begin{scope}
        \clip (A) circle (\r);
        \clip (C) circle (\r);
        \fill[pattern=\nel, pattern color=\redish] (-1,-1) rectangle (4,4);
    \end{scope}

    \begin{scope}
        \clip (B) circle (\r);
        \clip (C) circle (\r);
        \fill[pattern=\nel, pattern color=\redish] (0,0) rectangle (4,4);
    \end{scope}

    \begin{scope}
        \clip (A) circle (\r);
        \clip (B) circle (\r);
        \clip (C) circle (\r);
        \fill[pattern=\nel, pattern color=\redish] (0,0) rectangle (4,4);
    \end{scope}

    \draw[line width=1.2pt] (A) circle (\r);
    \draw[line width=1.2pt] (B) circle (\r);
    \draw[line width=1.2pt] (C) circle (\r);

    \node (V_10) at (1.1,-2.5) {\huge $\mathcal{V}_{10}$};
\end{tikzpicture}
\begin{tikzpicture}[scale=0.4, transform shape]
    \def\r{1.5}
    \coordinate (A) at (0,0);
    \coordinate (B) at (2,0);
    \coordinate (C) at (1,1.732);

    \begin{scope}
        \clip (A) circle (\r);
        \fill[pattern=\nel, pattern color=\redish] (A) circle (\r);
    \end{scope}
   
    \begin{scope}
        \clip (B) circle (\r);
        \fill[pattern=none, pattern color=\redish] (B) circle (\r);
    \end{scope}

    \begin{scope}
        \clip (C) circle (\r);
        \fill[pattern=\nel, pattern color=\redish] (C) circle (\r);
    \end{scope}

    \begin{scope}
        \clip (A) circle (\r);
        \clip (B) circle (\r);
        \fill[pattern=\nel, pattern color=\redish] (-2,-2) rectangle (4,4);
    \end{scope}

    \begin{scope}
        \clip (A) circle (\r);
        \clip (C) circle (\r);
        \fill[fill=white, pattern color=\redish] (-1,-1) rectangle (4,4);
    \end{scope}

    \begin{scope}
        \clip (B) circle (\r);
        \clip (C) circle (\r);
        \fill[pattern=\nel, pattern color=\redish] (0,0) rectangle (4,4);
    \end{scope}

    \begin{scope}
        \clip (A) circle (\r);
        \clip (B) circle (\r);
        \clip (C) circle (\r);
        \fill[pattern=\nel, pattern color=\redish] (0,0) rectangle (4,4);
    \end{scope}

    \draw[line width=1.2pt] (A) circle (\r);
    \draw[line width=1.2pt] (B) circle (\r);
    \draw[line width=1.2pt] (C) circle (\r);

    \node (V_11) at (1.1,-2.5) {\huge $\mathcal{V}_{11}$};
\end{tikzpicture} 
\begin{tikzpicture}[scale=0.4, transform shape]
    \def\r{1.5}
    \coordinate (A) at (0,0);
    \coordinate (B) at (2,0);
    \coordinate (C) at (1,1.732);

    \begin{scope}
        \clip (A) circle (\r);
        \fill[pattern=\nel, pattern color=\redish] (A) circle (\r);
    \end{scope}
   
    \begin{scope}
        \clip (B) circle (\r);
        \fill[pattern=\nel, pattern color=\redish] (B) circle (\r);
    \end{scope}

    \begin{scope}
        \clip (C) circle (\r);
        \fill[pattern=\nel, pattern color=\redish] (C) circle (\r);
    \end{scope}

    \begin{scope}
        \clip (A) circle (\r);
        \clip (B) circle (\r);
        \fill[pattern=\nel, pattern color=\redish] (-2,-2) rectangle (4,4);
    \end{scope}

    \begin{scope}
        \clip (A) circle (\r);
        \clip (C) circle (\r);
        \fill[fill=white, pattern color=\redish] (-1,-1) rectangle (4,4);
    \end{scope}

    \begin{scope}
        \clip (B) circle (\r);
        \clip (C) circle (\r);
        \fill[fill=white, pattern color=\redish] (0,0) rectangle (4,4);
    \end{scope}

    \begin{scope}
        \clip (A) circle (\r);
        \clip (B) circle (\r);
        \clip (C) circle (\r);
        \fill[pattern=\nel, pattern color=\redish] (0,0) rectangle (4,4);
    \end{scope}

    \draw[line width=1.2pt] (A) circle (\r);
    \draw[line width=1.2pt] (B) circle (\r);
    \draw[line width=1.2pt] (C) circle (\r);

    \node (V_12) at (1.1,-2.5) {\huge $\mathcal{V}_{12}$};
\end{tikzpicture}
\begin{tikzpicture}[scale=0.4, transform shape]
    \def\r{1.5}
    \coordinate (A) at (0,0);
    \coordinate (B) at (2,0);
    \coordinate (C) at (1,1.732);

    \begin{scope}
        \clip (A) circle (\r);
        \fill[pattern=\nel, pattern color=\redish] (A) circle (\r);
    \end{scope}
   
    \begin{scope}
        \clip (B) circle (\r);
        \fill[pattern=\nel, pattern color=\redish] (B) circle (\r);
    \end{scope}

    \begin{scope}
        \clip (C) circle (\r);
        \fill[pattern=\nel, pattern color=\redish] (C) circle (\r);
    \end{scope}

    \begin{scope}
        \clip (A) circle (\r);
        \clip (B) circle (\r);
        \fill[fill=white, pattern color=\redish] (-2,-2) rectangle (4,4);
    \end{scope}

    \begin{scope}
        \clip (A) circle (\r);
        \clip (C) circle (\r);
        \fill[fill=white, pattern color=\redish] (-1,-1) rectangle (4,4);
    \end{scope}

    \begin{scope}
        \clip (B) circle (\r);
        \clip (C) circle (\r);
        \fill[fill=white, pattern color=\redish] (0,0) rectangle (4,4);
    \end{scope}

    \begin{scope}
        \clip (A) circle (\r);
        \clip (B) circle (\r);
        \clip (C) circle (\r);
        \fill[pattern=\nel, pattern color=\redish] (0,0) rectangle (4,4);
    \end{scope}

    \draw[line width=1.2pt] (A) circle (\r);
    \draw[line width=1.2pt] (B) circle (\r);
    \draw[line width=1.2pt] (C) circle (\r);

    \node (V_13) at (1.1,-2.5) {\huge $\mathcal{V}_{13}$};
\end{tikzpicture}
\begin{tikzpicture}[scale=0.4, transform shape]
    \def\r{1.5}
    \coordinate (A) at (0,0);
    \coordinate (B) at (2,0);
    \coordinate (C) at (1,1.732);

    \begin{scope}
        \clip (A) circle (\r);
        \fill[pattern=\nel, pattern color=\redish] (A) circle (\r);
    \end{scope}
   
    \begin{scope}
        \clip (B) circle (\r);
        \fill[pattern=\nel, pattern color=\redish] (B) circle (\r);
    \end{scope}

    \begin{scope}
        \clip (C) circle (\r);
        \fill[pattern=\nel, pattern color=\redish] (C) circle (\r);
    \end{scope}

    \begin{scope}
        \clip (A) circle (\r);
        \clip (B) circle (\r);
        \fill[pattern=\nel, pattern color=\redish] (-2,-2) rectangle (4,4);
    \end{scope}

    \begin{scope}
        \clip (A) circle (\r);
        \clip (C) circle (\r);
        \fill[fill=white, pattern color=\redish] (-1,-1) rectangle (4,4);
    \end{scope}

    \begin{scope}
        \clip (B) circle (\r);
        \clip (C) circle (\r);
        \fill[pattern=\nel, pattern color=\redish] (0,0) rectangle (4,4);
    \end{scope}

    \begin{scope}
        \clip (A) circle (\r);
        \clip (B) circle (\r);
        \clip (C) circle (\r);
        \fill[pattern=\nel, pattern color=\redish] (0,0) rectangle (4,4);
    \end{scope}

    \draw[line width=1.2pt] (A) circle (\r);
    \draw[line width=1.2pt] (B) circle (\r);
    \draw[line width=1.2pt] (C) circle (\r);

    \node (V_14) at (1.1,-2.5) {\huge $\mathcal{V}_{14}$};
\end{tikzpicture}\\
~\\
\begin{tikzpicture}[scale=0.4, transform shape]
    \def\r{1.5}
    \coordinate (A) at (0,0);
    \coordinate (B) at (2,0);
    \coordinate (C) at (1,1.732);

    \begin{scope}
        \clip (A) circle (\r);
        \fill[pattern=none, pattern color=\greenish] (A) circle (\r);
    \end{scope}
   
    \begin{scope}
        \clip (B) circle (\r);
        \fill[pattern=none, pattern color=\greenish] (B) circle (\r);
    \end{scope}

    \begin{scope}
        \clip (C) circle (\r);
        \fill[pattern=none, pattern color=\greenish] (C) circle (\r);
    \end{scope}

    \begin{scope}
        \clip (A) circle (\r);
        \clip (B) circle (\r);
        \fill[pattern=\nel, pattern color=\greenish] (-2,-2) rectangle (4,4);
    \end{scope}

    \begin{scope}
        \clip (A) circle (\r);
        \clip (C) circle (\r);
        \fill[fill=white, pattern color=\greenish] (-1,-1) rectangle (4,4);
    \end{scope}

    \begin{scope}
        \clip (B) circle (\r);
        \clip (C) circle (\r);
        \fill[pattern=\nel, pattern color=\greenish] (0,0) rectangle (4,4);
    \end{scope}

    \begin{scope}
        \clip (A) circle (\r);
        \clip (B) circle (\r);
        \clip (C) circle (\r);
        \fill[fill=white, pattern color=\greenish] (0,0) rectangle (4,4);
    \end{scope}

    \draw[line width=1.2pt] (A) circle (\r);
    \draw[line width=1.2pt] (B) circle (\r);
    \draw[line width=1.2pt] (C) circle (\r);

    \node (V_15) at (1.1,-2.5) {\huge $\mathcal{V}_{15}$};
\end{tikzpicture}
\begin{tikzpicture}[scale=0.4, transform shape]
    \def\r{1.5}
    \coordinate (A) at (0,0);
    \coordinate (B) at (2,0);
    \coordinate (C) at (1,1.732);

    \begin{scope}
        \clip (A) circle (\r);
        \fill[pattern=none, pattern color=\greenish] (A) circle (\r);
    \end{scope}
   
    \begin{scope}
        \clip (B) circle (\r);
        \fill[pattern=\nel, pattern color=\greenish] (B) circle (\r);
    \end{scope}

    \begin{scope}
        \clip (C) circle (\r);
        \fill[pattern=none, pattern color=\greenish] (C) circle (\r);
    \end{scope}

    \begin{scope}
        \clip (A) circle (\r);
        \clip (B) circle (\r);
        \fill[pattern=\nel, pattern color=\greenish] (-2,-2) rectangle (4,4);
    \end{scope}

    \begin{scope}
        \clip (A) circle (\r);
        \clip (C) circle (\r);
        \fill[fill=white, pattern color=\greenish] (-1,-1) rectangle (4,4);
    \end{scope}

    \begin{scope}
        \clip (B) circle (\r);
        \clip (C) circle (\r);
        \fill[pattern=\nel, pattern color=\greenish] (0,0) rectangle (4,4);
    \end{scope}

    \begin{scope}
        \clip (A) circle (\r);
        \clip (B) circle (\r);
        \clip (C) circle (\r);
        \fill[fill=white, pattern color=\greenish] (0,0) rectangle (4,4);
    \end{scope}

    \draw[line width=1.2pt] (A) circle (\r);
    \draw[line width=1.2pt] (B) circle (\r);
    \draw[line width=1.2pt] (C) circle (\r);

    \node (V_16) at (1.1,-2.5) {\huge $\mathcal{V}_{16}$};
\end{tikzpicture}
\begin{tikzpicture}[scale=0.4, transform shape]
    \def\r{1.5}
    \coordinate (A) at (0,0);
    \coordinate (B) at (2,0);
    \coordinate (C) at (1,1.732);

    \begin{scope}
        \clip (A) circle (\r);
        \fill[pattern=none, pattern color=\greenish] (A) circle (\r);
    \end{scope}
   
    \begin{scope}
        \clip (B) circle (\r);
        \fill[pattern=none, pattern color=\greenish] (B) circle (\r);
    \end{scope}

    \begin{scope}
        \clip (C) circle (\r);
        \fill[pattern=\nel, pattern color=\greenish] (C) circle (\r);
    \end{scope}

    \begin{scope}
        \clip (A) circle (\r);
        \clip (B) circle (\r);
        \fill[pattern=\nel, pattern color=\greenish] (-2,-2) rectangle (4,4);
    \end{scope}

    \begin{scope}
        \clip (A) circle (\r);
        \clip (C) circle (\r);
        \fill[fill=white, pattern color=\greenish] (-1,-1) rectangle (4,4);
    \end{scope}

    \begin{scope}
        \clip (B) circle (\r);
        \clip (C) circle (\r);
        \fill[pattern=\nel, pattern color=\greenish] (0,0) rectangle (4,4);
    \end{scope}

    \begin{scope}
        \clip (A) circle (\r);
        \clip (B) circle (\r);
        \clip (C) circle (\r);
        \fill[fill=white, pattern color=\greenish] (0,0) rectangle (4,4);
    \end{scope}

    \draw[line width=1.2pt] (A) circle (\r);
    \draw[line width=1.2pt] (B) circle (\r);
    \draw[line width=1.2pt] (C) circle (\r);

    \node (V_17) at (1.1,-2.5) {\huge $\mathcal{V}_{17}$};
\end{tikzpicture}
\begin{tikzpicture}[scale=0.4, transform shape]
    \def\r{1.5}
    \coordinate (A) at (0,0);
    \coordinate (B) at (2,0);
    \coordinate (C) at (1,1.732);

    \begin{scope}
        \clip (A) circle (\r);
        \fill[pattern=none, pattern color=\greenish] (A) circle (\r);
    \end{scope}
   
    \begin{scope}
        \clip (B) circle (\r);
        \fill[pattern=\nel, pattern color=\greenish] (B) circle (\r);
    \end{scope}

    \begin{scope}
        \clip (C) circle (\r);
        \fill[pattern=\nel, pattern color=\greenish] (C) circle (\r);
    \end{scope}

    \begin{scope}
        \clip (A) circle (\r);
        \clip (B) circle (\r);
        \fill[pattern=\nel, pattern color=\greenish] (-2,-2) rectangle (4,4);
    \end{scope}

    \begin{scope}
        \clip (A) circle (\r);
        \clip (C) circle (\r);
        \fill[fill=white, pattern color=\greenish] (-1,-1) rectangle (4,4);
    \end{scope}

    \begin{scope}
        \clip (B) circle (\r);
        \clip (C) circle (\r);
        \fill[pattern=\nel, pattern color=\greenish] (0,0) rectangle (4,4);
    \end{scope}

    \begin{scope}
        \clip (A) circle (\r);
        \clip (B) circle (\r);
        \clip (C) circle (\r);
        \fill[fill=white, pattern color=\greenish] (0,0) rectangle (4,4);
    \end{scope}

    \draw[line width=1.2pt] (A) circle (\r);
    \draw[line width=1.2pt] (B) circle (\r);
    \draw[line width=1.2pt] (C) circle (\r);

    \node (V_18) at (1.1,-2.5) {\huge $\mathcal{V}_{18}$};
\end{tikzpicture}
\begin{tikzpicture}[scale=0.4, transform shape]
    
    \def\r{1.5}
    \coordinate (A) at (0,0);
    \coordinate (B) at (2,0);
    \coordinate (C) at (1,1.732);

    \begin{scope}
        \clip (A) circle (\r);
        \fill[pattern=\nel, pattern color=\greenish] (A) circle (\r);
    \end{scope}
   
    \begin{scope}
        \clip (B) circle (\r);
        \fill[pattern=\nel, pattern color=\greenish] (B) circle (\r);
    \end{scope}

    \begin{scope}
        \clip (C) circle (\r);
        \fill[pattern=none, pattern color=\greenish] (C) circle (\r);
    \end{scope}

    \begin{scope}
        \clip (A) circle (\r);
        \clip (B) circle (\r);
        \fill[fill=white, pattern color=\greenish] (-2,-2) rectangle (4,4);
    \end{scope}

    \begin{scope}
        \clip (A) circle (\r);
        \clip (C) circle (\r);
        \fill[fill=white, pattern color=\greenish] (-1,-1) rectangle (4,4);
    \end{scope}

    \begin{scope}
        \clip (B) circle (\r);
        \clip (C) circle (\r);
        \fill[fill=white, pattern color=\greenish] (0,0) rectangle (4,4);
    \end{scope}

    \begin{scope}
        \clip (A) circle (\r);
        \clip (B) circle (\r);
        \clip (C) circle (\r);
        \fill[pattern=\nel, pattern color=\greenish] (0,0) rectangle (4,4);
    \end{scope}

    \draw[line width=1.2pt] (A) circle (\r);
    \draw[line width=1.2pt] (B) circle (\r);
    \draw[line width=1.2pt] (C) circle (\r);

    \node (V_19) at (1.1,-2.5) {\huge $\mathcal{V}_{19}$};
\end{tikzpicture} 
\begin{tikzpicture}[scale=0.4, transform shape]
    \def\r{1.5}
    \coordinate (A) at (0,0);
    \coordinate (B) at (2,0);
    \coordinate (C) at (1,1.732);

    \begin{scope}
        \clip (A) circle (\r);
        \fill[pattern=\nel, pattern color=\greenish] (A) circle (\r);
    \end{scope}
   
    \begin{scope}
        \clip (B) circle (\r);
        \fill[pattern=none, pattern color=\greenish] (B) circle (\r);
    \end{scope}

    \begin{scope}
        \clip (C) circle (\r);
        \fill[pattern=none, pattern color=\greenish] (C) circle (\r);
    \end{scope}

    \begin{scope}
        \clip (A) circle (\r);
        \clip (B) circle (\r);
        \fill[pattern=\nel, pattern color=\greenish] (-2,-2) rectangle (4,4);
    \end{scope}

    \begin{scope}
        \clip (A) circle (\r);
        \clip (C) circle (\r);
        \fill[fill=white, pattern color=\greenish] (-1,-1) rectangle (4,4);
    \end{scope}

    \begin{scope}
        \clip (B) circle (\r);
        \clip (C) circle (\r);
        \fill[fill=white, pattern color=\greenish] (0,0) rectangle (4,4);
    \end{scope}

    \begin{scope}
        \clip (A) circle (\r);
        \clip (B) circle (\r);
        \clip (C) circle (\r);
        \fill[pattern=\nel, pattern color=\greenish] (0,0) rectangle (4,4);
    \end{scope}

    \draw[line width=1.2pt] (A) circle (\r);
    \draw[line width=1.2pt] (B) circle (\r);
    \draw[line width=1.2pt] (C) circle (\r);

    \node (V_20) at (1.1,-2.5) {\huge $\mathcal{V}_{20}$};
\end{tikzpicture} \\
~\\
\begin{tikzpicture}[scale=0.4, transform shape]
    \def\r{1.5}
    \coordinate (A) at (0,0);
    \coordinate (B) at (2,0);
    \coordinate (C) at (1,1.732);

    \begin{scope}
        \clip (A) circle (\r);
        \fill[pattern=\nel, pattern color=\greenish] (A) circle (\r);
    \end{scope}
   
    \begin{scope}
        \clip (B) circle (\r);
        \fill[pattern=\nel, pattern color=\greenish] (B) circle (\r);
    \end{scope}

    \begin{scope}
        \clip (C) circle (\r);
        \fill[pattern=none, pattern color=\greenish] (C) circle (\r);
    \end{scope}

    \begin{scope}
        \clip (A) circle (\r);
        \clip (B) circle (\r);
        \fill[pattern=\nel, pattern color=\greenish] (-2,-2) rectangle (4,4);
    \end{scope}

    \begin{scope}
        \clip (A) circle (\r);
        \clip (C) circle (\r);
        \fill[fill=white, pattern color=\greenish] (-1,-1) rectangle (4,4);
    \end{scope}

    \begin{scope}
        \clip (B) circle (\r);
        \clip (C) circle (\r);
        \fill[fill=white, pattern color=\greenish] (0,0) rectangle (4,4);
    \end{scope}

    \begin{scope}
        \clip (A) circle (\r);
        \clip (B) circle (\r);
        \clip (C) circle (\r);
        \fill[pattern=\nel, pattern color=\greenish] (0,0) rectangle (4,4);
    \end{scope}

    \draw[line width=1.2pt] (A) circle (\r);
    \draw[line width=1.2pt] (B) circle (\r);
    \draw[line width=1.2pt] (C) circle (\r);

    \node (V_21) at (1.1,-2.5) {\huge $\mathcal{V}_{21}$};
\end{tikzpicture}
\begin{tikzpicture}[scale=0.4, transform shape]
    \def\r{1.5}
    \coordinate (A) at (0,0);
    \coordinate (B) at (2,0);
    \coordinate (C) at (1,1.732);

    \begin{scope}
        \clip (A) circle (\r);
        \fill[pattern=\nel, pattern color=\greenish] (A) circle (\r);
    \end{scope}
   
    \begin{scope}
        \clip (B) circle (\r);
        \fill[pattern=none, pattern color=\greenish] (B) circle (\r);
    \end{scope}

    \begin{scope}
        \clip (C) circle (\r);
        \fill[pattern=\nel, pattern color=\greenish] (C) circle (\r);
    \end{scope}

    \begin{scope}
        \clip (A) circle (\r);
        \clip (B) circle (\r);
        \fill[pattern=\nel, pattern color=\greenish] (-2,-2) rectangle (4,4);
    \end{scope}

    \begin{scope}
        \clip (A) circle (\r);
        \clip (C) circle (\r);
        \fill[fill=white, pattern color=\greenish] (-1,-1) rectangle (4,4);
    \end{scope}

    \begin{scope}
        \clip (B) circle (\r);
        \clip (C) circle (\r);
        \fill[fill=white, pattern color=\greenish] (0,0) rectangle (4,4);
    \end{scope}

    \begin{scope}
        \clip (A) circle (\r);
        \clip (B) circle (\r);
        \clip (C) circle (\r);
        \fill[pattern=\nel, pattern color=\greenish] (0,0) rectangle (4,4);
    \end{scope}

    \draw[line width=1.2pt] (A) circle (\r);
    \draw[line width=1.2pt] (B) circle (\r);
    \draw[line width=1.2pt] (C) circle (\r);

    \node (V_22) at (1.1,-2.5) {\huge $\mathcal{V}_{22}$};
\end{tikzpicture}
\begin{tikzpicture}[scale=0.4, transform shape]
    \def\r{1.5}
    \coordinate (A) at (0,0);
    \coordinate (B) at (2,0);
    \coordinate (C) at (1,1.732);

    \begin{scope}
        \clip (A) circle (\r);
        \fill[pattern=none, pattern color=\greenish] (A) circle (\r);
    \end{scope}
   
    \begin{scope}
        \clip (B) circle (\r);
        \fill[pattern=none, pattern color=\greenish] (B) circle (\r);
    \end{scope}

    \begin{scope}
        \clip (C) circle (\r);
        \fill[pattern=none, pattern color=\greenish] (C) circle (\r);
    \end{scope}

    \begin{scope}
        \clip (A) circle (\r);
        \clip (B) circle (\r);
        \fill[pattern=\nel, pattern color=\greenish] (-2,-2) rectangle (4,4);
    \end{scope}

    \begin{scope}
        \clip (A) circle (\r);
        \clip (C) circle (\r);
        \fill[fill=white, pattern color=\greenish] (-1,-1) rectangle (4,4);
    \end{scope}

    \begin{scope}
        \clip (B) circle (\r);
        \clip (C) circle (\r);
        \fill[pattern=\nel, pattern color=\greenish] (0,0) rectangle (4,4);
    \end{scope}

    \begin{scope}
        \clip (A) circle (\r);
        \clip (B) circle (\r);
        \clip (C) circle (\r);
        \fill[pattern=\nel, pattern color=\greenish] (0,0) rectangle (4,4);
    \end{scope}

    \draw[line width=1.2pt] (A) circle (\r);
    \draw[line width=1.2pt] (B) circle (\r);
    \draw[line width=1.2pt] (C) circle (\r);

    \node (V_23) at (1.1,-2.5) {\huge $\mathcal{V}_{23}$};
\end{tikzpicture}
\begin{tikzpicture}[scale=0.4, transform shape]
    \def\r{1.5}
    \coordinate (A) at (0,0);
    \coordinate (B) at (2,0);
    \coordinate (C) at (1,1.732);

    \begin{scope}
        \clip (A) circle (\r);
        \fill[pattern=none, pattern color=\greenish] (A) circle (\r);
    \end{scope}
   
    \begin{scope}
        \clip (B) circle (\r);
        \fill[pattern=\nel, pattern color=\greenish] (B) circle (\r);
    \end{scope}

    \begin{scope}
        \clip (C) circle (\r);
        \fill[pattern=none, pattern color=\greenish] (C) circle (\r);
    \end{scope}

    \begin{scope}
        \clip (A) circle (\r);
        \clip (B) circle (\r);
        \fill[pattern=\nel, pattern color=\greenish] (-2,-2) rectangle (4,4);
    \end{scope}

    \begin{scope}
        \clip (A) circle (\r);
        \clip (C) circle (\r);
        \fill[fill=white, pattern color=\greenish] (-1,-1) rectangle (4,4);
    \end{scope}

    \begin{scope}
        \clip (B) circle (\r);
        \clip (C) circle (\r);
        \fill[pattern=\nel, pattern color=\greenish] (0,0) rectangle (4,4);
    \end{scope}

    \begin{scope}
        \clip (A) circle (\r);
        \clip (B) circle (\r);
        \clip (C) circle (\r);
        \fill[pattern=\nel, pattern color=\greenish] (0,0) rectangle (4,4);
    \end{scope}

    \draw[line width=1.2pt] (A) circle (\r);
    \draw[line width=1.2pt] (B) circle (\r);
    \draw[line width=1.2pt] (C) circle (\r);

    \node (V_24) at (1.1,-2.5) {\huge $\mathcal{V}_{24}$};
\end{tikzpicture}
\begin{tikzpicture}[scale=0.4, transform shape]
    \def\r{1.5}
    \coordinate (A) at (0,0);
    \coordinate (B) at (2,0);
    \coordinate (C) at (1,1.732);

    \begin{scope}
        \clip (A) circle (\r);
        \fill[pattern=none, pattern color=\greenish] (A) circle (\r);
    \end{scope}
   
    \begin{scope}
        \clip (B) circle (\r);
        \fill[pattern=none, pattern color=\greenish] (B) circle (\r);
    \end{scope}

    \begin{scope}
        \clip (C) circle (\r);
        \fill[pattern=\nel, pattern color=\greenish] (C) circle (\r);
    \end{scope}

    \begin{scope}
        \clip (A) circle (\r);
        \clip (B) circle (\r);
        \fill[pattern=\nel, pattern color=\greenish] (-2,-2) rectangle (4,4);
    \end{scope}

    \begin{scope}
        \clip (A) circle (\r);
        \clip (C) circle (\r);
        \fill[fill=white, pattern color=\greenish] (-1,-1) rectangle (4,4);
    \end{scope}

    \begin{scope}
        \clip (B) circle (\r);
        \clip (C) circle (\r);
        \fill[pattern=\nel, pattern color=\greenish] (0,0) rectangle (4,4);
    \end{scope}

    \begin{scope}
        \clip (A) circle (\r);
        \clip (B) circle (\r);
        \clip (C) circle (\r);
        \fill[pattern=\nel, pattern color=\greenish] (0,0) rectangle (4,4);
    \end{scope}

    \draw[line width=1.2pt] (A) circle (\r);
    \draw[line width=1.2pt] (B) circle (\r);
    \draw[line width=1.2pt] (C) circle (\r);

    \node (V_25) at (1.1,-2.5) {\huge $\mathcal{V}_{25}$};
\end{tikzpicture}
\begin{tikzpicture}[scale=0.4, transform shape]
    \def\r{1.5}
    \coordinate (A) at (0,0);
    \coordinate (B) at (2,0);
    \coordinate (C) at (1,1.732);

    \begin{scope}
        \clip (A) circle (\r);
        \fill[pattern=none, pattern color=\greenish] (A) circle (\r);
    \end{scope}
   
    \begin{scope}
        \clip (B) circle (\r);
        \fill[pattern=\nel, pattern color=\greenish] (B) circle (\r);
    \end{scope}

    \begin{scope}
        \clip (C) circle (\r);
        \fill[pattern=\nel, pattern color=\greenish] (C) circle (\r);
    \end{scope}

    \begin{scope}
        \clip (A) circle (\r);
        \clip (B) circle (\r);
        \fill[pattern=\nel, pattern color=\greenish] (-2,-2) rectangle (4,4);
    \end{scope}

    \begin{scope}
        \clip (A) circle (\r);
        \clip (C) circle (\r);
        \fill[fill=white, pattern color=\greenish] (-1,-1) rectangle (4,4);
    \end{scope}

    \begin{scope}
        \clip (B) circle (\r);
        \clip (C) circle (\r);
        \fill[pattern=\nel, pattern color=\greenish] (0,0) rectangle (4,4);
    \end{scope}

    \begin{scope}
        \clip (A) circle (\r);
        \clip (B) circle (\r);
        \clip (C) circle (\r);
        \fill[pattern=\nel, pattern color=\greenish] (0,0) rectangle (4,4);
    \end{scope}

    \draw[line width=1.2pt] (A) circle (\r);
    \draw[line width=1.2pt] (B) circle (\r);
    \draw[line width=1.2pt] (C) circle (\r);

    \node (V_26) at (1.1,-2.5) {\huge $\mathcal{V}_{26}$};
\end{tikzpicture}
\\
\end{center}